\theoremstyle{plain}
\newtheorem{theorem}{Theorem}[section]
\newtheorem{corollary}{Corollary}[section]
\newtheorem{proposition}{Proposition}[section]
\newtheorem{lemma}[theorem]{Lemma}
\newtheorem*{example}{Example}
\theoremstyle{definition}
\newtheorem{definition}{Definition}[section]
\newcommand{\tabincell}[2]{\begin{tabular}{@{}#1@{}}#2\end{tabular}}
\title{\bf Parameter-controlled inserting constructions of constant dimension subspace codes}
\author{Huimin Lao, Hao Chen, Jian Weng and Xiaoqing Tan
  \thanks{Huimin Lao, Hao Chen, Jian Weng and Xiaoqing Tan are with the College of Information Science and Technology/Cyber Security, Jinan University, Guangzhou, Guangdong Province, 510632, China, Corresponding Author: Hao Chen, haochen@jnu.edu.cn. The research of Hao Chen was supported by NSFC Grant 11531002. The research of Jian Weng was supported by NSFC Grants 61825203 and Grant U1726203. The research of Xiaoqing Tan was supported by NSFC Grant 61672014, National Cryptography Development Fund of China Grant MMJJ20180109, Natural Science Foundation of Guangdong Province of China Grant 2019A1515011069. This research was supported by the Major Program of Guangdong Basic and Applied Research under Grant 2019B03032008.}}
\begin{document}

\maketitle
\begin{abstract}
A basic problem in constant dimension subspace coding is to determine the maximal possible size ${\bf A}_q(n,d,k)$ of a set of $k$-dimensional subspaces in ${\bf F}_q^n$ such that the subspace distance satisfies $\operatorname{dis}(U,V)=2k-2\dim(U \cap V) \geq d$  for any two different $k$-dimensional subspaces $U$ and $V$ in this set. In this paper we propose new parameter-controlled inserting constructions of constant dimension subspace codes. These inserting constructions are flexible because they are controlled by parameters. Several new better lower bounds which are better than all previously constructive lower bounds can be derived from our flexible inserting constructions. $141$ new constant dimension subspace codes of distances $4,6,8$ better than previously best known codes are constructed.\\
\end{abstract}

\section{Introduction and preliminaries}

Subspace coding including constant dimension codes and mixed dimension codes has been studied extensively since the paper \cite{KK} of R. K\"{o}tter and F. R. Kschischang. The set $Grass(k,n)_q$ of all $k$-dimensional subspaces in ${\bf F}_q^n$ has $\displaystyle{n \choose k}_q=\prod_{i=0}^{k-1} \frac{q^{n-i}-1}{q^{k-i}-1}$ elements. This is the $q$-ary Gauss coefficient. The subspace distance on $Grass(k,n)_q$ can be defined by $$\operatorname{dis}(U,V)=2k-2\dim(U \cap V).$$ A set ${\bf C}$ of $M$ subspaces in $Grass(k,n)_q$, is called a $(n, M, d, k)_q$ (constant dimension $k$) subspace code if  $\operatorname{dis}(U,V) \geq d$ is satisfied for any two different subspaces $U$ and $V$ in ${\bf C}$. Sometimes we use $(n,*,d,k)_q$ to denote an CDC without counting the cardinality.\\

One main problem for the constant dimension subspace coding is to determine the maximal possible size ${\bf A}_q(n, d, k)$ of such a code for given parameters $n,d,k,q$. We refer to papers \cite{EtzionVardy,Silberstein1,Silberstein2,Silberstein3,Gluesing,Heinlein2,XuChen,CHWX,Heinlein1,LCF,Li,CKMP,Kurz} and the nice webpage \cite{table} for latest constructions and references. The presently known best constant dimension subspace codes for $n \leq 19, q\leq 9$ are listed in the table in the webpage \cite{table}. Many presently best records of constant dimension subspace codes are from the Cossidente-Kurz-Marino-Pavese combining construction in \cite{CKMP}. Though there are various good constructions \cite{XuChen,CHWX,Heinlein1,LCF,Li,CKMP,Kurz} since 2018, it seems that for many small parameter cases there are still big gaps between the presently known best upper bounds and lower bounds. This observation leads us to believe that there are some places in the present constructions which are needed to be filled with some new $k$-dimensional subspaces while the subspace distance can be preserved. In this paper it is showed that this idea works for the very effective Cossidente-Kurz-Marino-Pavese combining subspace code construction in \cite{CKMP}.  An inserting technique had been developed in Lemma 4.4 and Corollary 4.5 in \cite{CKMP} by which new $k$-dimensional subspaces can be added to the subspace codes while subspace distances can be preserved. In this paper we propose two parameter-controlled inserting constructions of constant dimension subspace codes. These two inserting constructions are the direct inserting and the multilevel type inserting, which are controlled by parameters. They are flexible by choosing different parameters to insert different positioned new $k$-dimensional subspaces in previous best known subspace codes.\\

\subsection{Rank metric codes and the Delsarte Theorem}

The rank metric on the space ${\bf M}_{a \times b}({\bf F}_q)$ of size $a \times b$ matrices over ${\bf F}_q$ is defined by the rank of matrices. The distance $d_r(A,B)$ is $\operatorname{rank}(A-B)$. The minimum rank-distance of a code ${\bf M} \subset {\bf M}_{a \times b}({\bf F}_q)$ is defined as $$d_r({\bf M})=\min_{A\neq B} \{d_r(A,B), A \in {\bf M}, B\in {\bf M} \}.$$  A rank metric code is linear if it is a linear subspace in the matrix space. For a code ${\bf M}$ in ${\bf M}_{a \times b}({\bf F}_q)$ with the minimum rank distance $d_r({\bf M}) \geq d$, it is well-known that the number of codewords in ${\bf M}$ is upper bounded by $q^{\max\{a,b\}(\min\{a,b\}-d+1)}$, we refer to \cite{Delsarte,Gabidulin,Cruz}. A rank metric code attaining this bound is called a maximum rank-distance (MRD) code. The MRD code ${\bf Q}_{q,n,t}$ consists of ${\bf F}_q$ linear mappings on ${\bf F}_q^n \cong {\bf F}_{q^n}$ defined by $q$-polynomials $a_0x+a_1x^q+\cdots+a_ix^{q^i}+\cdots+a_tx^{q^t}$, where $a_t,\ldots,a_0 \in {\bf F}_{q^n}$ are arbitrary elements in ${\bf F}_{q^n}$.  The distance of ${\bf Q}_{q,n,t}$ is $n-t$ since there are at most $q^t$ roots in ${\bf F}_{q^n}$ for each such $q$-polynomial. There are  $q^{n(t+1)}$ such $q$-polynomials in ${\bf Q}_{q,n, t}$. This kind of MRD codes has been used widely in previous constructions of constant dimension subspace codes. We refer to \cite{Gabidulin,Silberstein1,Silberstein2,RST}.\\

Let $a$ and $b$ be two positive integers. The rank distribution of a rank-metric code ${\bf M}$ in ${\bf M}_{a \times b}({\bf F}_q)$ is defined by $r(q, a, b,d, u)({\bf M})=|\{M \in {\bf M}, \operatorname{rank}(M)=u\}|$ for $u \in {\bf Z}^{+}$. The rank distribution $r(q,a,b,d,u)$ of an MRD code can be determined from its parameters. We refer the following result to Theorem 5.6 in \cite{Delsarte} or Corollary 26 in \cite{Cruz}. The Delsarte Theorem is used to count the number of codewords in a subspace code.\\

\begin{theorem}[\textbf{Delsarte 1978}]
  \label{Delsarte}
  Assume that ${\bf M} \subset {\bf M}_{a \times b}({\bf F}_q)$ is an MRD code with rank distance $d$ for $d \leq u \leq \min(a,b)$, then its rank distribution is given by $$r(q, a, b, d, u)({\bf M})=\displaystyle{\min\{a,b\} \choose u}_q \Sigma_{s=0}^{u-d} (-1)^s q^{\displaystyle{s \choose 2}} \displaystyle{u \choose s}_q (q^{\max\{a,b\}(u-s-d+1)}-1).$$
\end{theorem}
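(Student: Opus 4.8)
The plan is to set $m=\min\{a,b\}$ and $M=\max\{a,b\}$, assume after a transpose (which preserves rank) that $a=m\le b=M$, and take $\mathbf{M}$ to be $\mathbf{F}_q$-linear of dimension $M(m-d+1)$. Writing $A_u$ for the number of codewords of rank exactly $u$, I already know $A_0=1$, $A_u=0$ for $0<u<d$, and $\sum_u A_u=q^{M(m-d+1)}$; the goal is to pin down $A_d,\dots,A_m$. I would view each codeword $A$ as an $\mathbf{F}_q$-linear map $\mathbf{F}_q^M\to\mathbf{F}_q^m$, so that $\operatorname{rank}(A)$ is the dimension of its column space $\operatorname{im}A\subseteq\mathbf{F}_q^m$ (the short side, which is where the counting behaves uniformly).

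The key input is a short-side counting lemma. For an $\mathbf{F}_q$-subspace $V\subseteq\mathbf{F}_q^m$ with $\dim V=v$, set $g(v)=\#\{A\in\mathbf{M}:\operatorname{im}A\subseteq V\}$. I claim $g(v)=q^{M\max\{0,\,v-d+1\}}$, independent of the choice of $V$. Indeed $\{A:\operatorname{im}A\subseteq V\}$ is the kernel of the linear map $\sigma_V\colon\mathbf{M}\to \mathbf{M}_{(m-v)\times M}(\mathbf{F}_q)$ sending $A$ to its composition with the quotient $\mathbf{F}_q^m\to\mathbf{F}_q^m/V$, so $g(v)=q^{M(m-d+1)-\dim\operatorname{im}\sigma_V}$, and the claim is equivalent to $\sigma_V$ being surjective whenever $v\ge d-1$. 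Surjectivity I would deduce from duality: if some nonzero $B$ were orthogonal under the trace form to $\operatorname{im}\sigma_V$, its lift would be a nonzero codeword of the dual code $\mathbf{M}^{\perp}$ of rank at most $m-v$, contradicting that $\mathbf{M}^{\perp}$ is again MRD and hence has minimum rank $m-d+2>m-v$. This is where the maximality hypothesis is genuinely used, and it is the crux of the whole argument.

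With the lemma in hand I would double-count the pairs $(A,V)$ with $\operatorname{im}A\subseteq V$ and $\dim V=v$: summing over $V$ first gives $\binom{m}{v}_q g(v)$, while summing over $A$ first and using that a fixed $u$-dimensional column space lies in exactly $\binom{m-u}{v-u}_q$ of the $v$-spaces gives $\binom{m}{v}_q\,g(v)=\sum_{u}A_u\binom{m-u}{v-u}_q$, valid for every $v$. Regarding this as a triangular linear system in the unknowns $A_u$, I would invert it by $q$-binomial M\"obius inversion over the subspace lattice, which is cleanest after the complementary substitution $u\mapsto m-u$, $v\mapsto m-v$ that turns the kernel $\binom{m-u}{v-u}_q$ into a standard Gaussian binomial to which the inversion formula applies. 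This produces $A_u=\sum_{0\le s\le u}(-1)^s q^{\binom{s}{2}}\binom{m-u+s}{s}_q\binom{m}{u-s}_q\,q^{M\max\{0,\,u-s-d+1\}}$.

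Finally I would pass to the stated closed form using two elementary $q$-identities: the absorption identity $\binom{m}{u-s}_q\binom{m-u+s}{s}_q=\binom{m}{u}_q\binom{u}{s}_q$, which extracts the prefactor $\binom{m}{u}_q$, and the vanishing relation $\sum_{s=0}^{u}(-1)^s q^{\binom{s}{2}}\binom{u}{s}_q=0$ for $u\ge1$ (a specialization of the $q$-binomial theorem, coming from the factor $1-q^0$). The latter lets me fold the plateau range $u-s\le d-1$, where $g=1$, into the rest, converting each power $q^{M(u-s-d+1)}$ into the difference $q^{M(u-s-d+1)}-1$ and truncating the sum at $s=u-d$, which is exactly the expression in the statement. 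I expect the one genuinely hard point to be the Key Lemma, i.e.\ the surjectivity of $\sigma_V$ (equivalently, that short-side shortening preserves the MRD property); everything after that is the $q$-binomial bookkeeping above. For a possibly non-linear MRD code one instead invokes Delsarte's association-scheme framework, but the same rank distribution results.
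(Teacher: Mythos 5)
This theorem is not proved in the paper at all: it is imported verbatim from the literature (the text explicitly defers to Theorem 5.6 of Delsarte's paper and Corollary 26 of the de la Cruz--Gorla--L\'opez--Ravagnani reference), so there is no in-paper proof to compare against. Judged on its own, your argument is correct for \emph{linear} MRD codes and is essentially the modern standard derivation, which is also the route taken in the cited reference: the short-side counting lemma $g(v)=q^{M\max\{0,v-d+1\}}$, the double count $\binom{m}{v}_q g(v)=\sum_u A_u\binom{m-u}{v-u}_q$, M\"obius inversion on the subspace lattice, and the two $q$-identities all check out, and the final folding of the plateau via $\sum_{s=0}^{u}(-1)^sq^{\binom{s}{2}}\binom{u}{s}_q=0$ reproduces the stated formula exactly. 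Two caveats are worth making explicit. First, your crux step leans on the theorem that the dual of a linear MRD code is again MRD with minimum distance $m-d+2$; that is a genuine theorem of Delsarte/Gabidulin which you must establish by an independent argument (not via the rank distribution itself, on pain of circularity), so your proof is complete only modulo that input. Second, the statement as given does not assume linearity, and your closing remark correctly concedes that the general case requires Delsarte's association-scheme machinery; for the purposes of this paper that is harmless, since every MRD code actually used here (Gabidulin-type ${\bf Q}_{q,n,t}$) is ${\bf F}_q$-linear.
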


\subsection{Lifting rank metric codes}

For any given rank metric code ${\bf M}$ in ${\bf M}_{k \times n}({\bf F}_q)$ with the rank distance $d$ and cardinality $m(q, k, n, d)$, we have an $(n+k, m(q,k,n,d), 2d, k)_q$ CDC consisting of $m(q,k,n,d)$ subspaces of dimension $k$  in ${\bf F}_q^{n+k}$ spanned  by the rows of $(I_k, A)$, where $A$ is an element in ${\bf M}$. Here $I_k$ is the $k \times k$ identity matrix. It is clear that for  $A$ and $B$, the subspaces $U_A$ and $U_B$ spanned by rows of $(I_k,A)$ and $(I_k,B)$ respectively are the same if and only if $A=B$. The intersection $U_A \cap U_B$ is the set $\{ (\alpha,\alpha A)=(\beta, \beta B): \alpha (A-B)=0, \alpha \in {\bf F}_q^k\}$. Thus $\dim(U_A \cap U_B)  \leq k-d$. The distance of this CDC is $2d$. An CDC  constructed as above is called a lifted code. When ${\bf M}$ is an MRD code, this is the lifted MRD code. The Delsarte theorem can help to count the cardinality of an CDC when lifting rank-restricted rank metric codes are used to construct this CDC.\\

\subsection{Constructions based on Ferrers diagram}


Let $U$ be a $k$-dimensional subspace in ${\bf F}_q^n$ with a $k \times n$ generator matrix ${\bf U}$.
By applying Gaussian elimination on this generator matrix there exists exactly one matrix in reduced row echelon form $\xi(U)$.
The identifying vector $i(U) \in {\bf F}_2^n$ has only non-zero positions at pivots of $\xi(U)$.
The \textit{Ferrers tableaux form} $\mathcal{F(U)}$ is obtained from $\xi(U)$ by following steps.\\
\begin{itemize}
    \item Removing each rows of $\xi(U)$ from zeros to the left of the pivot of $\xi(U)$.
    \item Remove the pivot columns of $\xi(U)$.
    \item Shifting all the remaining entries to the right.
\end{itemize}
The \textit{Ferrers diagram} of $U$, denoted by $\mathcal{F}_U$,
is defined by replacing all the entries in $\mathcal{F(U)}$ by dots.

\begin{example}
    For $U$ is a  $3$-dimensional subspace in $\mathbf{F}_q^7$,
        $$
        \xi(U) =
        \begin{pmatrix}
            1 & 1 & 0 & 0 & 1 & 1 & 1\\
            0 & 0 & 1 & 0 & 1 & 0 & 1\\
            0 & 0 & 0 & 1 & 1 & 1 & 1\\
        \end{pmatrix}.
        $$
    Then $i(U)=(1,0,1,1,0,0,0)$ and

        \begin{gather*}
            \mathcal{F(U)} = \begin{pmatrix}
                1 & 1 & 1 & 1 \\
                  & 1 & 0 & 1 \\
                  & 1 & 1 & 1 \\
               \end{pmatrix}, \enspace
               \mathcal{F}_U = \begin{pmatrix}
                \bullet  & \bullet  & \bullet  & \bullet  \\
                  & \bullet  & \bullet  & \bullet  \\
                  & \bullet  & \bullet  & \bullet  \\
               \end{pmatrix}
        \end{gather*}
\end{example}

Given an Ferrers diagram $\mathcal{F}$ with $m$ dots in the rightmost
column and $l$ dots in the top row, a rank metric code in $\mathbf{F}_q^{m \times l}$ is a \textit{Ferrers diagram rank-metric code}(FDRM)$\left[\mathcal{F}, \gamma, d_f\right]$
if all the entries of its codewords not in $\mathcal{F}$ are zero, the minimal distance is $d_f$ and the dimension of the code is $\gamma$.
The number of codewords in this code is upper bounded by $q^{\min_i \{ w_i \}}$, where $w_i$ is the number of dots in $\mathcal{F}$ which is not contained in the first $i$ rows and the rightmost $d_f - 1 - i$ columns
for $0 \leq i \leq d_f - 1$. The FDRM code achieved the upper bound is called \textit{Ferrers diagram maximal rank-metric code}(FDMRD).\\

For another $k$-dimensional subspace $W$ in $\mathbf{F}_q^n$, if $i(U) = i(W)$, then we have $\mathcal{F}_U = \mathcal{F}_W$.
Given an identifying vector $v$ of length $n$ and weight $k$, the corresponding $k$-dimensional subspace $U$ in $\mathbf{F}_q^n$
can be constructed by lifting the FDRM code with Ferrers diagram $\mathcal{F}_U$.
The following result is the multilevel construction based on FDRM codes in \cite{Silberstein1}.\\


\begin{theorem}
    To construct an $(n,\psi,2d_f,k)_q$ CDC by multilevel construction, we follow these steps:
    \begin{enumerate}[(1)]
        \item Choose a binary code $\mathcal{B}$ in $\mathbf{F}_2^n$ satisfying each codeword with constant weight $k$, and the Hamming distance of $\mathcal{B}$ is $2d_f$.
        The cardinality of $\mathcal{B}$ is denoted by $s$.
        \item Construct the FDRM code $\left[\mathcal{F}_j, \gamma, d_f\right]$ for $1 \leq j \leq s$, where $\mathcal{F}_j$ corresponds to each codeword in $\mathcal{B}$.
        \item Construct $(n,\psi_j,2d_f,k)_q $ CDC by lifting the FDRM code $\left[\mathcal{F}_j, \gamma, d_f\right]$, for $1 \leq j \leq s$.
    \end{enumerate}
\end{theorem}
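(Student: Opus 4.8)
The plan is to verify that the union $\mathbf{C} = \bigcup_{j=1}^{s} \mathbf{C}_j$, where $\mathbf{C}_j$ denotes the lifted code obtained in step $(3)$ from the FDRM code $[\mathcal{F}_j, \gamma, d_f]$, is an $(n, \psi, 2d_f, k)_q$ CDC with $\psi = \sum_{j=1}^{s} \psi_j$. Three things need to be checked: that every member of $\mathbf{C}$ is a $k$-dimensional subspace, that the members are pairwise distinct so that $\psi$ counts them correctly, and that any two distinct members are at subspace distance at least $2d_f$. The first point is immediate from the construction: each codeword of $\mathcal{B}$ has Hamming weight $k$, so the associated reduced row echelon form has exactly $k$ pivots and $k$ rows, and lifting the FDRM code fills only the free positions prescribed by the Ferrers diagram, leaving a $k \times n$ matrix of full row rank.

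For the cardinality, I would argue that within a single level $j$ the lifting map $A \mapsto U_A$ is injective, as recorded in the subsection on lifting rank metric codes, so $\mathbf{C}_j$ has exactly $\psi_j$ members; and two subspaces coming from different levels $i \neq j$ have distinct identifying vectors $i(U) \neq i(W)$ and hence cannot coincide. Thus $|\mathbf{C}| = \sum_{j=1}^{s}\psi_j = \psi$. The distance analysis then splits into two cases. If $U, W$ lie in the same level $\mathbf{C}_j$, they are lifts of two matrices in a common rank-metric code of rank distance $d_f$; the argument in the lifting subsection yields $\dim(U \cap W) \leq k - d_f$ and therefore $\operatorname{dis}(U, W) \geq 2d_f$. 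If instead $U$ and $W$ lie in different levels, so that $v = i(U)$ and $w = i(W)$ are distinct weight-$k$ codewords of $\mathcal{B}$, the key step is the inequality
$$\operatorname{dis}(U, W) \geq d_H(i(U), i(W)),$$
after which the constant-weight-$k$ relation $|\operatorname{supp}(v) \cap \operatorname{supp}(w)| = k - \tfrac{1}{2} d_H(v, w)$ together with $d_H(v,w) \geq 2d_f$ closes the case.

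I expect this last inequality to be the main obstacle, and I would prove it through the intermediate claim that for any subspace $T$ the support of its identifying vector coincides with the set of leading coordinate positions of the nonzero vectors of $T$, which follows because a combination of reduced row echelon rows has leading position equal to the pivot of its lowest-indexed nonzero coefficient. Granting this, if $T = U \cap W$ then every nonzero vector of $T$ lies in both $U$ and $W$, so its leading position is a pivot of both; hence $\operatorname{supp}(i(T)) \subseteq \operatorname{supp}(i(U)) \cap \operatorname{supp}(i(W))$, which gives $\dim(U \cap W) = |\operatorname{supp}(i(T))| \leq |\operatorname{supp}(v) \cap \operatorname{supp}(w)|$. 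Substituting into $\operatorname{dis}(U, W) = 2k - 2\dim(U \cap W)$ and using the constant-weight relation yields $\operatorname{dis}(U, W) \geq d_H(v, w) \geq 2d_f$. Combining the two cases shows the minimum subspace distance of $\mathbf{C}$ is at least $2d_f$, completing the verification.
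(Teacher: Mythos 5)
Your proof is correct and follows essentially the same route as the paper, which disposes of this theorem in one line by appealing to the rank-distance of the lifted FDRM code within each level and to Lemma~\ref{lb:hamming lower bound} across levels. The only difference is that you also supply a proof of Lemma~\ref{lb:hamming lower bound} itself (via the identification of $\operatorname{supp}(i(T))$ with the set of leading positions of nonzero vectors of $T$), which the paper states without proof and imports from the multilevel-construction reference.
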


The union of CDC codes in (3) is the desired $(n,\psi,2d_f,k)_q$ CDC from the following Lemma \ref{lb:hamming lower bound}.
It is useful to calculate the subspace distance between two $k$-dimensional subspaces in ${\bf F}_q^n$.\\

\begin{lemma}
  \label{lb:hamming lower bound}
   Let $U$ and $U^{\prime}$ be two $k$-dimensional subspaces of $\mathbf{F}_q^{n}$. Then $\operatorname{dis}(U,U^{\prime}) \geq d_h(i(U),i(U^{\prime}))$. Here $d_h$ is the Hamming distance.
\end{lemma}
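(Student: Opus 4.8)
The plan is to translate both quantities into combinatorial data about the pivot sets (the supports of the identifying vectors) and then reduce everything to a single dimension estimate. Write $S=\operatorname{supp}(i(U))$ and $S'=\operatorname{supp}(i(U'))$, so that $|S|=|S'|=k$. Since both identifying vectors are binary of weight $k$, if they share $c:=|S\cap S'|$ common ones then each differs from the other in exactly $k-c$ of its own one-positions, hence $d_h(i(U),i(U'))=2(k-c)$. On the other hand $\operatorname{dis}(U,U')=2k-2\dim(U\cap U')$ by definition. Therefore the asserted inequality is equivalent to the single dimension estimate $\dim(U\cap U')\le c=|S\cap S'|$, and this is what I would prove.

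For a nonzero vector $v\in\mathbf{F}_q^n$ let $\operatorname{lead}(v)$ denote the index of its first nonzero coordinate. The key lemma I would establish first is that every nonzero $v\in U$ satisfies $\operatorname{lead}(v)\in S$; that is, leading coordinates of vectors of $U$ can only occur at pivot columns of $\xi(U)$. To see this, write the reduced row echelon form $\xi(U)$ with rows $r_1,\dots,r_k$ and pivots $p_1<\cdots<p_k$ (so $S=\{p_1,\dots,p_k\}$), expand $v=\sum_j a_j r_j$, and let $j_0$ be the least index with $a_{j_0}\ne 0$. Each $r_j$ with $j\ge j_0$ vanishes on all positions strictly before $p_j\ge p_{j_0}$, so $v$ vanishes before $p_{j_0}$; and because every pivot column of an RREF has a single $1$ in its pivot row and zeros elsewhere, the $p_{j_0}$-th coordinate of $v$ equals $a_{j_0}\ne 0$. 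Hence $\operatorname{lead}(v)=p_{j_0}\in S$, and the same argument applies to $U'$ with $S'$.

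With this in hand the remainder is short. Put $W=U\cap U'$ and $m=\dim W$. Applying the lemma to both $U$ and $U'$, every nonzero $w\in W$ has $\operatorname{lead}(w)\in S\cap S'$. Running Gaussian elimination on any basis of $W$ produces a basis $w_1,\dots,w_m$ in echelon form with strictly increasing leading positions $\operatorname{lead}(w_1)<\cdots<\operatorname{lead}(w_m)$, all lying in $S\cap S'$. Thus these are $m$ distinct elements of $S\cap S'$, whence $m\le|S\cap S'|=c$, i.e. $\dim(U\cap U')\le c$, and therefore $\operatorname{dis}(U,U')=2(k-m)\ge 2(k-c)=d_h(i(U),i(U'))$.

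The main obstacle is precisely the leading-coordinate lemma of the second paragraph: once one knows that leading positions of $U$ are confined to its pivot set the inequality is forced, but this is the structural fact that must be extracted carefully from the RREF, using both that $r_j$ vanishes to the left of its own pivot and that the pivot columns are standard basis vectors. Everything after that — the weight and Hamming-distance bookkeeping of the first paragraph, and the echelon argument that vectors with distinct leading positions are linearly independent — is routine, and the translation back to $\operatorname{dis}$ is immediate.
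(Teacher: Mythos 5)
Your proof is correct and complete: the reduction to $\dim(U\cap U')\le|S\cap S'|$, the leading-coordinate lemma extracted from the echelon form, and the final counting of distinct leading positions in $S\cap S'$ all hold. The paper itself states this lemma without proof (it is imported from the Etzion--Silberstein multilevel-construction paper), and your argument is essentially the standard one, phrased on the intersection side rather than via the equivalent bound $\dim(U+U')\ge|S\cup S'|$.
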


It was proved in \cite{Silberstein3} by graph matching on Ferrers diagrams that when $q^2+q+1 \geq n-\frac{k^2+k-6}{2}$ and in some other cases (see \cite{Silberstein3})
\begin{displaymath}
{\bf A}_q(n,2(k-1), k)\geq q^{2(n-k)}+\Sigma_{j=3}^{k-1} q^{2(n-\Sigma_{i=j}^k i)}+\displaystyle{n-\frac{k^2+k-6}{2} \choose 2}_q.
\end{displaymath}

For other lower bounds from multilevel construction based on Ferrers diagrams, we refer to \cite{Silberstein1,Silberstein2,Silberstein3}.\\

\subsection{Examples of new parameter-controlled inserting constructions}

The linkage construction \cite{Gluesing} is an useful construction. It was generalized in \cite{Heinlein} and used to give many good lower bounds for constant dimension subspace codes with small parameters.\enspace Then this construction has been extended in several ways in \cite{CHWX,Li,Heinlein1,CKMP}.\enspace The latest new lower bounds in \cite{CKMP} have been the best lower bounds for many small parameter cases in \cite{table}. When $n=15, d=4, k=5$, the known best lower bound before our parameter-controlled inserting constructions is ${\bf A}_2(15, 4,5) \geq 1252448586816$ from Corollary 4.5 in \cite{CKMP}. By inserting a $(15,1363968,4,5)_2$ subspace codes into the code with $1252447538240$ codewords constructed in Lemma 4.1 in \cite{CKMP}, we get a better new lower bound ${\bf A}_2(15,4,5) \geq 1252447538240+1363968=1252448902208$.\\

However in some small parameter cases the lower bound in \cite{CKMP} is worse than some previous constructions. For example in the case $n=12, d=4, k=6$ the presently known best lower bound ${\bf A}_2(12,4,6) \geq 1212491081$ is from Theorem 3.8 in \cite{CP2017}. The lower bound from Corollary 4.5 in \cite{CKMP}  is $1212451264$. Our construction is an {\em inserting } of a $(12, 2154496, 4,6)_2$ constant subspace code, which is constructed in Theorem \ref{ct:multi-blocks} below, into the code with $1212418496$ codewords constructed in Lemma 4.1 in \cite{CKMP}. Then new lower bound is ${\bf A}_2(12,4,6) \geq 1212418496+2154496=1214572992$.\\

\subsection{Notation}
\tabulinesep = 1mm
\addtocounter{table}{-1}
\begin{longtabu}{|X[1,c] |X[2,l]|}%
    \hline
    \textbf{Symbol}&\textbf{Representation}\\ \hline  \endfirsthead
    \multicolumn{2}{r}{continued table} \\ \hline
    \textbf{Symbol} & \textbf{Representation} \\ \hline \endhead
    $\xi(U)$  & The reduced row echelon form of a generator matrix of subspace U.  \\ \hline
    $ \mathbf{A}_q{(n,d,k)} $  & The best lower bound of constant $k$-dimensional subspace code in $\mathbf{F}_q^{n}$ and the minimal distance of the code is d.  \\ \hline
    $\mathbf{O}_{m \times n}$  & The zero matrix of size $\mathrm{m} \times \mathrm{n}.$ \\ \hline
    $\mathbf{I}_{k}$ & The identity matrix of size $\mathrm{k} \times \mathrm{k}.$ \\ \hline
    $m(q, a, b, d)$ & The cardinality of maximal rank distance code with parameter $(q, a, b, d).$  \\ \hline
    $r(q, a, b, d, u)$ &  The cardinality of rank metric code with parameter $(q, a, b, d)$,  the rank of elements in the code are $u$.  \\ \hline
    $m(q, a, b, d, u)$ &  The cardinality of rank metric code with parameter $(q, a, b, d)$,  the rank of elements in the code are at most $u$.
    $m(q,a,b,d,u) = 1+\sum_{i=d}^{u} r(q, a, b, d, i).$ \\ \hline
    $R(M)$ &  The $k$-dimensional subspace in $\mathbf{F}_q^n$ spanned by the rows of the matrix $M \in \mathbf{F}_q^{k \times n}$. \\ \hline
    $d_h(v_1,v_2)$ & The Hamming distance of identifying vector $v_1$ and $v_2$. \\ \hline
    $\operatorname{dis}(U, V)$ & The distance between the subspaces $\mathrm{U}$ and $\mathrm{V}$. \\ \hline
    $\operatorname{d_S}(D_1, D_2)$ & The minimal distance between two constant dimension subspace codes $D_1$ and $D_2$. \\ \hline
    $\operatorname{d_R}(\mathcal{M})$ & The minimal distance of rank metric code $\mathcal{M}$. \\ \hline
    $i(U)$ & The identifying vector of the reduced row echelon form of a generator matrix of subspace $U$. \\ \hline
    $\# D$ & The cardinality of $D$. \\ \hline
\end{longtabu}

\section{Direct inserting construction}

\subsection{CKMP combining construction of two blocks}

The following result of linkage using rank metric code with restricted rank in \cite{CKMP} is a generalization of the parallel linkage in Theorem 4 in \cite{CHWX}. We refer to \cite{Heinlein1} for the so-called generalized linkage construction.\\

\begin{theorem}[\cite{CKMP} Lemma 4.1]
    \label{linkage construction}
        Let $n, n_1,n_2, k$ be four positive integers satisfying $n_1\geq k$, $n_2 \geq k$ and $n_1+n_2=n$. Let $C_i$ be an $(n_i,*,d,k)_q$ CDC,
        and $\mathcal{M}_i$ be an $(k, n_i, \frac{d}{2})_q$ rank metric code for $ 1\leq  i \leq 2 $. Then $C=C^1 \cup  C^2$ is an $(n,*,d,k)_q$ CDC, where
        \begin{equation*}
            \begin{aligned}
                C^1 & =\{R\left(\xi \left( U_{1} \right) | \mathbf{M}_{2} \right) : U_1 \in C_1, \mathbf{M}_{2} \in \mathcal{M}_2 \}, \\
                C^2 & =\{R\left(\mathbf{M}_{1} |\xi \left( U_{2} \right)\right) : U_2 \in C_2, \mathbf{M}_{1} \in \mathcal{M}_1,  rank\left( \mathbf{M}_{1} \right) \leq k - \frac{d}{2} \}.
            \end{aligned}
        \end{equation*}
    In particular,
    \begin{align*}
        \mathbf{A}_{q}(n, d, k) & \geq \mathbf{A}_{q}\left(n_{1}, d , k\right) \cdot m\left(q, k, n_{2}, \frac{d}{2}\right)  \\
                & + (1+\sum_{u=\frac{d}{2}}^{k-\frac{d}{2}} r(q, k, n_{1}, \frac{d}{2}, u)) \cdot \mathbf{A}_{q}\left(n_{2}, d , k\right)\\
    \end{align*}
\end{theorem}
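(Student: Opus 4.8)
The plan is to verify the two assertions of the theorem in turn: that $C=C^1\cup C^2$ has minimum subspace distance at least $d$, and that its size is the stated sum, the latter obtained by taking $C_1,C_2$ of optimal sizes and $\mathcal{M}_1,\mathcal{M}_2$ to be MRD codes. Since $\operatorname{dis}(U,V)\geq d$ is equivalent to $\dim(U\cap V)\leq k-\frac{d}{2}$, the whole distance argument reduces to bounding intersection dimensions. The recurring tool is elementary: if a $k\times n$ generator matrix contains a full-row-rank $k\times n_1$ (resp. $k\times n_2$) block, then the coordinate projection onto those $n_1$ (resp. $n_2$) positions is injective on the $k$-dimensional row space, so an intersection estimate can be pushed down to the short factor $\mathbf{F}_q^{n_1}$ or $\mathbf{F}_q^{n_2}$. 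I would first record this remark, noting that $\xi(U_1)$ and $\xi(U_2)$ both have full row rank $k$ because $U_1,U_2$ are $k$-dimensional.

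For two codewords from the same block I would split on whether the ``base'' subspaces agree. If $U=R(\xi(U_1)\,|\,\mathbf{M}_2)$ and $V=R(\xi(U_1)\,|\,\mathbf{M}_2')$ share the left block, the full rank of $\xi(U_1)$ forces the two coefficient vectors of any common vector to coincide, so $U\cap V$ is carried bijectively onto the left kernel of $\mathbf{M}_2-\mathbf{M}_2'$; as $\mathcal{M}_2$ has rank distance $\frac{d}{2}$, this kernel has dimension at most $k-\frac{d}{2}$. If instead the left bases differ, projecting onto the first $n_1$ coordinates embeds $U\cap V$ into $U_1\cap U_1'$, and $\dim(U_1\cap U_1')\leq k-\frac{d}{2}$ comes from $C_1$ having distance $d$. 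The case of two codewords in $C^2$ is completely symmetric, now using the full-rank right blocks $\xi(U_2)$ and the rank distance of $\mathcal{M}_1$.

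The main obstacle is the mixed case $U=R(\xi(U_1)\,|\,\mathbf{M}_2)\in C^1$ versus $V=R(\mathbf{M}_1\,|\,\xi(U_2))\in C^2$, and it is precisely here that the rank restriction $\operatorname{rank}(\mathbf{M}_1)\leq k-\frac{d}{2}$ does its work. Because the left block of $U$ has full rank, projection onto the first $n_1$ coordinates is injective on $U$, hence on $U\cap V$. On the other hand every vector of $V$ has its first $n_1$ coordinates lying in the row space $R(\mathbf{M}_1)$, whose dimension is exactly $\operatorname{rank}(\mathbf{M}_1)$. Thus the injective image of $U\cap V$ is contained in $R(\mathbf{M}_1)$, giving $\dim(U\cap V)\leq\operatorname{rank}(\mathbf{M}_1)\leq k-\frac{d}{2}$. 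No rank-distance hypothesis is needed in this case, only the rank cap on $\mathbf{M}_1$; this is the one place the asymmetry between $C^1$ and $C^2$ is essential.

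Finally I would count. In $C^1$ the matrix $(\xi(U_1)\,|\,\mathbf{M}_2)$ is itself in reduced row echelon form, since all $k$ pivots lie in the left block and the appended columns are non-pivot; by uniqueness of $\xi$, distinct pairs $(U_1,\mathbf{M}_2)$ give distinct subspaces, so $\#C^1=\#C_1\cdot\#\mathcal{M}_2$. In $C^2$ the generator is generally not in echelon form, so I instead recover $U_2$ by projecting onto the last $n_2$ coordinates, and then use that equality of two row spaces sharing the full-rank right block $\xi(U_2)$ forces the change-of-basis matrix to be the identity, hence $\mathbf{M}_1=\mathbf{M}_1'$; therefore $\#C^2=\#C_2\cdot\#\{\mathbf{M}_1\in\mathcal{M}_1:\operatorname{rank}(\mathbf{M}_1)\leq k-\frac{d}{2}\}$, and this rank count equals $1+\sum_{u=d/2}^{k-d/2} r(q,k,n_1,\frac{d}{2},u)$ by the Delsarte distribution (Theorem \ref{Delsarte}), the $1$ being the zero matrix. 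The two blocks are disjoint because a codeword of $C^1$ projects onto a $k$-dimensional subspace of the first $n_1$ coordinates while a codeword of $C^2$ projects onto one of dimension $\operatorname{rank}(\mathbf{M}_1)<k$. Adding the two counts with $\#C_i=\mathbf{A}_q(n_i,d,k)$ and $\#\mathcal{M}_2=m(q,k,n_2,\frac{d}{2})$ yields the displayed inequality.
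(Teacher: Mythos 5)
Your proof is correct and complete: the three-way case split (two codewords of $C^1$, two of $C^2$, and the mixed case handled by the rank cap $\operatorname{rank}(\mathbf{M}_1)\leq k-\frac{d}{2}$ together with injectivity of the projection onto the first $n_1$ coordinates) is exactly the standard argument for this linkage-type construction, and your cardinality count, including the identification of the rank-restricted count with $1+\sum_{u=d/2}^{k-d/2}r(q,k,n_1,\frac{d}{2},u)$ via the Delsarte distribution, is right. Note that the paper itself states this result as an imported theorem from Lemma 4.1 of the cited Cossidente--Kurz--Marino--Pavese paper and gives no proof of it, so there is nothing in-paper to compare against; your argument matches the canonical one.
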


More codewords with $k$-dimensional can be added to the subspace codes in the above linkage type construction. These codes have the following property. There exists a special $n_2$-dimensional subspace $S_1$ and another special $n_1$-dimensional subspace $S_2$ in $\textbf{F}_q^n$, where $S_1$ and $S_2$ intersect trivially at the zero vector of ${\bf F}_q^n$. Moreover $S_1$ intersects with subspaces in $C^1$ trivially at the zero vector, and $S_2$ intersects with subspaces in $C^2$ trivially at the zero vector. The point is as follows. Suppose a $k$-dimensional subspace intersects $S_1$ and $S_2$ with subspaces of dimensions bigger than or equal to $\frac{d}{2}$, then the distances from this new subspace to codewords in $C$ are bigger than or equal to $d$, since their intersections have dimensions smaller than or equal to $k - \frac{d}{2}$. As Lemma 4.3 in Cossidente, Kurz, Marino and Pavese \cite{CKMP}, we have the following Lemma.\\

\begin{lemma}
    \label{disjoint lemma}
    For C constructed as in Theorem \ref{linkage construction}, there are an $n_2$ dimensional subspace $S_1$ that intersect trivially with codewords of  $C^1$ in $\mathbf{F}_q^{n}$,
    and an $n_1$ dimensional subspace $S_2$ that intersect trivially with codewords of $C^2$ in $\mathbf{F}_q^{n}$.
\end{lemma}
\begin{proof}
    Set $S_{1}=R\left(\mathbf O_{n_{2} \times n_{1}} \quad \mathbf I_{n_{2}} \right)$ and
    $S_{2}=R\left(\mathbf I_{n_{1}}  \quad \mathbf O_{n_{1} \times n_{2}} \right)$. For
    $W_i \in C^i$, $i=1,2$, we have
    \begin{equation*}
        \operatorname{dim}\left(W_{1}+S_{1}\right)
        =\operatorname{rank}\begin{pmatrix}
        \xi\left(U_{1}\right) & \mathbf M_{2} \\
        \mathbf O_{n_{2} \times n_{1}} & \mathbf I_{n_{2}}
        \end{pmatrix}
        =\operatorname{rank}\begin{pmatrix}
        \xi\left(U_{1}\right) & \mathbf O_{k \times n_{2}} \\
        \mathbf O_{n_{2} \times n_{1}} & \mathbf I_{n_{2}}
        \end{pmatrix}
        =k+n_{2},
    \end{equation*}
    where $U_1 \in C_1,$ $\mathbf{M}_2 \in \mathcal{M}_2$,
    and
    \begin{equation*}
        \operatorname{dim}\left(W_{2}+S_{2}\right)
        =\operatorname{rank}\begin{pmatrix}
        \mathbf M_{1} & \xi\left(U_{2}\right) \\
        \mathbf I_{n_{1}} & \mathbf O_{n_{1} \times n_{2}}
        \end{pmatrix}
        =\operatorname{rank}\begin{pmatrix}
        \mathbf O_{k \times n_{1}} & \xi\left(U_{2}\right) \\
        \mathbf I_{n_{1}} & \mathbf O_{n_{1} \times n_{2}}
        \end{pmatrix}
        =k+n_{1},
    \end{equation*}
    where $U_2 \in C_2,$ $\mathbf{M}_1 \in \mathcal{M}_1.$
    Therefore,
    \begin{gather*}
        \operatorname{dim}\left(W_{1} \cap S_{1}\right)=\operatorname{dim}\left(W_{1}\right)+\operatorname{dim}\left(S_{1}\right)-\operatorname{dim}\left(W_{1}+S_{1}\right)= 0, \\
        \operatorname{dim}\left(W_{2} \cap S_{2}\right)=\operatorname{dim}\left(W_{2}\right)+\operatorname{dim}\left(S_{2}\right)-\operatorname{dim}\left(W_{2}+S_{2}\right)= 0.
    \end{gather*}

\end{proof}

Based on Lemma \ref{disjoint lemma} we give a sufficient condition for a $k$-dimensional subspace in $\mathbf{F}_q^n$ can be inserted into the CKMP combining constrction.\\

\begin{lemma}
    \label{inserting sufficient condition}
    With the same notation used in Lemma \ref{disjoint lemma},
    suppose $U$ is a $k$-dimensional subspace in $\mathbf{F}_q^n$. If $\operatorname{dim}(U \cap S_1) \geq \frac{d}{2}$
    and $\operatorname{dim}(U \cap S_2) \geq \frac{d}{2}$, then $U$ can be added into the CDC code in Theorem\ref{linkage construction}.
\end{lemma}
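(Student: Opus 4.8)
The plan is to show that any $k$-dimensional subspace $U$ satisfying the two intersection hypotheses has subspace distance at least $d$ to every codeword of $C = C^1 \cup C^2$, so that $C \cup \{U\}$ remains a valid $(n,*,d,k)_q$ CDC. By the definition of subspace distance, it suffices to prove that $\dim(U \cap W) \leq k - \frac{d}{2}$ for every codeword $W \in C$, since then $\operatorname{dis}(U,W) = 2k - 2\dim(U \cap W) \geq 2k - 2(k - \frac{d}{2}) = d$. The work thus reduces entirely to bounding the dimension of the pairwise intersections of $U$ with codewords in the two blocks.

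First I would handle the block $C^1$. Let $W_1 \in C^1$ be arbitrary. By Lemma \ref{disjoint lemma}, the subspace $S_1$ of dimension $n_2$ intersects $W_1$ trivially, i.e. $W_1 \cap S_1 = \{0\}$. I want to combine this with the hypothesis $\dim(U \cap S_1) \geq \frac{d}{2}$. The key observation is that the subspace $U \cap S_1$ lies inside $S_1$, while $W_1$ meets $S_1$ only at the origin; hence any nonzero part of $U$ that lands in $W_1$ must be complementary (inside $U$) to the part landing in $S_1$. Concretely, I would argue that $(U \cap W_1) \cap (U \cap S_1) \subseteq W_1 \cap S_1 = \{0\}$, so the two subspaces $U \cap W_1$ and $U \cap S_1$ of $U$ intersect trivially. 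Therefore
\begin{equation*}
    \dim(U \cap W_1) + \dim(U \cap S_1) \leq \dim(U) = k,
\end{equation*}
which gives $\dim(U \cap W_1) \leq k - \dim(U \cap S_1) \leq k - \frac{d}{2}$, exactly the bound required.

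The block $C^2$ is treated symmetrically: for any $W_2 \in C^2$, Lemma \ref{disjoint lemma} gives $W_2 \cap S_2 = \{0\}$, so $U \cap W_2$ and $U \cap S_2$ meet trivially inside $U$, yielding $\dim(U \cap W_2) \leq k - \dim(U \cap S_2) \leq k - \frac{d}{2}$ using the second hypothesis. Combining the two cases, $\dim(U \cap W) \leq k - \frac{d}{2}$ for every $W \in C$, so $U$ can be adjoined without reducing the minimum distance below $d$.

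I expect the main obstacle to be the dimension-counting step rather than anything deep: the crucial point is verifying the trivial-intersection claim $(U \cap W_i) \cap (U \cap S_i) = \{0\}$ and then applying the dimension inequality correctly. One must be careful that the relevant inequality is the subadditivity bound for two subspaces of $U$ that intersect trivially — namely that their dimensions sum to at most $\dim U = k$ — and not a naive application of the inclusion–exclusion formula to $U + W_i$ in the ambient space. As long as the trivial-intersection observation is stated cleanly, the rest is immediate arithmetic.
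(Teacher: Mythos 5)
Your proof is correct and follows the same route as the paper: the paper's own proof simply asserts that $\dim(W_i\cap S_i)=0$ together with $\dim(U\cap S_i)\geq \frac{d}{2}$ implies $\dim(W_i\cap U)\leq k-\frac{d}{2}$, and your trivial-intersection observation $(U\cap W_i)\cap(U\cap S_i)\subseteq W_i\cap S_i=\{0\}$ plus the dimension count inside $U$ is exactly the justification the paper leaves implicit.
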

\begin{proof}
    The Lemma \ref{disjoint lemma} gives that $\operatorname{dim}(W_{i} \cap S_{i})=0$ for $i=1,2$.
    It implies that $\operatorname{dim}(W_{i} \cap U) \leq k - \frac{d}{2}$ for $i=1,2$, then
    $\operatorname{dis}(U,W_1) \geq d$ and $\operatorname{dis}(U,W_2) \geq d$.
\end{proof}

In Lemma 4.4 of \cite{CKMP} Cossidente, Kurz,  Marino and Pavese gave an CDC that can be added to the code in Theorem \ref{linkage construction} based on the Lemma \ref{disjoint lemma}. The following result is their Lemma 4.4 for the case $l=2$ .\\

\begin{theorem}
    \label{ct:CKMP_lemma4_4}
    Let $n_1$, $n_2$, $a_1$, $a_2$, $g_1$ and $g_2$ be six positive integers satisfying $n_1+n_2=n , a_1+a_2=k , g_1+g_2= k - \frac{d}{2}$ and $g_i < a_i \leq n_i, k \leq n_i, a_i \geq \frac{d}{2}$, for $i=1,2$. Let
    $s$ be another positive integer. Suppose that $D_i^{j}$ is an $(n_i,d,a_i)_q$ CDC, for all $i=1,2$, $1 \leq j \leq s$. We assume
    $\operatorname{dis}( D_{i}^{j}, D_{i}^{j^{\prime}}) \geq 2a_i - 2g_i$ for $1\leq j < j^{\prime} \leq s$.\\

    Then $D=\bigcup_{j=1}^{s} D_j$ is an $(n,*, d,k)_q$ CDC, where $D_j = \{ U_{1} \times U_{2}: U_{i} \in \mathbf{D}_{i}^{j}, i=1,2 \}$,
    $\mathbf{D}_{i}^{j}$ is an embedding of $D_{i}^{j}$ in $\mathbf{F}_q^{n}$ such that
    the vectors contained in the codewords of $\mathbf{D}_{i}^{j}$ have non-zero entries only in the coordinates between  $n_{i-1}+1$ and ${n_i}$, ${n_0}=0$.
    $D \cup C$ is also an $(n,*, d,k)_q$ CDC. \\

    The cardinality of  $D$ satisfies
    $\# D \geq \Delta \cdot \prod_{i=1}^{2} m\left(q, a_{i}, n_{i}-a_{i}, \frac{d}{2}\right)$,
    where
        $\Delta = \min \left\{\gamma_{i}: 1 \leq i \leq 2 \right\}$ and $\gamma_{i}=\frac{m\left(q, a_{i}, n_{i}-a_{i}, a_{i}-g_{i}\right)}{m\left(q, a_{i}, n_{i}-a_{i}, \frac{d}{2}\right)}.$\\
\end{theorem}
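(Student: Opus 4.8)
The plan is to verify the two distinct assertions separately: first that $D$ itself is a valid $(n,*,d,k)_q$ CDC, then that $D \cup C$ remains one, and finally to establish the cardinality bound. For the first assertion, I would examine two codewords $U_1 \times U_2$ and $U_1' \times U_2'$ of $D$, splitting into the case where both come from the same block $D_j$ and the case where they come from different blocks $D_j$ and $D_{j'}$. Since $\dim(U_1 \times U_2) = a_1 + a_2 = k$ and the embeddings $\mathbf{D}_i^j$ occupy disjoint coordinate ranges (coordinates $n_{i-1}+1$ through $n_i$), the intersection of two product subspaces splits as a direct sum of the coordinate-wise intersections. Within a single block, the distance in each factor is at least $d$ (each $D_i^j$ is an $(n_i,d,a_i)_q$ CDC), which forces the product subspaces apart by at least $d$. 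Across different blocks, I would use the hypothesis $\operatorname{dis}(D_i^j, D_i^{j'}) \geq 2a_i - 2g_i$, equivalently $\dim(U_i \cap U_i') \leq g_i$, so that the total intersection dimension is at most $g_1 + g_2 = k - \frac{d}{2}$, giving subspace distance at least $2k - 2(k-\frac{d}{2}) = d$.

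For the second assertion, that $D \cup C$ is a CDC, I would invoke Lemma~\ref{inserting sufficient condition}: it suffices to check that each codeword $U = U_1 \times U_2 \in D$ satisfies $\dim(U \cap S_1) \geq \frac{d}{2}$ and $\dim(U \cap S_2) \geq \frac{d}{2}$, where $S_1 = R(\mathbf{O}_{n_2 \times n_1} \, \mathbf{I}_{n_2})$ and $S_2 = R(\mathbf{I}_{n_1} \, \mathbf{O}_{n_1 \times n_2})$ are the subspaces from Lemma~\ref{disjoint lemma}. Because the second factor $U_2$ lives entirely in the last $n_2$ coordinates, it is contained in $S_1$, so $\dim(U \cap S_1) \geq \dim U_2 = a_2 \geq \frac{d}{2}$; symmetrically $U_1 \subseteq S_2$ gives $\dim(U \cap S_2) \geq a_1 \geq \frac{d}{2}$. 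This is exactly the hypothesis of Lemma~\ref{inserting sufficient condition}, so every codeword of $D$ may be adjoined to $C$, and the combined set is an $(n,*,d,k)_q$ CDC.

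For the cardinality bound, the strategy is to choose, for each $i$, the blockwise codes $D_i^j$ to be lifted Ferrers-diagram or MRD-type codes so that the relevant counts are governed by the functions $m(q,a_i,n_i-a_i,\cdot)$. The number of codewords in $D_j = \{U_1 \times U_2 : U_i \in \mathbf{D}_i^j\}$ is $\#D_1^j \cdot \#D_2^j$, and taking each $D_i^j$ to have size $m(q,a_i,n_i-a_i,\frac{d}{2})$ (a maximal $(n_i,\ast,d,a_i)_q$ lifted code) while requiring the cross-block distance $2a_i-2g_i$ limits the number $s$ of usable blocks. Counting how many pairwise-far blocks can coexist is measured by the ratio $\gamma_i = m(q,a_i,n_i-a_i,a_i-g_i)/m(q,a_i,n_i-a_i,\frac{d}{2})$, and since the blocks must be simultaneously separated in both factors, the achievable number of blocks is $\Delta = \min\{\gamma_1,\gamma_2\}$; summing $\#D_j$ over these $\Delta$ blocks yields $\#D \geq \Delta \cdot \prod_{i=1}^2 m(q,a_i,n_i-a_i,\frac{d}{2})$.

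The main obstacle I expect is the cardinality estimate rather than the distance verification: the distance arguments are a clean dimension bookkeeping, but justifying that one can select $\Delta = \min\{\gamma_1,\gamma_2\}$ blocks that are pairwise far apart in \emph{both} factors simultaneously requires a careful packing argument. Concretely, one must exhibit a partition or nested family of the maximal code $m(q,a_i,n_i-a_i,a_i-g_i)$ into cosets of size $m(q,a_i,n_i-a_i,\frac{d}{2})$ whose representatives realize the cross-block distance $2a_i-2g_i$, and then align these choices across $i=1,2$. I would handle this by using the linearity and coset structure of the underlying MRD codes so that the quotient count $\gamma_i$ literally equals the number of available pairwise-separated sub-blocks, then take the minimum to synchronize the two coordinate factors.
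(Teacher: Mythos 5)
Your proposal is correct and follows exactly the route the paper intends: the paper actually states this result without proof (it is quoted as Lemma 4.4 of \cite{CKMP} for $l=2$), but your argument assembles precisely the ingredients the paper supplies for it --- the coordinate-wise splitting $\dim\left((U_1\times U_2)\cap(U_1'\times U_2')\right)=\dim(U_1\cap U_1')+\dim(U_2\cap U_2')$ with the two cases $g_1+g_2=k-\frac{d}{2}$ (different blocks) and $a_i-\frac{d}{2}$ (same block), the containments $U_2\subseteq S_1$, $U_1\subseteq S_2$ feeding Lemma \ref{inserting sufficient condition}, and the coset/subcode construction of Lemma \ref{subcode construction} with $d_m=a_i-g_i$, $d_s=\frac{d}{2}$ for the cardinality count. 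Your worry about synchronizing the two factors is unfounded: the separation hypothesis is imposed on each factor independently, so one simply takes the first $\Delta=\min\{\gamma_1,\gamma_2\}$ cosets in each factor and pairs them by index, exactly as the paper does after the theorem.
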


In \cite{CKMP} to construct such different CDC code $D_i^{j}$,$D_i^{j^{\prime}}$ satisfying $$d_S\left( D_i^{j}, D_i^{j^{\prime}}\right) \geq 2(a_i - g_i),$$
Cossidente, Kurz,Marino and Pavese used lifted rank metric codes in Corollary 4.5. We re-present their method in the following Lemma.\\

\begin{lemma}[\textbf{Subcode Construction}]
        \label{subcode construction}
        Let $R_m$ be a $(q,a,b,d_m)$ linear rank-metric code, $\mathcal{M}$ be the $(q,a,b,d_s)$ sub code of $R_m$, where $d_s > d_m$.\\
        Then $s=\frac{m\left(q, a, b, d_m \right)}{m\left(q, a, b,d_s \right)}$ rank metric codes satisfying the following conditions can be constructed.
        \begin{itemize}
            \item  $\mathcal{M}_j $ is a $\left(q, a, b, d_s\right)$ rank metric code for all $ 1 \leq j \leq s$.
            \item  For $\mathbf{M} \in \mathcal{M}_j$, $\mathbf{M} ^{\prime} \in \mathcal{M}_{j^{\prime}}$, $\mathbf{M}  \neq \mathbf{M}^{\prime}$\\
            and $\operatorname{rank} \left( \mathbf{M}  - \mathbf{M} ^{\prime} \right) \geq d_m $, for all $1 \leq j < j^{\prime} \leq s$.
        \end{itemize}
\end{lemma}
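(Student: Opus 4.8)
The plan is to exploit the fact that $\mathcal{M}$ is a \emph{linear} subcode of the linear rank-metric code $R_m$, so that the additive cosets of $\mathcal{M}$ inside $R_m$ furnish exactly the family $\mathcal{M}_1,\dots,\mathcal{M}_s$ we want. Concretely, I would fix coset representatives $\mathbf{v}_1,\dots,\mathbf{v}_s \in R_m$ and set $\mathcal{M}_j = \mathbf{v}_j + \mathcal{M}$ for $1 \le j \le s$. Since $\mathcal{M}$ is a subgroup of the additive group of $R_m$, these cosets are pairwise disjoint and partition $R_m$, and their number is the index $s = \#R_m / \#\mathcal{M}$. Using that $R_m$ attains the MRD cardinality $m(q,a,b,d_m)$ and $\mathcal{M}$ attains $m(q,a,b,d_s)$, this index is exactly $s = m(q,a,b,d_m)/m(q,a,b,d_s)$, matching the claimed count; integrality of the ratio is automatic because $\mathcal{M}$ is an ${\bf F}_q$-subspace of $R_m$.

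Next I would verify the two bullet conditions. For the first, each $\mathcal{M}_j$ is a translate of $\mathcal{M}$, and translation is a rank isometry: for $\mathbf{M},\mathbf{M}' \in \mathcal{M}_j$ we have $\mathbf{M}-\mathbf{M}' = (\mathbf{v}_j+\mathbf{m})-(\mathbf{v}_j+\mathbf{m}') = \mathbf{m}-\mathbf{m}' \in \mathcal{M}$, whence $\operatorname{rank}(\mathbf{M}-\mathbf{M}') \ge d_s$. Thus every $\mathcal{M}_j$ is a $(q,a,b,d_s)$ rank-metric code. For the second, take $\mathbf{M} \in \mathcal{M}_j$ and $\mathbf{M}' \in \mathcal{M}_{j'}$ with $j \ne j'$. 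Because $R_m$ is linear, $\mathbf{M}-\mathbf{M}' \in R_m$; because the cosets are distinct, $\mathbf{M}-\mathbf{M}' \notin \mathcal{M}$, so in particular $\mathbf{M}-\mathbf{M}' \ne \mathbf{0}$, i.e. $\mathbf{M} \ne \mathbf{M}'$. Minimality of the rank distance of $R_m$ then gives $\operatorname{rank}(\mathbf{M}-\mathbf{M}') \ge d_m$, which is the required inequality.

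The only genuine point to pin down is the hypothesis that $\mathcal{M}$ can be chosen as a linear subcode of $R_m$ with the \emph{larger} minimum distance $d_s > d_m$ while both cardinalities equal the MRD values; everything else is formal coset bookkeeping. For the $q$-polynomial MRD codes ${\bf Q}_{q,n,t}$ recalled earlier this is immediate, since restricting the degree bound embeds ${\bf Q}_{q,n,t'}$ as a linear subcode of ${\bf Q}_{q,n,t}$ for $t' < t$, raising the minimum distance from $n-t$ to $n-t'$ while keeping both codes MRD. Hence I expect the main (and only) obstacle to be merely recording this nested-MRD structure; once linearity and the index count are in place, the two distance guarantees follow directly from the isometry of translation and the minimum-distance property of $R_m$.
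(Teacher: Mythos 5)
Your proposal is correct and follows essentially the same route as the paper: the $\mathcal{M}_j$ are exactly the additive cosets of the linear subcode $\mathcal{M}$ inside $R_m$, with the within-coset distance $\geq d_s$ coming from translation invariance of the rank metric and the between-coset distance $\geq d_m$ coming from the fact that a difference of elements in distinct cosets is a nonzero element of $R_m$. Your additional remarks on the index count and on realizing the nested pair via $q$-polynomial MRD codes only make explicit what the paper leaves implicit.
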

\begin{proof}
    We take two different $\mathbf{M}, \mathbf{M}^{\prime} \in \mathcal{M}$. For each $\mathbf{M}_j \in R_m$,
    $\mathcal{M}_j$ is a $\left(q, a, b, d_s\right)$ rank-metric code defined by $\left\{\mathbf{M}_j + \mathbf{M}: \mathbf{M} \in \mathcal{M}\right\}$,
    since for two different elements $\mathbf{M}_1 = \mathbf{M}_j + \mathbf{M}, \mathbf{M}_2=\mathbf{M}_j + \mathbf{M}^{\prime}$,
    $\operatorname{rank}\left(\mathbf{M}_1 - \mathbf{M}_2 \right) = \operatorname{rank}\left(\mathbf{M} - \mathbf{M}^{\prime} \right) \geq
    d_s$.\\

    If $\mathbf{M}_j - \mathbf{M}_{j^{\prime}} \notin \mathcal{M}$, then $\mathcal{M}_j \cap \mathcal{M}_{j^{\prime}} = \emptyset$,
    since $\mathbf{M}_j + \mathbf{M} = \mathbf{M}_{j^{\prime}} + \mathbf{M}^{\prime}$ implies
    $\mathbf{M}_j - \mathbf{M}_{j^{\prime}} \in \mathcal{M}$.
    In other words, $\mathcal{M}_j$ is a coset of $\mathcal{M}$ in $R_m$,
    and there are $s=\frac{m\left(q, a, b, d_m \right)}{m\left(q,a,b,d_s\right)}$ distinct rank metric codes.
\end{proof}

We set $a = a_i, b= n_i - a_i, d_m = a_i - g_i, d_s = \frac{d}{2}$, where $a_i,n_i,g_i$ are the same as in Theorem \ref{ct:CKMP_lemma4_4}.
By lifting these rank metric codes with $I_{a_{i}}$, the desired $(n_i,*, d, a_i)_q$ CDCs can be obtained.
This Lemma is essential in our constructions.\\

\subsection{New inserted subspace codes}
In Theorem \ref{ct:CKMP_lemma4_4} $k$-dimensional subspaces in the code $D$ is spanned by the rows of matrices of the form
$
\begin{small}
    \begin{pmatrix}
        \mathbf{I}_{a_1} & \mathbf{M}_1 & \mathbf{O} & \mathbf{O} \\
        \mathbf{O} & \mathbf{O} & \mathbf{I}_{a_2} & \mathbf{M}_2 \\
    \end{pmatrix},
\end{small}
$ where $a_1+a_2=k$ and $\mathbf{M}_1, \mathbf{M}_2$ are from MRD codes.
However there are some gaps in the generator matrices that can be filled with more matrices from MRD codes to enlarge the  subspace codes.
The problem is how to fit these matrices into the generator matrices
so that the distances of subspaces spanned by the rows of the patched matrices are preserved.\\

In the following result we first consider an $(n,d,k)_q$ CDC code consisting of subspaces in $\mathbf{F}_q^n$ spanned by
$k$ rows of matrices which are concatenated by six small matrices $\mathbf{A}_1, \mathbf{A}_2, \mathbf{B}_1, \mathbf{B}_2, \mathbf{B}_3, \mathbf{B}_4$ in well-arranged positions,
where $\mathbf{A}_1$, $\mathbf{A}_2$ are identity matrices and $\mathbf{B}_i$ for $1 \leq i \leq 4$ are from suitable different rank metric codes.
Then we combine several CDCs by restricting rank distances of matrices in different rank metric codes.
To construct the desired rank metric codes, we need the subcode construction in Lemma \ref{subcode construction}.
We call this \textit{blocks construction} since it is constructed by multiple rank metric codes.\\


\begin{proposition}
    \label{ct:two-blocks}
    Let $n_1$, $n_2$, $a_1$, $a_2$, $b_1$ and $b_2$ be six positive integers satisfying $n_1+n_2=n$, $a_1+a_2=k$, $b_1+b_2 \geq \frac{d}{2}$ and
    $n_i \geq k$, $a_i \geq \frac{d}{2}$ and $1 \leq b_i \leq \frac{d}{2}$ for $i=1,2$.
    $\mathcal{M}_{1,2}\left(q,a_1,n_2-a_2,\frac{d}{2}\right)$, $\mathcal{M}_{2,1}\left(q,a_2,n_1-a_1,\frac{d}{2}\right)$ are rank metric codes.
    For another integer $s$, $\mathcal{M}_{1,1}^r\left(q,a_1,n_1-a_1,\frac{d}{2}\right)$, $\mathcal{M}_{2,2}^r\left(q,a_2,n_2-a_2,\frac{d}{2}\right)$ are rank metric codes for all $1 \leq r \leq s$.
    We assume $ \mathbf{M} \in \mathcal{M}_{i,i}^r$, $\mathbf{M}^\prime \in \mathcal{M}_{i,i}^{r^\prime}$ for all $1 \leq i \leq 2$, $1 \leq r<r^\prime \leq s$ satisfying
    $\mathbf{M} \neq \mathbf{M}^\prime$ and $\operatorname{rank}\left(\mathbf{M}-\mathbf{M}^{\prime}\right) \geq b_i$.\\

    Then $\mathcal{N}=\bigcup_{r=1}^{s} \mathcal{N}_r$ is an $(n,*, d,k)_q$ CDC, where $\mathcal{N}_r$ is consisting of the subspaces
    \begin{align*}
        \left\{\mathrm{R}
            \renewcommand{\arraystretch}{1.2}
            \begin{pmatrix}
            \mathbf I_{a_{1}} &\mathbf{M}_{1,1}&\mathbf{O}_1 &\mathbf{M}_{1,2} \\
            \mathbf{O}_2 &\mathbf{M}_{2,1}& \mathbf I_{a_{2}}&\mathbf{M}_{2,2}
            \end{pmatrix}\right\},\\
    \end{align*}

    \noindent where
    $\mathbf M_{1, 2} \in \mathcal{M}_{1, 2}$,
    $\mathbf M_{2, 1} \in \mathcal{M}_{2, 1}$, $\mathbf M_{i, i} \in \mathcal{M}_{i, i}^{r}$ for $i=1,2$, and
    $\mathbf{O}_1 = \mathbf{O}_{ {a}_{1} \times a_{2}},\enspace  \mathbf{O}_2 = \mathbf{O}_{{a}_{2} \times a_{1}}.$
\end{proposition}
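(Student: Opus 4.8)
The plan is to verify only the minimum-distance property, since the cardinality is left unspecified (the $*$). So I fix two distinct subspaces $U \in \mathcal{N}_r$ and $U' \in \mathcal{N}_{r'}$ with generator matrices $G, G'$ of the displayed form, and aim to show $\operatorname{dis}(U,U') \ge d$. Because both matrices have full row rank $k$ (the identity blocks $\mathbf{I}_{a_1}, \mathbf{I}_{a_2}$ supply $k$ pivots in the first and third column blocks), it is equivalent to prove $\dim(U+U') = \operatorname{rank}\left(\begin{smallmatrix} G \\ G'\end{smallmatrix}\right) \ge k + \tfrac{d}{2}$, which gives $\dim(U\cap U') \le k-\tfrac{d}{2}$ and hence $\operatorname{dis}(U,U') = 2k - 2\dim(U\cap U') \ge d$.

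First I would row-reduce the stacked $2k \times n$ matrix by subtracting each of the first two row blocks of $G$ from the corresponding last two row blocks coming from $G'$. The identity and zero blocks cancel, so the bottom $k$ rows acquire zero entries throughout the first and third column blocks and reduce to the single difference matrix
$$\Delta=\begin{pmatrix}\mathbf M_{1,1}'-\mathbf M_{1,1}&\mathbf M_{1,2}'-\mathbf M_{1,2}\\\mathbf M_{2,1}'-\mathbf M_{2,1}&\mathbf M_{2,2}'-\mathbf M_{2,2}\end{pmatrix}$$
on the second and fourth column blocks. Since the top $k$ rows keep their identity pivots in the first and third column blocks, while the bottom rows are now zero there, a short linear-independence argument shows $\operatorname{rank}\left(\begin{smallmatrix} G \\ G'\end{smallmatrix}\right) = k + \operatorname{rank}(\Delta)$. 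Writing $\delta_{i,j}$ for the four blocks of $\Delta$, the whole problem collapses to proving $\operatorname{rank}(\Delta) \ge \tfrac{d}{2}$.

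Then I would split into cases according to the off-diagonal blocks $\delta_{1,2},\delta_{2,1}$. These are differences of codewords of $\mathcal{M}_{1,2}$ and $\mathcal{M}_{2,1}$, which have rank distance $\tfrac{d}{2}$; so each is either $\mathbf O$ or of rank $\ge \tfrac{d}{2}$. If at least one of them is nonzero, it is a submatrix of $\Delta$ of rank $\ge \tfrac{d}{2}$, and $\operatorname{rank}(\Delta) \ge \tfrac{d}{2}$ follows at once. In the remaining case $\delta_{1,2} = \delta_{2,1} = \mathbf O$, the matrix $\Delta$ is block diagonal, so $\operatorname{rank}(\Delta) = \operatorname{rank}(\delta_{1,1}) + \operatorname{rank}(\delta_{2,2})$. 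If $r = r'$, distinctness of $U, U'$ forces one diagonal block to be a nonzero difference of codewords of the distance-$\tfrac{d}{2}$ code $\mathcal{M}_{i,i}^r$, giving rank $\ge \tfrac{d}{2}$; if $r \ne r'$, the hypothesis on the families $\mathcal{M}_{i,i}^r, \mathcal{M}_{i,i}^{r'}$ guarantees $\operatorname{rank}(\delta_{1,1}) \ge b_1$ and $\operatorname{rank}(\delta_{2,2}) \ge b_2$, whence $\operatorname{rank}(\Delta) \ge b_1 + b_2 \ge \tfrac{d}{2}$.

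The step I expect to be the real obstacle—and the reason the case split is unavoidable—is the block-diagonal reduction. One is tempted to bound $\operatorname{rank}(\Delta)$ below by the ranks of its diagonal blocks directly, but the rank of a block matrix can be strictly smaller than the sum of the ranks of its diagonal blocks once off-diagonal entries are present (for instance the all-ones $2\times 2$ matrix has rank $1$, although each of its two diagonal $1\times 1$ blocks has rank $1$, summing to $2$). The structure that rescues the argument is precisely that the off-diagonal blocks are differences from rank-distance-$\tfrac{d}{2}$ codes, so they cannot be small but nonzero: either they are already large enough on their own, or they vanish and genuine block-diagonality restores the additivity $\operatorname{rank}(\Delta) = \operatorname{rank}(\delta_{1,1}) + \operatorname{rank}(\delta_{2,2})$ that the inequality $b_1 + b_2 \ge \tfrac{d}{2}$ needs.
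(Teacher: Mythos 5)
Your proof is correct and follows essentially the same route as the paper: the paper bounds $\dim(W_1\cap W_2)$ as the left kernel of the stacked difference blocks, which is exactly your $k-\operatorname{rank}(\Delta)$, and it uses the identical case split (a nonzero off-diagonal difference forces rank $\geq \frac{d}{2}$ via a submatrix; otherwise block-diagonality gives additivity, with the subcases $r=r'$ and $r\neq r'$ handled exactly as you do). The only difference is cosmetic — you phrase the bound through $\dim(U+U')=k+\operatorname{rank}(\Delta)$ rather than through the kernel of $\Delta$ directly.
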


\begin{proof}
    Since $\operatorname{rank}(\xi(\mathrm{W})) = k$ for all subspaces $\mathrm{W}\in \mathcal{N}_r$ for $1 \leq r\leq s$, the elements of $\mathcal{N}$ are $k$-dimensional subspaces in $\mathbf{F}_q^n$.\\

    For the distance analyse, let $\mathrm{W}_{1} \in \mathcal{N}_r,\enspace \mathrm{W}_{2} \in \mathcal{N}_{r^{\prime}}$ be two k-dimensional subspaces in $\mathbf{F}_q^n$ for $1 \leq r \leq r^{\prime} \leq s$,
    $\mathrm{W}_i$ is spanned by the rows of matrix $\mathbf{G}_i$ for $i=1,2$,
    \begin{gather*}
            \mathbf{G}_1 = \begin{pmatrix}
                \mathbf I_{a_{1}} & \mathbf{M}_{1,1}&\mathbf{O}_1 &\mathbf{M}_{1,2}\\
                \mathbf{O}_2 &\mathbf{M}_{2,1}& \mathbf I_{a_{2}}&\mathbf{M}_{2,2}
                \end{pmatrix}, \enspace
            \mathrm{W}_{1}=\mathrm{R}(\mathbf{G}_1), \\
            \mathbf{G}_2 = \begin{pmatrix} \mathbf I_{a_{1}} & \mathbf{M}_{1,1}^{\prime} & \mathbf{O}_1 & \mathbf{M}_{1,2}^{\prime} \\
                \mathbf{O}_2 & \mathbf{M}_{2,1}^{\prime} & \mathbf I_{a_{2}} & \mathbf{M}_{2,2}^{\prime}
                \end{pmatrix}, \enspace
            \mathrm{W}_{2}=\mathrm{R}(\mathbf{G}_2),
    \end{gather*}
    where $\mathbf{M}_{i,i} \in \mathcal{M}_{i,i}^r$, $\mathbf{M}_{i,i}^{\prime} \in \mathcal{M}_{i,i}^{r^{\prime}}$ for $i=1,2$,
    $\mathbf{M}_{1,2}, \mathbf{M}_{1,2}^{\prime} \in \mathcal{M}_{1,2}$, $\mathbf{M}_{2,1}, \mathbf{M}_{2,1}^{\prime} \in \mathcal{M}_{2,1}.$ Since the intersection of $\mathrm{W}_1$ and $\mathrm{W}_2$ in $\mathbf{F}_q^n$ is
    \begin{align*}
        \mathrm{W}_1 \cap \mathrm{W}_2
        = \left\{ ( \alpha_{1}, \alpha_{2}) \mathbf{G}_1
        = (\beta_{1}, \beta_{2}) \mathbf{G}_2 : \alpha_{i}, \beta_{i} \in \textbf{F}_{q}^{a_{i}}, i=1,2 \right\},
    \end{align*}
    we have
    \begin{small}
        \begin{equation*}
                \operatorname{dim}\left(\mathrm{W}_1 \cap \mathrm{W}_2\right)=\operatorname{dim}( \{
                \left(\alpha_{1}, \alpha_{2}\right):\left\{\begin{array}{c}
                    (\alpha_{1}, \alpha_{2})
                            \begin{pmatrix}
                                \mathbf{M}_{1,1}-\mathbf{M}_{1,1}^{\prime} \\
                                \mathbf{M}_{2,1}-\mathbf{M}_{2,1}^{\prime}
                            \end{pmatrix}
                        = 0
                    \\
                    (\alpha_{1},\alpha_{2})\begin{pmatrix}
                        \mathbf{M}_{1,2}-\mathbf{M}_{1,2}^{\prime} \\
                        \mathbf{M}_{2,2} -\mathbf{M}_{2,2}^{\prime}
                    \end{pmatrix}
                    =0
                    \end{array} , \alpha_{i} \in \mathbf{F}_q^{a_i},i = 1,2 \right. \} ).
        \end{equation*}
    \end{small}

    We analyse the following cases.
    If $\mathbf{M}_{1,2} \neq \mathbf{M}_{1,2}^{\prime}$, then
    \begin{small}
    \begin{align*}
        \operatorname{dim}\left(\mathrm{W}_1 \cap \mathrm{W}_2\right) &\leq
        \operatorname{dim}\left( \{ (\alpha_{1}, \alpha_{2}):(\alpha_{1}, \alpha_{2})\begin{pmatrix}
            \mathbf{M}_{1,2}-\mathbf{M}_{1,2}^{\prime} \\
            \mathbf{M}_{2,2} -\mathbf{M}_{2,2}^{\prime}
        \end{pmatrix} = 0, \alpha_{i} \in \mathbf{F}_q^{a_{i}}, i = 1,2 \} \right) \\
        & = \operatorname{dim}\left( \operatorname{kernel}\begin{pmatrix}
            \mathbf{M}_{1,2}-\mathbf{M}_{1,2}^{\prime} \\
            \mathbf{M}_{2,2} -\mathbf{M}_{2,2}^{\prime}
        \end{pmatrix}\right) \\
        & \leq k - \operatorname{rank}(\mathbf{M}_{1,2} - \mathbf{M}_{1,2}^{\prime}) \leq k - \frac{d}{2}.
    \end{align*}
    \end{small}
    \noindent Similarly if $\mathbf{M}_{2,1} \neq \mathbf{M}_{2,1}^{\prime}$,
    $\operatorname{dim}\left(W_1 \cap W_2\right) \leq k - \operatorname{rank}(\mathbf{M}_{2,1} - \mathbf{M}_{2,1}^{\prime}) \leq k - \frac{d}{2}.$\\

    When $\mathbf{M}_{1,2} = \mathbf{M}_{1,2}^{\prime}$ and $\mathbf{M}_{2,1} = \mathbf{M}_{2,1}^{\prime}$, if $r=r^{\prime}$,
    then $\mathbf{M}_{1,1}, \mathbf{M}_{1,1}^{\prime} \in \mathcal{M}_{1, 1}^{r}$, $\mathbf{M}_{2,2}, \mathbf{M}_{2,2}^{\prime} \in \mathcal{M}_{2, 2}^{r}$.
    If $\mathbf{M}_{1,1} \neq \mathbf{M}_{1,1}^{\prime}$, then
    $
        \operatorname{dim}\left(W_1 \cap W_2\right)
        \leq k - \operatorname{rank}\begin{pmatrix}
            \mathbf{M}_{1,1}-\mathbf{M}_{1,1}^{\prime}
        \end{pmatrix} \leq k - \frac{d}{2}.
    $
    If $\mathbf{M}_{1,1} = \mathbf{M}_{1,1}^{\prime}$, we have $\mathbf{M}_{2,2} \neq \mathbf{M}_{2,2}^{\prime}$, then
    $
        \operatorname{dim}\left(W_1 \cap W_2\right)
        \leq k - \operatorname{rank}\begin{pmatrix}
            \mathbf{M}_{2,2}-\mathbf{M}_{2,2}^{\prime}
        \end{pmatrix} \leq k - \frac{d}{2}.
    $
    If $r \neq r^{\prime}$, then $\mathbf{M}_{i,i} \in \mathcal{M}_{i, i}^{r}, \mathbf{M}_{i,i}^{\prime} \in \mathcal{M}_{i, i}^{{r}^{\prime}}$ for $i=1,2$.
    Since $\mathbf{M}_{1,1} \neq \mathbf{M}_{1,1}^{\prime}, \mathbf{M}_{2,2} \neq \mathbf{M}_{2,2}^{\prime}$ and $\mathbf{M}_{1,2} = \mathbf{M}_{1,2}^{\prime}, \mathbf{M}_{2,1} = \mathbf{M}_{2,1}^{\prime}$,
    it implies that
    \begin{equation*}
            \begin{aligned}
                \operatorname{dim}\left(W_1 \cap W_2\right)
                & \leq \operatorname{dim}\left( \{ \alpha_{1}:\alpha_{1}(\mathbf{M}_{1,1} - \mathbf{M}_{1,1}^{\prime}) = 0, \alpha_{1} \in \mathbf{F}_q^{a_{1}} \} \right) \\
                & + \operatorname{dim}\left( \{ \alpha_{2}:\alpha_{2}(\mathbf{M}_{2,2} - \mathbf{M}_{2,2}^{\prime}) = 0, \alpha_{2} \in \mathbf{F}_q^{a_{2}}\} \right)\\
                &=\operatorname{dim}\left(\operatorname{kernel} \left(\mathbf{M}_{1,1} - \mathbf{M}_{1,1}^{\prime}\right)\right) +\operatorname{dim}\left(\operatorname{kernel} \left(\mathbf{M}_{2,2}-\mathbf{M}_{2,2}^{\prime}\right)\right) \\
                & \leq a_{1}-b_{1}+a_{2}-b_{2} \leq k-\frac{d}{2}.
            \end{aligned} \\
    \end{equation*}
    It follows that $\operatorname{dis}\left(W_1, W_2\right) \geq d.$\\
\end{proof}

In the following result we improve the \textit{blocks construction} by not restricting to generator matrices of the form
$
\begin{small}
\begin{pmatrix}
    \mathbf I_{a_{1}} & \cdots & \cdots & \cdots \\
    \cdots & \cdots & \mathbf I_{a_{2}} & \cdots
\end{pmatrix}
\end{small}$
, but rather
using the matrices consisting of the generator matrices of $a_i$-dimensional CDCs in $\mathbf{F}_q^{n_i}$ for $i=1,2.$
In addition, the construction is inserted into the CKMP combining construction by restricting ranks of elements in some rank metric codes.\\


\begin{theorem}
    \label{ct:multi-blocks}
    Let $n_1$, $n_2$, $a_1$, $a_2$, $b_1$ and $b_2$ be six positive integers satisfying $n_1+n_2=n , a_1+a_2=k , b_1+b_2 \geq \frac{d}{2}$ and $a_i \leq t_i \leq n_i - \frac{d}{2}, n_i \geq k, a_i \geq \frac{d}{2}, 1 \leq b_i \leq \frac{d}{2}$ for $i=1,2$.
    $Q_i$ is an $\left( t_i, d, a_i\right)_q$ CDC for $i=1,2$ and $\mathcal{M}_{1,2}\left(q,a_1,n_2-t_2,\frac{d}{2}\right)$, $\mathcal{M}_{2,1}\left(q,a_2,n_1-t_1,\frac{d}{2}\right)$ are rank-metric codes.
    For another integer $s$, $\mathcal{M}_{1,1}^r\left(q,a_1,n_1-t_1,\frac{d}{2}\right)$, $\mathcal{M}_{2,2}^r\left(q,a_2,n_2-t_2,\frac{d}{2}\right)$ are rank-metric codes for all $1 \leq r \leq s$.
    We assume that $\mathbf{M} \neq \mathbf{M}^\prime$ and $rank\left(\mathbf{M}-\mathbf{M}^{\prime}\right) \geq b_i$ for 
$ \mathbf{M} \in \mathcal{M}_{i,i}^r$, $\mathbf{M}^\prime \in \mathcal{M}_{i,i}^{r^\prime}$ for all $1 \leq i \leq 2$, $1 \leq r<r^\prime \leq s$.\\

    Then $\mathcal{B} = \bigcup_{r=1}^{s} \mathcal{B}_r$ an $\left(n,*,d,k\right)_q$ CDC, where $\mathcal{B}_r$ is consisting of subspaces

    \begin{align*}
            \left\{ R
                \begin{pmatrix}
                \xi ({U}_{1}) & \mathbf{M}_{1,1} & \mathbf O_1 & \mathbf{M}_{1,2}\,  \\
                \mathbf O_2& \mathbf{M}_{2,1} & \xi ({U}_{2}) & \mathbf{M}_{2,2}
                \end{pmatrix}
            \right\}, \\
    \end{align*}

    \noindent where $\mathbf M_{i, i} \in \mathcal{M}_{i, i}^{r}$ for $i=1,2$, $U_{i} \in Q_{i}$ for $i=1,2$,
    $\mathbf M_{1, 2} \in \mathcal{M}_{1, 2}$, $\operatorname{rank}\left(\mathbf M_{1, 2}\right) \leq a_{1}-\frac{d}{2}$,
    $\mathbf M_{2, 1} \in \mathcal{M}_{2, 1}$, $\operatorname{rank}\left(\mathbf M_{2, 1}\right) \leq a_{2}-\frac{d}{2}$
    and $\mathbf O_1 = \mathbf O_{a_{1} \times t_{2}}, \mathbf O_2 = \mathbf O_{a_{2} \times t_{1}}$.\\

    Moreover, $\mathcal{B} \cup C$ is an $(n,*,d,k)_q$ CDC.

\end{theorem}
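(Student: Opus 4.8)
The plan is to verify first that every generator matrix $\mathbf{G}$ listed in $\mathcal{B}_r$ has full row rank $k$, so that the elements of $\mathcal{B}$ really are $k$-dimensional: if $(\alpha_1,\alpha_2)\mathbf{G}=0$, then reading off the column block occupied by $\xi(U_1)$ gives $\alpha_1\xi(U_1)=0$, hence $\alpha_1=0$ since $\xi(U_1)$ has rank $a_1$, and reading off the block occupied by $\xi(U_2)$ gives $\alpha_2=0$. After that I would separate the claim into two independent pieces: the pairwise distance inside $\mathcal{B}$, and the distance between $\mathcal{B}$ and the code $C$ of Theorem~\ref{linkage construction}. The essential new point compared with Proposition~\ref{ct:two-blocks} is that the identity blocks $\mathbf{I}_{a_i}$ are replaced by the reduced generator matrices $\xi(U_i)$ of CDC codewords, and that the two off-diagonal blocks carry rank restrictions; both features have to be absorbed into the distance count.

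For two codewords $W_1\in\mathcal{B}_r$, $W_2\in\mathcal{B}_{r'}$ with $r\le r'$, spanned by $\mathbf{G}_1,\mathbf{G}_2$, I would describe $W_1\cap W_2$ by equating $(\alpha_1,\alpha_2)\mathbf{G}_1=(\beta_1,\beta_2)\mathbf{G}_2$ block by block over the four column groups of widths $t_1$, $n_1-t_1$, $t_2$, $n_2-t_2$. The decisive dichotomy is whether the CDC parts agree. If $U_1\ne U_1'$ or $U_2\ne U_2'$, I would consider the linear projection $\pi\colon W_1\cap W_2\to\mathbf{F}_q^{t_1}\times\mathbf{F}_q^{t_2}$ onto the two $\xi$-blocks, sending $(\alpha_1,\alpha_2)\mathbf{G}_1$ to $(\alpha_1\xi(U_1),\alpha_2\xi(U_2))$. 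Because $\xi(U_1),\xi(U_2)$ have full row rank, $\pi$ is injective; and because the same vector also equals $(\beta_1,\beta_2)\mathbf{G}_2$, its image lies in $(U_1\cap U_1')\times(U_2\cap U_2')$. Since each $Q_i$ has subspace distance $d$, a genuine difference $U_i\ne U_i'$ forces $\dim(U_i\cap U_i')\le a_i-\tfrac{d}{2}$, whence $\dim(W_1\cap W_2)\le\dim(U_1\cap U_1')+\dim(U_2\cap U_2')\le k-\tfrac{d}{2}$.

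In the complementary case $U_1=U_1'$ and $U_2=U_2'$, full row rank of the $\xi$-blocks forces $\alpha_1=\beta_1$ and $\alpha_2=\beta_2$, so $W_1\cap W_2$ is cut out exactly by the two conditions $\alpha_1(\mathbf{M}_{1,1}-\mathbf{M}_{1,1}')+\alpha_2(\mathbf{M}_{2,1}-\mathbf{M}_{2,1}')=0$ and $\alpha_1(\mathbf{M}_{1,2}-\mathbf{M}_{1,2}')+\alpha_2(\mathbf{M}_{2,2}-\mathbf{M}_{2,2}')=0$, which is verbatim the system treated in Proposition~\ref{ct:two-blocks}. I would then reuse that computation: if $\mathbf{M}_{1,2}\ne\mathbf{M}_{1,2}'$ (resp.\ $\mathbf{M}_{2,1}\ne\mathbf{M}_{2,1}'$) the second (resp.\ first) equation already bounds the kernel dimension by $k-\operatorname{rank}(\mathbf{M}_{1,2}-\mathbf{M}_{1,2}')\le k-\tfrac{d}{2}$ using the rank distance $\tfrac{d}{2}$ of $\mathcal{M}_{1,2}$ (resp.\ $\mathcal{M}_{2,1}$); and if both off-diagonal blocks agree the two equations decouple into $\alpha_1\in\ker(\mathbf{M}_{1,1}-\mathbf{M}_{1,1}')$ and $\alpha_2\in\ker(\mathbf{M}_{2,2}-\mathbf{M}_{2,2}')$. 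Here I would split on the level indices: for $r=r'$ the distinctness of $W_1,W_2$ forces one diagonal difference of rank $\ge\tfrac{d}{2}$ inside $\mathcal{M}_{i,i}^r$, giving $\le k-\tfrac{d}{2}$; for $r<r'$ the hypothesis $\operatorname{rank}(\mathbf{M}_{i,i}-\mathbf{M}_{i,i}')\ge b_i$ applies to both $i$, giving $\dim(W_1\cap W_2)\le(a_1-b_1)+(a_2-b_2)=k-(b_1+b_2)\le k-\tfrac{d}{2}$ by $b_1+b_2\ge\tfrac{d}{2}$. This completes the proof that $\mathcal{B}$ is an $(n,*,d,k)_q$ CDC.

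Finally, for $\mathcal{B}\cup C$ I would invoke the inserting criterion of Lemma~\ref{inserting sufficient condition} with the subspaces $S_1,S_2$ of Lemma~\ref{disjoint lemma}, noting that $S_2$ is the first $n_1$ coordinates (blocks 1 and 2) and $S_1$ the last $n_2$ coordinates (blocks 3 and 4). For $W\in\mathcal{B}$, a vector $(\alpha_1,\alpha_2)\mathbf{G}$ lies in $S_2$ iff its $\xi(U_2)$-block $\alpha_2\xi(U_2)$ and its $\mathbf{M}_{1,2}$-block vanish; full rank of $\xi(U_2)$ gives $\alpha_2=0$ and then $\alpha_1\in\ker(\mathbf{M}_{1,2})$, so $\dim(W\cap S_2)=a_1-\operatorname{rank}(\mathbf{M}_{1,2})\ge\tfrac{d}{2}$ by the restriction $\operatorname{rank}(\mathbf{M}_{1,2})\le a_1-\tfrac{d}{2}$; symmetrically $\dim(W\cap S_1)=a_2-\operatorname{rank}(\mathbf{M}_{2,1})\ge\tfrac{d}{2}$. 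Lemma~\ref{inserting sufficient condition} then yields $\operatorname{dis}(W,W')\ge d$ for every $W'\in C$, and together with the internal distance of $\mathcal{B}$ and of $C$ this shows $\mathcal{B}\cup C$ is an $(n,*,d,k)_q$ CDC. The main obstacle I anticipate is the first case of the distance count: isolating the right projection onto the $\xi$-blocks and checking it is injective with image inside the product of CDC intersections is what makes the non-identity diagonal blocks tractable; the remaining cases are bookkeeping that reduces to Proposition~\ref{ct:two-blocks} and to the rank restrictions.
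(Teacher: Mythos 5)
Your proposal is correct and follows essentially the same route as the paper: full-rank check, case split on whether the CDC parts $U_i$ agree, reduction of the remaining case to the system already treated in Proposition~\ref{ct:two-blocks}, and the inserting criterion of Lemma~\ref{inserting sufficient condition} via the subspaces $S_1,S_2$ of Lemma~\ref{disjoint lemma}. The only differences are cosmetic refinements: your injective projection onto the $\xi$-blocks unifies the paper's cases (1) and (2) into the single bound $\dim(W_1\cap W_2)\le\dim(U_1\cap U_1')+\dim(U_2\cap U_2')$, and you compute $\dim(W\cap S_i)$ directly as a kernel dimension where the paper goes through $\operatorname{rank}$ of an augmented matrix and the dimension formula.
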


\begin{proof}
    Since for all subspaces $\mathrm{W}\in \mathcal{B}_r$ for $1 \leq r \leq s$, we have $\operatorname{rank}(\xi(U_i)) = a_i$
    for $U_i \in Q_i, i=1,2$, then $\operatorname{rank}(\xi(\mathrm{W})) = k$. The elements of $\mathcal{B}$ are $k$-dimensional subspaces in $\mathbf{F}_q^n$.\\

    Let $W_1 \in \mathcal{B}_r, W_2 \in \mathcal{B}_{r^{\prime}}$ be two $k$-dimensional subspaces in $\mathbf{F}_q^n$ for
    $1\leq r \leq r^{\prime} \leq s$. By the construction, there exists $U_i \in Q_i, \mathbf{M}_{i,i} \in \mathcal{M}_{i, i}^{r}$ for $i=1,2$
    and $\mathbf{M}_{i,j} \in \mathcal{M}_{i, j}$ for $1 \leq i,j \leq 2, i \neq j$
    such that
    \begin{align*}
        \mathbf{G}_1 = \begin{pmatrix}
            \xi \left(U_{1}\right) & \mathbf{M}_{1,1} & \mathbf{O}_1 & \mathbf{M}_{1,2} \\
            \mathbf{O}_2 & \mathbf{M}_{2,1} & \xi\left(\mathrm{U}_{2}\right) & \mathbf{M}_{2,2}
            \end{pmatrix}, \enspace
        \mathrm{W}_{1} = \mathrm{R}(\mathbf{G}_1),
    \end{align*}
    where $\operatorname{rank}(\mathbf{M}_{1,2})\leq a_1 - \frac{d}{2}, \enspace \operatorname{rank}(\mathbf{M}_{2,1})\leq a_2 - \frac{d}{2}$, there exists $U_i^{\prime} \in Q_i$, $\mathbf{M}_{i,i}^{\prime} \in \mathcal{M}_{i, i}^{r^{\prime}}$
    for $i=1,2$ and $\mathbf{M}_{i,j}^{\prime} \in \mathcal{M}_{i, j}$ for $1 \leq i,j \leq 2, i \neq j$ such that
    \begin{align*}
        \mathbf{G}_2 = \begin{pmatrix}
            \xi\left({U}_{1}^{\prime}\right) & \mathbf{M}_{1,1}^{\prime} & \mathbf{O}_1 & \mathbf{M}_{1,2}^{\prime} \\
            \mathbf{O}_2  & \mathbf{M}_{2,1}^{\prime} & \xi \left({U}_{2}^{\prime}\right) & \mathbf{M}_{2,2}^{\prime}
            \end{pmatrix}, \enspace
            \mathrm{W}_{2} = \mathrm{R}(\mathbf{G}_2),
    \end{align*}
    where
    $\operatorname{rank}(\mathbf{M}_{1,2}^{\prime}) \leq a_1 - \frac{d}{2}, \enspace \operatorname{rank}(\mathbf{M}_{2,1}^{\prime}) \leq a_2 - \frac{d}{2}$. The intersection of $W_1$ and $W_2$ in $\mathbf{F}_q^n$ is
    \begin{align*}
        W_1 \cap W_2
        = \left\{ ( \alpha_{1}, \alpha_{2}) \mathbf{G}_1
        = (\beta_{1}, \beta_{2}) \mathbf{G}_2 : \alpha_{i}, \beta_{i} \in \textbf{F}_{q}^{a_{i}}, i=1,2 \right\}.\\
    \end{align*}

    We analyse the following cases.\\

    \noindent (1) If $U_{1} \neq U_{1}^{\prime}$, then
    $
        \operatorname{dim}\left(W_1 \cap W_2\right) \leq \operatorname{dim}\left(U_1 \cap U_{1}^{\prime} \right) + a_2
         \leq a_1 - \frac{d}{2} + a_2 \leq k-\frac{d}{2}.\\
    $

    \noindent (2) If $U_{1} = U_{1}^{\prime}$ and $U_{2} \neq U_{2}^{\prime}$, then
    $
        \operatorname{dim}\left(W_1 \cap W_2\right)  \leq a_1 + \operatorname{dim}\left(U_2 \cap U_{2}^{\prime} \right)
         \leq a_1 +  a_2 - \frac{d}{2}  \leq k-\frac{d}{2}.\\
    $

    \noindent (3) If $U_{1}=U_{1}^{\prime}, U_{2}=U_{2}^{\prime}$, then $\alpha_{1} = \beta_{1}$ and $\alpha_{2} = \beta_{2}$
    since $\xi(U_{i}), \xi(U_{i}^{\prime})$ is the full rank matrix for $i=1,2$.
    Therefore we have
        \begin{equation*}
                \operatorname{dim}\left(W_1 \cap W_2\right)=\operatorname{dim}( \{
                \left(\alpha_{1}, \alpha_{2}\right):\left\{\begin{array}{c}
                    (\alpha_{1}, \alpha_{2})
                            \begin{pmatrix}
                                \mathbf{M}_{1,1}-\mathbf{M}_{1,1}^{\prime} \\
                                \mathbf{M}_{2,1}-\mathbf{M}_{2,1}^{\prime}
                            \end{pmatrix}
                        = 0
                    \\
                    (\alpha_{1},\alpha_{2})\begin{pmatrix}
                        \mathbf{M}_{1,2}-\mathbf{M}_{1,2}^{\prime} \\
                        \mathbf{M}_{2,2} -\mathbf{M}_{2,2}^{\prime}
                    \end{pmatrix}
                    =0
                    \end{array} \right. \} ),
        \end{equation*}
    where $\alpha_{i} \in \textbf{F}_{q}^{a_{i}}, i=1,2$.
    From a similar proof as Proposition \ref{ct:two-blocks} we get the conclusion.\\

    We need to prove that $\mathcal{B} \cup C$ is an $(n,*,d,k)_q$ CDC.
    Let $W_1$ be an element in $\mathcal{B}$ and $S_1$ and $S_2$ be the subspaces in Lemma \ref{disjoint lemma}.
    Then
    \begin{align*}
        \operatorname{dim}\left(W_1+S_{1}\right)
        &=\operatorname{rank}\begin{pmatrix}
        \xi\left(U_{1}\right) & \mathbf M_{1,1} & \mathbf O_1 & \mathbf M_{1,2} \\
        \mathbf O_2 & \mathbf M_{2,1} & \xi\left(U_{2}\right) & \mathbf M_{2,2} \\
        \mathbf O_3 & \mathbf O_4 & \mathbf I_{t_{2}} & \mathbf O_5 \\
        \mathbf O_6 & \mathbf O_7 & \mathbf O_8 & \mathbf I_{\left(n_{2}-t_{2}\right)}
        \end{pmatrix}\\
        &=\operatorname{rank}\begin{pmatrix}
        \xi\left(U_{1}\right) & \mathbf O_9 & \enspace \mathbf O_{10} & \mathbf O_{11} \\
        \mathbf O_2 & \mathbf M_{2,1} & \enspace \mathbf O_{12} & \mathbf O_{13} \\
        \mathbf O_3 & \mathbf O_4 & \enspace \mathbf I_{t_{2}} & \mathbf O_5 \\
        \mathbf O_6 & \mathbf O_7 & \enspace \mathbf O_8 & \enspace \mathbf I_{\left(n_{2}-t_{2}\right)}
        \end{pmatrix} \\
        &=a_{1}+n_{2}+\operatorname{rank}\left(\mathbf{M}_{2,1}\right).
    \end{align*}

    \noindent Here
    $\mathbf{M}_{1,1} \in \mathcal{M}_{1, 1}^{r}$, $\mathbf{M}_{2,2} \in \mathcal{M}_{2, 2}^{r}$, for $1 \leq r \leq s,$
    $\mathbf{M}_{1, 2} \in \mathcal{M}_{1, 2}$, $\operatorname{rank}(\mathbf{M}_{1,2}) \leq a_1 - \frac{d}{2}$,
    $\mathbf{M}_{2, 1} \in \mathcal{M}_{2, 1}$, $\operatorname{rank}(\mathbf{M}_{2,1}) \leq a_2 - \frac{d}{2}$,
    $\mathbf{O}_i$ for $1\leq i \leq 13$ are zero matrices of compatible sizes. Similarly we have
    $
        \operatorname{dim}\left(W_1 +S_{2}\right)=a_{2}+n_{1}+\operatorname{rank} \left(\mathbf{M}_{1,2}\right)
    $.
    Then we can calculate the dimensions of intersections
    \begin{gather*}
        \begin{array}{l}
            \operatorname{dim}\left(W_1  \cap S_{1}\right)=k+n_{2}-\left(a_{1}+n_{2}+\operatorname{rank}\left(\mathbf{M}_{2,1}\right)\right)
            =a_{2}-\operatorname{rank}\left(\mathbf{M}_{2,1}\right) \geq \frac{d}{2},
            \end{array}\\
        \begin{array}{l}
            \operatorname{dim}\left(W_1  \cap S_{2}\right)=k+n_{1}-\left(a_{2}+n_{1}+\operatorname{rank}\left(\mathbf{M}_{1,2}\right)\right)
            =a_{1}-\operatorname{rank}\left(\mathbf{M}_{1,2}\right) \geq \frac{d}{2},
        \end{array}
    \end{gather*}
    since $\operatorname{rank}\left(\mathbf{M}_{2,1}\right) \leq a_2-\frac{d}{2}$ and
    $\operatorname{rank}\left(\mathbf{M}_{1,2}\right) \leq a_1-\frac{d}{2}$.
    From Lemma \ref{inserting sufficient condition} we get the conclusion $\operatorname{d_S}\left(\mathcal{B},C\right) \geq d$.\\
\end{proof}


We consider the case $n=12, d=4, k=6$, $n_1=n_2=6,a_1=4,a_2=2,b_1=b_2=1,t_1=4,t_2=2$ as an example of Theorem 2.6.
Based on subcode construction in Lemma \ref{subcode construction}, we take matrix $\mathbf{M}_{1,1}$ from
$\mathcal{M}_{1,1}^r(q,4,2,2)$ subcode of $(q,4,2,1)$ MRD code for all $1 \leq r \leq s$,
matrix $\mathbf{M}_{2,2}$ from
$\mathcal{M}_{2,2}^r(q,2,4,2)$ subcode of $(q,2,4,1)$ MRD code for all $1 \leq r \leq s$, \\
where $s= \min(
    \frac{m(q,4,2,1)}{m(q,4,2,2)}, \frac{m(q,2,4,1)}{m(q,2,4,2)}
) = q^4$.
For matrix $\mathbf{M}_{1,2}$, we take it from $\mathcal{M}_{1,2}(q,4,4,2)$ MRD code with restricted rank $a_1 - \frac{d}{2}=2$.
Since $\mathcal{M}_{2,1}(q,2,2,2)$ with restricted rank $a_2 - \frac{d}{2}=0$ is zero matrix,
we take $\mathbf{M}_{2,1} = \mathbf{O}_{2 \times 2}$.
Then the lower bound of $q=2$ from Theorem 2.6 is
$$ \mathbf{A}_2\left(12,4,6\right) \geq \# C + \# \mathcal{B} = 1212418496 + 2154496 = 1214572992.$$  This is better than
$1212451264$ from Corollary 4.5 in \cite{CKMP} and the previously best known lower bound $1212491081$ from \cite{CMP}.
The new lower bounds from Theorem 2.6 for $\mathbf{A}_q(15,4,5),
\mathbf{A}_q(18,4,6), \mathbf{A}_q(18,6,6)$, $q=2,3,4,5,7,8,9$ are given in Corollary \ref{lb: detail bounds}. \\

From Theorem \ref{ct:multi-blocks} we totally obtain 92 better lower bounds of subspace codes than the lower bounds recorded in \cite{table}.
These lower bounds are for $\mathbf{A}_q(12,4,6)$, $\mathbf{A}_q(14,4,7)$, $\mathbf{A}_q(15,4,5)$,  $\mathbf{A}_2(16,4,4)$,
$ \mathbf{A}_q(16,4,5)$,$\mathbf{A}_q(16,4,8), \\ \mathbf{A}_q(17,4,5), \mathbf{A}_q(18,4,5),$ $\mathbf{A}_q(18,4,6)$, which are listed
in Table~\ref{tab:d=4} and for $\mathbf{A}_q(18,6,6)$ which are listed in Table~\ref{tab:d=6} for $q=2,3,4,5,7,8,9$.\\

If $n_1 \geq 2a_1$ and $n_2 \geq 2a_2$, we can insert more subspaces into the CDCs in Theorem \ref{ct:multi-blocks}.
These subspaces are spanned by the rows of matrix consisting of four matrices, which are from
two small CDCs and rank metric codes. But these generator matrices of small CDCs are placed in different positions
with Theorem \ref{ct:multi-blocks} such that the distances of subspaces are preserved.The result is given by the following Theorem 2.7.\\

\begin{theorem}
    \label{ct:parallel multiple blocks}
    With the same notation as Theorem \ref{ct:multi-blocks}, we assume that $n_i - t_i \geq a_i$, $b_i \leq c_i \leq a_i$ for $i=1,2$, and $c_1 + c_2 \leq k - \frac{d}{2}$.
    $\mathcal{M}_i(q,a_i,t_i,b_i,c_i)$ is a rank metric code with restricted rank $c_i$ and $D_i$ is an $(n_i - t_i, d, a_i)_q$ CDC code for $i=1,2$.
    The subset $\mathcal{E}$ of $k$-dimensional subspaces in $\mathbf{F}_q^n$ is constructed as follows.
    \begin{itemize}
        \item If $b_1 < \frac{d}{2}$ or  $b_2 < \frac{d}{2}$,
        we set $H_1 = \left\{ \mathbf{M}_1^1, \mathbf{M}_1^2, \cdots, \mathbf{M}_1^{s}\right\}$, where $\mathbf{M}_1^{r}$ is distinct arbitrary numbering element of $\mathcal{M}_1$ with restricted rank $c_1$ and
        $H_2 = \left\{ \mathbf{M}_2^1, \mathbf{M}_2^2, \cdots, \mathbf{M}_2^{s}\right\}$, where $\mathbf{M}_2^{r}$ is distinct arbitrary numbering element of $\mathcal{M}_2$ with restricted rank $c_2$,
        for $1 \leq r \leq s, s= \min( \# \mathcal{M}_1,  \# \mathcal{M}_2)$.
        Then
            $$
                \mathcal{E} =
                \left\{ \mathrm{R} \begin{pmatrix}
                    \mathbf{M}_1^{r} & \xi(U_1) & \mathbf{O}_1 & \mathbf{O}_2 \\
                    \mathbf{O}_3 & \mathbf{O}_4 & \mathbf{M}_2^{r} & \xi(U_2)
                    \end{pmatrix}\right\},
            $$
        where $\mathbf{M}_1^{r} \in H_1, \mathbf{M}_2^{r} \in H_2$ for $1 \leq r \leq s$, $U_i \in D_i$ for $i=1,2$,
        and $\mathbf{O}_i$ for $i=1,2,3,4$ are zero matrices of compatible size.

        \item If $b_1 = \frac{d}{2}$ and $b_2 = \frac{d}{2}$, then
                $$
                    \mathcal{E} = \left\{ \mathrm{R} \begin{pmatrix}
                    \mathbf{M}_1 & \xi(U_1) & \mathbf{O}_1 & \mathbf{O}_2 \\
                    \mathbf{O}_3 & \mathbf{O}_4 & \mathbf{M}_2 & \xi(U_2)
                    \end{pmatrix}\right\},
                $$
        where $\mathbf{M}_1 \in \mathcal{M}_1 , \mathbf{M}_2 \in \mathcal{M}_2$, $U_i \in D_i$ for $i=1,2$,
        and $\mathbf{O}_i$ for $i=1,2,3,4$ are zero matrices of compatible sizes.\\
    \end{itemize}

    Then $\mathcal{E}$ is an $(n,d,k)_q$ CDC code,
    the cardinality of $\mathcal{E}$ is
    \begin{align*}
        \# \mathcal{E} = \begin{cases}
            \# \mathcal{M}_1 \cdot \# \mathcal{M}_2 \cdot \# D_1 \cdot \# D_2, & \text{ if $b_1 = \frac{d}{2}$ and $b_2 = \frac{d}{2}$}, \\
            \Delta \cdot \# D_1 \cdot \# D_2, & \text{else},
        \end{cases}
    \end{align*}
    where $\Delta = \min(\# \mathcal{M}_1, \# \mathcal{M}_2).$\\

    Moreover, $\mathcal{B} \cup \mathcal{C} \cup \mathcal{E}$ is also an $(n,d,k)_q$ CDC code.
\end{theorem}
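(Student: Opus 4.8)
The plan is to verify the three assertions separately: that $\mathcal{E}$ is an $(n,d,k)_q$ CDC, that $\#\mathcal{E}$ is as stated, and that $\mathcal{B}\cup C\cup\mathcal{E}$ is a CDC, the last of which reduces to the two cross-distance bounds $\operatorname{d_S}(\mathcal{E},C)\ge d$ and $\operatorname{d_S}(\mathcal{E},\mathcal{B})\ge d$. Throughout I regard the $n$ columns as partitioned into four blocks of widths $t_1,\,n_1-t_1,\,t_2,\,n_2-t_2$, the same partition that underlies Theorem \ref{ct:multi-blocks}; this is what makes codewords of $\mathcal{E}$ and of $\mathcal{B}$ directly comparable.

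First, each $W\in\mathcal{E}$ is $k$-dimensional, since restricting its generator to blocks $2$ and $4$ yields $\left(\begin{smallmatrix}\xi(U_1)&\mathbf O\\\mathbf O&\xi(U_2)\end{smallmatrix}\right)$ of rank $a_1+a_2=k$. For the internal minimum distance I would copy the case analysis of Proposition \ref{ct:two-blocks} and Theorem \ref{ct:multi-blocks}: take two codewords with data $(U_1,U_2,\mathbf M_1,\mathbf M_2)$, $(U_1',U_2',\mathbf M_1',\mathbf M_2')$ and study the system $(\alpha_1,\alpha_2)\mathbf G_1=(\beta_1,\beta_2)\mathbf G_2$. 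If $U_1\ne U_1'$, block $2$ forces $\alpha_1\xi(U_1)\in U_1\cap U_1'$, whence $\dim(W\cap W')\le\dim(U_1\cap U_1')+a_2\le(a_1-\tfrac d2)+a_2=k-\tfrac d2$; the subcase $U_1=U_1'$, $U_2\ne U_2'$ is symmetric. If $U_1=U_1'$ and $U_2=U_2'$, full rank of the $\xi(U_i)$ forces $\alpha_i=\beta_i$, so blocks $1$ and $3$ leave the constraints $\alpha_1(\mathbf M_1-\mathbf M_1')=0$ and $\alpha_2(\mathbf M_2-\mathbf M_2')=0$. In Case~$2$ ($b_1=b_2=\tfrac d2$) whichever $\mathbf M_i$ differs has rank $\ge\tfrac d2$, bounding one kernel by $a_i-\tfrac d2$ while the other coordinate stays free, giving $\dim(W\cap W')\le k-\tfrac d2$; in Case~$1$ distinct indices $r\ne r'$ make both differences nonzero of ranks $\ge b_1,b_2$, so the two independent kernels sum to at most $(a_1-b_1)+(a_2-b_2)=k-(b_1+b_2)\le k-\tfrac d2$, using $b_1+b_2\ge\tfrac d2$.

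For the cardinality and for $\operatorname{d_S}(\mathcal{E},C)\ge d$ I would use $S_1=R(\mathbf O_{n_2\times n_1}\,|\,\mathbf I_{n_2})$ and $S_2=R(\mathbf I_{n_1}\,|\,\mathbf O_{n_1\times n_2})$ from Lemma \ref{disjoint lemma}. The first $a_1$ rows of a codeword of $\mathcal{E}$ are supported in the first $n_1$ coordinates and the last $a_2$ rows in the last $n_2$, so a rank computation gives $\dim(W\cap S_2)=a_1\ge\tfrac d2$ and $\dim(W\cap S_1)=a_2\ge\tfrac d2$, and Lemma \ref{inserting sufficient condition} yields $\operatorname{d_S}(\mathcal{E},C)\ge d$. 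The same splitting recovers the data from $W$: $W\cap S_2$ is the row space of $(\mathbf M_1\,|\,\xi(U_1))$, whose projection to block $2$ is $U_1$ and whose block-$1$ part then determines $\mathbf M_1$, and symmetrically $W\cap S_1$ determines $(\mathbf M_2,U_2)$. Hence distinct data give distinct codewords, giving $\#\mathcal{E}=\#\mathcal M_1\cdot\#\mathcal M_2\cdot\#D_1\cdot\#D_2$ in Case~$2$ and $\#\mathcal{E}=\Delta\cdot\#D_1\cdot\#D_2$ in Case~$1$ (distinct indices $r$ give distinct $\mathbf M_1^{r}$).

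The main obstacle is $\operatorname{d_S}(\mathcal{E},\mathcal{B})\ge d$, since here the two families have differently arranged generators and Lemma \ref{inserting sufficient condition} does not apply. The key idea is to project onto blocks $1$ and $3$, the coordinates of widths $t_1$ and $t_2$; call this $P$. A codeword $W_B\in\mathcal{B}$ carries the full-rank blocks $\xi(U_1')$, $\xi(U_2')$ (with $U_i'\in Q_i$) there, so $P(W_B)=U_1'\times U_2'$ and $P$ is injective on $W_B$, hence on $W_B\cap W_E$. A codeword $W_E\in\mathcal{E}$ carries $\mathbf M_1,\mathbf M_2$ there, so $P(W_E)$ has dimension $\operatorname{rank}(\mathbf M_1)+\operatorname{rank}(\mathbf M_2)\le c_1+c_2$. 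Since $P(W_B\cap W_E)\subseteq P(W_E)$ and $P$ is injective on $W_B\cap W_E$, I get $\dim(W_B\cap W_E)=\dim P(W_B\cap W_E)\le c_1+c_2\le k-\tfrac d2$, so $\operatorname{dis}(W_B,W_E)\ge d$. This is exactly where the hypothesis $c_1+c_2\le k-\tfrac d2$ is consumed. Combining this with $\operatorname{d_S}(\mathcal{E},C)\ge d$, with $\mathcal{E}$ being a CDC, and with $\mathcal{B}\cup C$ being a CDC by Theorem \ref{ct:multi-blocks}, shows that $\mathcal{B}\cup C\cup\mathcal{E}$ is an $(n,d,k)_q$ CDC.
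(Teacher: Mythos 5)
Your proposal is correct and follows essentially the same route as the paper: the same block decomposition and case analysis for the internal distance of $\mathcal{E}$, the same use of $S_1,S_2$ with Lemma \ref{inserting sufficient condition} for $\operatorname{d_S}(\mathcal{E},C)\ge d$, and for $\operatorname{d_S}(\mathcal{E},\mathcal{B})\ge d$ your projection onto the blocks of widths $t_1,t_2$ is just a coordinate-free restatement of the paper's bound $\dim(B\cap E)\le\dim(\{\alpha_1:\exists\beta_1,\ \alpha_1\xi(B_1)=\beta_1\mathbf M_1\})+\dim(\{\alpha_2:\exists\beta_2,\ \alpha_2\xi(B_2)=\beta_2\mathbf M_2\})\le c_1+c_2$. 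Your explicit injectivity argument for the cardinality count is slightly more detailed than the paper's (which simply asserts the count follows from the distance analysis), but it is the same construction.
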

\begin{proof}
    Since for all subspaces $E \in \mathcal{E}$, we have $\operatorname{rank}(\xi(U_i)) = a_i$ for $U_i \in D_i$ and $i=1,2$,
    then $\operatorname{rank}(\xi(E))=k$. The elements in $\mathcal{E}$ are $k$-dimensional subspaces in $\mathbf{F}_q^n$.\\

    We analyse the following cases.\\

    (1) If $b_1 < \frac{d}{2}$ or  $b_2 < \frac{d}{2}$, let $W_1, W_2 \in \mathcal{E}$ be two $k$-dimensional subspaces in $\mathbf{F}_q^n$, by construction, we have
    \begin{align*}
        W_1 &= \mathrm{R}(\mathbf{G}_1), \enspace \mathbf{G}_1 = \begin{pmatrix}
            \mathbf{M}_1^{r} & \xi(U_1) & \mathbf{O}_1 & \mathbf{O}_2 \\
            \mathbf{O}_3 & \mathbf{O}_4 & \mathbf{M}_2^{r} & \xi(U_2)
        \end{pmatrix},\\
        W_2 & = \mathrm{R}(\mathbf{G}_2), \enspace \mathbf{G}_2 = \begin{pmatrix}
            \mathbf{M}_1^{r^{\prime}} & \xi(U_1^{\prime}) & \mathbf{O}_1 & \mathbf{O}_2 \\
            \mathbf{O}_3 & \mathbf{O}_4 & \mathbf{M}_2^{r^{\prime}} & \xi(U_2^{\prime})
        \end{pmatrix},
    \end{align*}
    where $\mathbf{M}_i^{r}, \mathbf{M}_i^{r^{\prime}} \in H_i$, $\operatorname{rank}(\mathbf{M}_i^{r}) \leq c_i$,
    $\operatorname{rank}(\mathbf{M}_i^{r^{\prime}}) \leq c_i$ for $i=1,2$, $1 \leq r \leq r^{\prime} \leq \Delta$ and
    $U_i$,$U_i^{\prime} \in D_i$ for $i=1,2$.
    The intersection of $W_1$ and $W_2$ is
    \begin{align*}
        W_1 \cap W_2
        = \left\{ ( \alpha_{1}, \alpha_{2}) \mathbf{G}_1
        = (\beta_{1}, \beta_{2}) \mathbf{G}_2 : \alpha_{i}, \beta_{i} \in \textbf{F}_{q}^{a_{i}}, i=1,2 \right\}.
    \end{align*}
    If $U_1 \neq U_1^{\prime}$, then
    $
            \operatorname{dim}(W_1 \cap W_2) \leq \operatorname{dim}(U_1 \cap U_1^{\prime}) + a_2
            \leq a_1 - \frac{d}{2} + a_2 = k - \frac{d}{2}.
    $
    Similarly, if $U_2 \neq U_2^{\prime}$, then $\operatorname{dim}(W_1 \cap W_2) \leq  a_1 + \operatorname{dim}(U_2 \cap U_2^{\prime}) \leq k - \frac{d}{2}.$
    It remains to analyse the case for $U_1 = U_1^{\prime}$, $U_2 = U_2^{\prime}$ and $r \neq r^{\prime}$.
    In this case, for such $\alpha_{i}, \beta_{i}$ for $i=1,2$, we have that $\alpha_{1} = \beta_{1}$ and $\alpha_{2} = \beta_{2}$ since $\xi(U_1)$ and $\xi(U_2)$ are full rank matrices.
    It implies that
    \begin{align*}
        \operatorname{dim}(W_1 \cap W_2) & \leq \operatorname{dim}( \{ \alpha_{1}:\alpha_{1}(\mathbf{M}_1^{r} - \mathbf{M}_1^{r^{\prime }}) = 0, \alpha_{1} \in \mathbf{F}_q^{a_{1}} \}) \\
        & + \operatorname{dim}( \{ \alpha_{2}:\alpha_{2}(\mathbf{M}_2^{r} - \mathbf{M}_2^{r^{\prime }}) = 0, \alpha_{2} \in \mathbf{F}_q^{a_{2}}\})\\
        &=\operatorname{dim}(\operatorname{kernel} (\mathbf{M}_1^{r} - \mathbf{M}_1^{r^{\prime }}) ) +\operatorname{dim}(\operatorname{kernel} (\mathbf{M}_2^{r} - \mathbf{M}_2^{r^{\prime }})) \\
        & \leq a_{1}-b_{1}+a_{2}-b_{2} \leq k-\frac{d}{2}.
    \end{align*}
    Thus for this case, $\operatorname{dis}(W_1, W_2) \geq d$.\\

    (2) If $b_1 = b_2 = \frac{d}{2}$, let $W_1, W_2 \in \mathcal{E}$ be two $k$-dimensional subspaces in $\mathbf{F}_q^n$, by construction, we have
    \begin{align*}
        W_1 &= R(\mathbf{G}_1), \enspace \mathbf{G}_1 = \begin{pmatrix}
            \mathbf{M}_1 & \xi(U_1) & \mathbf{O}_1 & \mathbf{O}_2 \\
            \mathbf{O}_3 & \mathbf{O}_4 & \mathbf{M}_2 & \xi(U_2)
        \end{pmatrix},  \\
        W_2 &= R(\mathbf{G}_2), \enspace \mathbf{G}_2 = \begin{pmatrix}
            \mathbf{M}_1^{\prime} & \xi(U_1^{\prime}) & \mathbf{O}_1 & \mathbf{O}_2 \\
            \mathbf{O}_3 & \mathbf{O}_4 & \mathbf{M}_2^{\prime} & \xi(U_2^{\prime})
        \end{pmatrix},
    \end{align*}
    where $\mathbf{M}_i, \mathbf{M}_i^{{\prime}} \in \mathcal{M}_i$, $\operatorname{rank}(\mathbf{M}_i) \leq c_i$,
    $\operatorname{rank}(\mathbf{M}_i^{\prime}) \leq c_i$, for $i=1,2$,
    $U_i$,$U_i^{\prime} \in D_i$ for $i=1,2$.
    Similar to the proof for the case $b_1 < \frac{d}{2}$ or $b_2 < \frac{d}{2}$,
    if $U_1 \neq U_1^{\prime}$ or $U_2 \neq U_2^{\prime}$, we have
    $\operatorname{dim}(W_1 \cap W_2) \leq k - \frac{d}{2}$.
    If $U_1 = U_1^{\prime}$ and $U_2 = U_2^{\prime}$, then $\mathbf{M}_1 \neq \mathbf{M}_1^{\prime}$ or $\mathbf{M}_2 \neq \mathbf{M}_2^{\prime}$.
    For this case, if $\mathbf{M}_1 \neq \mathbf{M}_1^{\prime}$, we have
        \begin{align*}
            \operatorname{dim}(W_1 \cap W_2) & \leq \operatorname{dim}\left( \{ \alpha_{1}:\alpha_{1}(\mathbf{M}_1 - \mathbf{M}_1^{\prime }) = 0, \alpha_{1} \in \mathbf{F}_q^{a_{1}} \} \right) + a_2\\
            &=\operatorname{dim}\left(\operatorname{kernel} \left(\mathbf{M}_1 - \mathbf{M}_1^{\prime }\right)\right) + a_2 \\
            & \leq a_{1}-b_{1}+a_{2} = k-\frac{d}{2}.
        \end{align*}
    If $\mathbf{M}_1 = \mathbf{M}_1^{\prime}$, then $\mathbf{M}_2 \neq \mathbf{M}_2^{\prime}$, we have $
        \operatorname{dim}(W_1 \cap W_2) \leq a_1 + (a_2 - b_2) = k - \frac{d}{2}.
    $
    Then in this case $\operatorname{dis}(W_1, W_2) \geq d$.
    We can calculate the cardinality of $\mathcal{E}$ directly from the proof of the above two cases.\\

    From Theorem \ref{ct:multi-blocks} $d_S(\mathcal{B},C) \geq d$. We analyse the distances of the codewords in $\mathcal{B}$ and $\mathcal{E}$.
    If $B \in \mathcal{B}$ and $E \in \mathcal{E}$ we have
    \begin{align*}
        \mathbf{G}_1 = \begin{pmatrix}
            \xi (B_{1}) & \mathbf{M}_{1,1} & \mathbf{O}_1 & \mathbf{M}_{1,2} \\
            \mathbf{O}_2 & \mathbf{M}_{2,1} & \xi(B_{2}) & \mathbf{M}_{2,2}
            \end{pmatrix}, \enspace
        B = \mathrm{R}(\mathbf{G}_1),
    \end{align*}
    where $\mathbf{M}_{1,2} \in \mathcal{M}_{1,2}$, $\operatorname{rank}(\mathbf{M}_{1,2})\leq a_1 - \frac{d}{2}$, $\mathbf{M}_{2,1} \in \mathcal{M}_{2,1}$, $\operatorname{rank}(\mathbf{M}_{2,1})\leq a_2 - \frac{d}{2}$,
    $B_i \in Q_i$ for $i=1,2$, $\mathbf{M}_{i,i} \in \mathbf{M}_{i,i}^r$ for $i=1,2$ and $1 \leq r \leq s$,
    \begin{align*}
        \mathbf{G}_2 = \begin{pmatrix}
            \mathbf{M}_1 & \xi(E_1) & \mathbf{O}_1 & \mathbf{O}_3 \\
            \mathbf{O}_2 & \mathbf{O}_4 & \mathbf{M}_2 & \xi(E_2)
        \end{pmatrix}, \enspace
        E = \mathrm{R}(\mathbf{G}_2),
    \end{align*}
    where $\mathbf{M}_{i} \in \mathcal{M}_{i}$, $\operatorname{rank}(\mathbf{M}_{i}) \leq c_i$ for $i=1,2$ and $E_i \in D_i$ for $i=1,2$.
    The intersection of $B$ and $E$ in $\mathbf{F}_q^n$ is
    \begin{align*}
        B \cap E
        = \left\{ ( \alpha_{1}, \alpha_{2}) \mathbf{G}_1
        = (\beta_{1}, \beta_{2}) \mathbf{G}_2 : \alpha_{i}, \beta_{i} \in \textbf{F}_{q}^{a_{i}}, i=1,2 \right\}.
    \end{align*}
    Since $\operatorname{rank}(\mathbf{M}_i) \leq c_i$ for $i=1,2$, and $\xi(B_i)$ is full rank matrix for $i=1,2$,
        \begin{align*}
            \operatorname{dim}(B \cap E)
            & \leq \operatorname{dim}(\{ \alpha_1: \exists \beta_1, \alpha_1 \xi(B_1) = \beta_1 \mathbf{M}_1, \alpha_{1}, \beta_{1} \in \mathbf{F}_q^{a_1} \}) \\
            & + \operatorname{dim}(\{ \alpha_2: \exists \beta_2, \alpha_2 \xi(B_2) = \beta_2 \mathbf{M}_2, \alpha_{2}, \beta_{2} \in \mathbf{F}_q^{a_2} \}) \\
            & \leq c_1 + c_2 \leq k - \frac{d}{2}.
        \end{align*}
    Then $\operatorname{dis}(B,E) \geq 2k - 2(k-\frac{d}{2}) \geq d$.\\

    It remains to analyse the distances of the codewords in $\mathcal{E}$ and $C$.
    Similar to the proof of Lemma \ref{disjoint lemma}, we can prove
    \begin{align*}
        \operatorname{dim}(E \cap S_1) &= k + n_2 - \operatorname{dim}(E + S_1) = a_2 \geq \frac{d}{2}, \\
        \operatorname{dim}(E \cap S_2) &= k + n_1 - \operatorname{dim}(E + S_2) = a_1 \geq \frac{d}{2}.
    \end{align*}
    From Lemma \ref{inserting sufficient condition}  $\operatorname{d_S}(\mathcal{E}, C) \geq d$.\\
\end{proof}

For example we consider the case $n=16,k=8,d=6$ with $n_1=n_2=8,a_1=a_2=4,b_1=2,b_2=1,c_1=3,c_2=2,t_1=t_2=4$.
Since $b_1< \frac{d}{2}=3, b_2 < \frac{d}{2}=3$, we take $\mathbf{M}_1^{r}$ from all the arbitrary numbering distinct elements
$\{ \mathbf{M}_1^{1}, \mathbf{M}_1^{2}, \cdots, \mathbf{M}_1^{s} \}$ of $\mathcal{M}_1(q,4,4,2,3)$ MRD code with rank restricted to $3$,
$\mathbf{M}_2^{r}$ from all the arbitrary numbering distinct elements $\{ \mathbf{M}_2^{1}, \mathbf{M}_2^{2}, \cdots, \mathbf{M}_2^{s} \}$ of $\mathcal{M}_2(q,4,4,1,2)$ MRD code with rank restricted to $2$,
for all $1 \leq r \leq s=\min(\# \mathcal{M}_1, \# \mathcal{M}_2 )$.
Then we have $\mathcal{E} = \min( \# \mathcal{M}_1, \# \mathcal{M}_2 ) \cdot \textbf{A}_q(4,6,4) \cdot \textbf{A}_q(4,6,4)=\min(m(q,4,4,2,3),m(q,4,4,1,2)).$
Then from Theorem \ref{ct:multi-blocks},
\begin{align*}
    \textbf{A}_2(16,6,8) & \geq \# \mathcal{C} + \# \mathcal{B} \\
    & = 282927683836352 + 1048576 =  282927684884928,
\end{align*}
which is the same as the previously best lower bound $282927684884928$ from Corollary 4.5 in \cite{CKMP}.
From Theorem \ref{ct:parallel multiple blocks}, we insert $\mathcal{E}$ with
$$
\# \mathcal{E} = \min\{ m(2,4,4,2,3), m(2,4,4,1,2) \} = \min\{2776, 7576\} = 2776
$$ codewords
to enlarge the code. This gives a better lower bound $\textbf{A}_2(16,6,8) \geq \# \mathcal{C} + \# \mathcal{B} + \# \mathcal{E} \geq 282927684887704$.
The new lower bounds from Theorem \ref{ct:parallel multiple blocks} for $\textbf{A}_q(16,6,8)$, $q=2,3,4,5,7,8,9$
are given in Corollary \ref{lb: detail bounds}.\\

This inserting construction for $t_1=a_1, t_2=a_2$ gives $28$ new lower bounds for
$\mathbf{A}_q(12,6,6), \mathbf{A}_q(16,6,8), \mathbf{A}_q(16,8,8), \mathbf{A}_q(19,6,6)$ for $q=2,3,4,5,7,8,9$, which is listed in Table \ref{tab:parallel multiple blocks d6}
and Table \ref{tab:parallel multiple blocks d8}
\footnote{For simplicity, we only consider the special case $t_1=a_1, t_2=a_2$ for avoiding too many parameters to calculate.}.\\


\section{Multilevel type inserting}
The multilevel construction and the linkage type construction are both productive constructions for constant dimension subspace codes.
In some papers these two constructions were combined to obtain better lower bounds.
In \cite{Li} a multilevel linkage construction was given. A parallel multilevel linkage type construction in \cite{LCF} was proposed as an inserting construction to the parallel linkage construction in \cite{CHWX}.
These constructions are the special case of the CKMP combining construction in Lemma 4.1 in \cite{CKMP}.\enspace However the subspace codes lifted by FDRM codes in multilevel construction can not be directly inserted into the CKMP combining construction.\\

In Lemma \ref{ct:FDRM} and Lemma \ref{ct:FDRM improved} we give a construction for an union of FDRM codes with special shaped Ferrers diagrams.
Based on this construction, a multilevel type parameter-controlled flexible inserting construction for
identifying vectors with \textit{special form} can be inserted into the CKMP combining construction.
The multilevel type inserting construction is given in Proposition \ref{multilevel blocks}.\\

Because a specification for optimal binary constant-weight code to yield best cardinality CDC in the multilevel construction is an unsolved problem \cite{Silberstein1},
we give two simple cases for the multilevel type construction in Proposition \ref{multilevel blocks} below.\enspace Our multilevel type inserting construction with two identifying vectors totally leads to $49$ better lower bounds for $\mathbf{A}_q(12,4,6),
\mathbf{A}_q(14,4,7), \\ \mathbf{A}_q(16,4,8), \mathbf{A}_q(18,4,6), \mathbf{A}_q(18,4,9), \mathbf{A}_q(18,6,9)$ and $\mathbf{A}_q(19,4,6)$.
For example, the present best lower bound $\mathbf{A}_2(18,6,9) \geq 92715451$-$56585415680$ is from Corollary 4.5 in \cite{CKMP}.
From Proposition \ref{multilevel blocks} below we have $\mathbf{A}_2(18,$ $6,9) \geq 9271545179590910976$, which is better than previously known bounds.
For $\mathbf{A}_q(12,4,6)$, $\mathbf{A}_q(14,4,7)$ and $\mathbf{A}_q(18,6,9)$, this construction improves all lower bounds from Theorem \ref{ct:multi-blocks}.\enspace All new lower bounds are listed in Table \ref{tab:mutilevel type linkage d4} and Table \ref{tab:mutilevel type linkage d6}.\enspace Our multilevel type inserting construction with multiple identifying vectors in the second case contributes $134$ better lower bounds compared with \cite{table}.\enspace For $\mathbf{A}_{2}(16,4,4), \mathbf{A}_q(18,6,6), \mathbf{A}_q(19,6,6)$ and $q=2,3,4,5,7,8,9$, this construction improves all lower bounds from Theorem \ref{ct:multi-blocks} and Theorem \ref{ct:parallel multiple blocks}.\enspace It also leads new lower bounds for $\mathbf{A}_q(10,$ $4,5)$, $\mathbf{A}_q(14,6,7)$ and $\mathbf{A}_q(18,8,9)$.
These $36$ lower bounds are listed in Table \ref{tab:mutilevel type linkage case2 d4}, Table \ref{tab:mutilevel type linkage case2 d6} and Table \ref{tab:mutilevel type linkage case2 d8}.\\

Notice that for a $k$-dimensional subspace codes in $\mathbf{F}_q^n$ with identifying vector $v$,
if the subspaces lifted by FDRM codes with Ferrers diagram $\mathcal{F}$ corresponding to $v$
satisfy the condition in Lemma \ref{inserting sufficient condition},
then the multilevel construction for such an identifying vector can be inserted into the CKMP combining construction.
There are $k$-dimensional subspaces with identifying vectors of \textit{special form} in $\mathbf{F}_q^n$ can be adapted to satisfy the condition in Lemma \ref{inserting sufficient condition}.\\
\begin{definition}\label{def:identifying vector}
    Let $n, k, d_f, \delta_1, \delta_2, u_1, u_2, \Delta$ be eight non-negative integers satisfying $\delta_1 + \delta_2 = n, u_1 + u_2 = k, u_1 \geq d_f, u_2 \geq d_f, \delta_1 \geq \Delta + u_1, \delta_2 \geq u_2 + d_f.$
    The \textit{special form} of identifying vectors $v$ is defined as
    $$
        (\underbrace{\overbrace{0 \cdots 0}^{\Delta} \overbrace{1 \cdots 1}^{u_1} 0 \cdots 0}_{\delta_1} \underbrace{\overbrace{1 \cdots 1}^{u_2} 0 \cdots 0}_{\delta_2}),
    $$
    that is, the continuous $u_1$ ones are in the first $\delta_1$ coordinates, and the first $u_2$ coordinates in the last $\delta_2$ coordinates are all ones.\\
\end{definition}

For a $k$-dimensional subspace $U$ in $\mathbf{F}_q^n$ with \textit{special form} $i(U)=v$, the Ferrers diagram $\mathcal{F}_U$ of $\mathcal{F}(U)$ is
$$
\mathcal{F}_U =
\begin{matrix}
    \overbrace{  \begin{array}{l}  \bullet \bullet \ldots \bullet \\  \bullet \bullet \ldots \bullet \\  \end{array}}^{\delta_1 -(u_1+\Delta)}  &
    \overbrace{ \begin{array}{l}  \bullet \bullet \ldots \bullet \bullet \\  \bullet \bullet \ldots \bullet \bullet \end{array}}^{\delta_2-u_2} \! \left. \begin{array}{l} \\ \\ \end{array} \right\}\small{u_1} \\
    & \; \begin{array}{l} \bullet \bullet \ldots \bullet \bullet \\  \bullet \bullet \ldots \bullet \bullet \\ \end{array} \! \left. \begin{array}{l} \\ \\ \end{array} \right\}\small{u_2}.
\end{matrix}
$$
By lifting the element of the rank metric code $\mathcal{M}$ with Ferrers diagram $\mathcal{F}_U$,
the $\xi(U)$ has form of
$$
    \begin{pmatrix}
        \mathbf{O}_1 & \mathbf{I}_{u_1} & \mathbf{M}_1 & \mathbf{O}_2 & \mathbf{M}_2 \\
        \mathbf{O}_3 & \mathbf{O}_4 & \mathbf{O}_5 & \mathbf{I}_{u_2} & \mathbf{M}_3
    \end{pmatrix},
$$
where
\begin{small}
$
    \begin{pmatrix}
        \mathbf{M}_1 & \mathbf{M}_2 \\
        \mathbf{O}_5 & \mathbf{M}_3
    \end{pmatrix} \in\mathcal{M},
$
$\mathbf{O}_{1} = \mathbf{O}_{u_1 \times \Delta},$
$\mathbf{O}_{i}$ for $i=2,3,4,5$
\end{small}
are zero matrices of compatible size.
If the rank of matrix $\mathbf{M}_2$ in $\xi(U)$ is restricted, the subspaces satisfy the condition in Lemma \ref{inserting sufficient condition}.
Thus the problem is how to construct such rank-restricted rank metric code with the Ferrers diagram $\mathcal{F}_U$.\\

In Lemma \ref{ct:FDRM} we give a construction for FDRM code with the Ferrers diagram in the shape of $\mathcal{F}_U$.
For simplicity we denote the Ferrers diagram in such \textit{special shape} by $\mathcal{F}$.
To construct the FDRM code, the intuition is that
we partition the $\mathcal{F}$ into small pieces and
use small rank metric codes. Then we recombine elements of these small rank metric codes
to form the required rank metric code.
Based on this construction, it is flexible to restrict the ranks of the small matrices in the generator matrices of the subspace.\\
\begin{definition}
    We will use the Ferrers diagram $\mathcal{F}$ of the following form.
    $$
    \mathcal{F} = \begin{matrix}
        \mathcal{F}_1 &
        \mathcal{F}_3 \\
        &
        \mathcal{F}_2
        \end{matrix},
    $$
    , with the same notation used in Definition \ref{def:identifying vector}, $\mathcal{F}_1$ is formed by dots of size of $u_1 \times (\delta_1 -(u_1+\Delta))$, $\mathcal{F}_2$ is formed by dots with size of $u_2 \times (\delta_2 - u_2)$,
    and $\mathcal{F}_3$ is formed by dots with size of $u_1 \times (\delta_2 - u_2)$.\\
\end{definition}

In the following Lemma we give a construction for FDRM code  $\left[\mathcal{F}, \gamma, d_f\right]$ with Ferrers diagram $\mathcal{F}$.
The FDRM code is constructed by several small rank metric codes, which corresponds to small Ferrers diagrams $\mathcal{F}_i$ for $i=1,2,3$.\\

\begin{lemma}
    \label{ct:FDRM}
    Let $\delta_1, \delta_2, u_1, u_2, b_1, b_2, n, k, d_f, \Delta$ be ten non-negative integers satisfying $\delta_1 + \delta_2 = n$, $u_1 + u_2 = k$, $u_1 \geq d_f$, $u_2 \geq d_f$, $\delta_1 \geq \Delta + u_1$, $\delta_2 \geq u_2 + d_f$,
    and $b_1 + b_2 \geq d_f$, $1 \leq b_1 \leq d_f$, $1 \leq b_2 \leq d_f$.
    $\mathcal{M}_1(q, u_1, \delta_1 - \Delta - u_1, b_1)$, $\mathcal{M}_2(q, u_2, \delta_2 - u_2, b_2)$,
    $\mathcal{M}_3(q, u_1, \delta_2 - u_2, d_f)$ are linear rank metric codes.
    We construct a subset $\mathcal{M}$ of $\mathbf{F}_q^{k \times (n-k-\Delta)}$.
    \begin{itemize}
        \item If $0 \leq \delta_1 - \Delta - u_1  < b_1$, we set $b_2 = d_f$ and

        $$
            \mathcal{M} = \left\{ \begin{pmatrix}
                \mathbf{O}_1 & \mathbf{M}_3 \\
                \mathbf{O}_2 & \mathbf{M}_2
            \end{pmatrix}:  \mathbf{M}_i \in \mathcal{M}_i,\enspace i=2,3 \right\},
        $$

        where $\mathbf{O}_1 = \mathbf{O}_{u_1 \times (\delta_1 - \Delta - u_1)}$ and
        $\mathbf{O}_2 = \mathbf{O}_{u_2 \times (\delta_1 - \Delta - u_1)}$.

        \item If $b_1 \leq \delta_1 - \Delta - u_1  < d_f$, we set
        $H_1 = \{ \mathbf{M}_1^1, \mathbf{M}_1^2, \cdots, \mathbf{M}_1^{s} \}$, where $\mathbf{M}_1^{r}$ is the arbitrary numbering distinct element of $\mathcal{M}_1$,
        and $H_2 = \{ \mathbf{M}_2^1, \mathbf{M}_2^2, \cdots, \mathbf{M}_2^{s} \}$ of $\mathcal{M}_2$,
        where $\mathbf{M}_2^{r}$ is the arbitrary numbering distinct element of $\mathcal{M}_2$, for $1 \leq r \leq s=\min\{ \# \mathcal{M}_1, \# \mathcal{M}_2 \}$.
        Then

        $$
            \mathcal{M} = \left\{ \begin{pmatrix}
                \mathbf{M}_1^{r} & \mathbf{M}_3 \\
                \mathbf{O}_1 & \mathbf{M}_2^{r}
            \end{pmatrix} : \mathbf{M}_i^{r} \in H_i \text{ for } i=1,2, 1 \leq r \leq s, \mathbf{M}_3 \in \mathcal{M}_3 \right\},
        $$

        where $\mathbf{O}_1 = \mathbf{O}_{u_2 \times (\delta_1 - \Delta - u_1)}$.

        \item If $d_f \leq \delta_1 - \Delta - u_1 $, we set $b_1 = b_2 = d_f$ and

        $$
        \mathcal{M} = \left\{
            \begin{pmatrix}
                \mathbf{M}_1 & \mathbf{M}_3 \\
                \mathbf{O}_1 & \mathbf{M}_2
            \end{pmatrix}: \mathbf{M}_i \in \mathcal{M}_i,\enspace i=1,2,3
        \right\},
        $$

        where $\mathbf{O}_1 = \mathbf{O}_{u_2 \times (\delta_1 - u_1 - \Delta)}$.\\
    \end{itemize}

    Then $\mathcal{M}$ is a FDRM code $\left[\mathcal{F}, * , d_f\right]$ in $\mathbf{F}_q^{k \times (n-k-\Delta)}$ ,
    with cardinality $$
        \# \mathcal{M} = \begin{cases}
            \Lambda_2 \cdot \Lambda_3 &  0 \leq \delta_1 - \Delta - u_1 < b_1, \\
           \min ( \# \mathcal{M}_1,  \# \mathcal{M}_2 ) \cdot \Lambda_3 &  b_1 \leq \delta_1 - \Delta - u_1 < d_f,\\
           \Lambda_1 \cdot \Lambda_2 \cdot \Lambda_3  & \delta_1 - \Delta - u_1 \geq d_f,
        \end{cases}
    $$
    where $\Lambda_1 = m(q, u_1, \delta_1 - \Delta - u_1, d_f)$,
    $\Lambda_2 = m(q, u_2, \delta_2 - u_2, d_f)$,
    $\Lambda_3 = m(q, u_1, \delta_2 - u_2, d_f)$.
\end{lemma}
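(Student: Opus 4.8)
The plan is to verify three things in turn: that every element of $\mathcal{M}$ is supported on the Ferrers diagram $\mathcal{F}$, that any two distinct elements differ in rank by at least $d_f$, and that the three parametrizations are injective so the stated cardinalities hold. The support claim is immediate: in all three cases a codeword has the block form $\begin{pmatrix} \mathbf{M}_1 & \mathbf{M}_3 \\ \mathbf{O}_1 & \mathbf{M}_2 \end{pmatrix}$, and the only block forced to vanish, the bottom-left $u_2 \times (\delta_1 - \Delta - u_1)$ block, is exactly the region of $\mathbf{F}_q^{k \times (n-k-\Delta)}$ lying outside $\mathcal{F}$. Since $\delta_1 - u_1 - \Delta \geq 0$ the row lengths of $\mathcal{F}$ are non-increasing, so $\mathcal{F}$ is a genuine Ferrers diagram whose sub-blocks $\mathcal{F}_1, \mathcal{F}_2, \mathcal{F}_3$ are the full rectangles carrying $\mathbf{M}_1, \mathbf{M}_2, \mathbf{M}_3$.

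For the distance I would first isolate the one linear-algebra fact that drives the hard case: for any block upper-triangular matrix, $\operatorname{rank}\begin{pmatrix} A & B \\ \mathbf{O} & C \end{pmatrix} \geq \operatorname{rank}(A) + \operatorname{rank}(C)$. I would prove this by projecting the row space $R$ of the block matrix onto the first $\delta_1 - \Delta - u_1$ coordinates: the image equals the row space of $A$, of dimension $\operatorname{rank}(A)$, while the kernel of the projection restricted to $R$ contains the row space of $(\mathbf{O}\,|\,C)$, of dimension $\operatorname{rank}(C)$; rank--nullity then yields $\dim R \geq \operatorname{rank}(A) + \operatorname{rank}(C)$. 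The cruder consequences $\operatorname{rank} \geq \operatorname{rank}(A)$ and $\operatorname{rank} \geq \operatorname{rank}(C)$, obtained from a single block of columns or rows, will already settle the two outer cases.

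I would then run the case analysis on the difference $\mathbf{M} - \mathbf{M}'$ of two distinct codewords, writing $A, B, C$ for the differences in the $\mathbf{M}_1, \mathbf{M}_3, \mathbf{M}_2$ positions. In the first case ($0 \leq \delta_1 - \Delta - u_1 < b_1$, with $b_2 = d_f$) the left columns vanish and the rank is that of $\begin{pmatrix} B \\ C \end{pmatrix}$; if the $\mathbf{M}_3$-parts differ then $\operatorname{rank}(B) \geq d_f$, and if the $\mathbf{M}_2$-parts differ then $\operatorname{rank}(C) \geq b_2 = d_f$, either being a lower bound for the stacked rank. In the third case ($\delta_1 - \Delta - u_1 \geq d_f$, with $b_1 = b_2 = d_f$) each of $\mathcal{M}_1, \mathcal{M}_2, \mathcal{M}_3$ already has rank distance $d_f$, so whichever block is nonzero gives $\operatorname{rank} \geq \max\{\operatorname{rank}(A), \operatorname{rank}(C)\} \geq d_f$, or $\operatorname{rank}(B) \geq d_f$ when only $B \neq 0$. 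The decisive case is the middle one ($b_1 \leq \delta_1 - \Delta - u_1 < d_f$): the indices of $\mathbf{M}_1^r$ and $\mathbf{M}_2^r$ are coupled through the common label $r$, precisely so that $r \neq r'$ forces both $A \neq 0$ and $C \neq 0$, whence $\operatorname{rank}(A) \geq b_1$, $\operatorname{rank}(C) \geq b_2$, and the additivity inequality gives $\operatorname{rank} \geq b_1 + b_2 \geq d_f$; while $r = r'$ leaves only $B$ able to differ, with $\operatorname{rank}(B) \geq d_f$. This coupling is the crux, and it is where the subcode idea of Lemma \ref{subcode construction} is implicitly invoked.

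Finally, the cardinalities follow once the parametrizations are seen to be injective: distinct $(\mathbf{M}_2, \mathbf{M}_3)$, distinct $(r, \mathbf{M}_3)$, and distinct $(\mathbf{M}_1, \mathbf{M}_2, \mathbf{M}_3)$ produce codewords differing in at least one block, giving $\Lambda_2\Lambda_3$, $\min\{\#\mathcal{M}_1, \#\mathcal{M}_2\}\cdot\Lambda_3$, and $\Lambda_1\Lambda_2\Lambda_3$ respectively, after taking each $\mathcal{M}_i$ maximal so that $\#\mathcal{M}_i$ equals the corresponding $m(q,\cdot,\cdot,\cdot)$. I expect the only real obstacle to be the middle case: confirming that the common-index coupling genuinely forces both off-diagonal rank drops to occur at once, so that the bound $b_1 + b_2 \geq d_f$ can be applied, and checking that here the additivity inequality, rather than the single-block estimate, is exactly what is required.
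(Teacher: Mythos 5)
Your proposal is correct and follows essentially the same route as the paper: the same three-case analysis on which blocks of the difference matrix are nonzero, with the middle case resolved by the coupling of $\mathbf{M}_1^r$ and $\mathbf{M}_2^r$ through the common index $r$ and the inequality $\operatorname{rank}\begin{pmatrix} A & B \\ \mathbf{O} & C \end{pmatrix} \geq \operatorname{rank}(A) + \operatorname{rank}(C)$. The only difference is presentational: you state and prove this block-triangular rank superadditivity as an explicit lemma, whereas the paper invokes it silently in case (2).
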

\begin{proof}
    Let $W_1, W_2$ be different elements in $\mathcal{M}$.\\

    (1) If $0 \leq \delta_1 - \Delta - u_1 < b_1$,
    we have
    \begin{align*}
        W_1 = \begin{pmatrix}
            \mathbf{O}_2 & \mathbf{M}_3 \\
            \mathbf{O}_1 & \mathbf{M}_2
        \end{pmatrix}, \enspace
        W_2 = \begin{pmatrix}
            \mathbf{O}_2 & \mathbf{M}_3^{\prime} \\
            \mathbf{O}_1 & \mathbf{M}_2^{\prime}
        \end{pmatrix},
    \end{align*}
    where $\mathbf{M}_i, \mathbf{M}_i^{\prime} \in \mathcal{M}_i$ for $i=2,3$. If $\mathbf{M}_3 \neq \mathbf{M}_3^{\prime}$,
    $ \operatorname{d_r}(W_1, W_2) \geq
        \operatorname{rank}(
        \mathbf{M}_3 - \mathbf{M}_3^{\prime}) \geq d_f.$
    If $ \mathbf{M}_3 = \mathbf{M}_3^{\prime}$, we have $\mathbf{M}_2 \neq \mathbf{M}_2^{\prime}$,
    then $\operatorname{d_r}(W_1, W_2) \geq d_f.$
    Clearly the cardinality of $\mathcal{M}$ is given by $\# \mathcal{M}_2(q, u_2, \delta_2 - u_2, d_f) \times \# \mathcal{M}_3$.\\

    (2) If $b_1 \leq \delta_1 - \Delta - u_1 < d_f$, we have
    \begin{align*}
        W_1 = \begin{pmatrix}
            \mathbf{M}_1^{r} & \mathbf{M}_3 \\
            \mathbf{O}_1 & \mathbf{M}_2^{r}
        \end{pmatrix}, \enspace
        W_2 = \begin{pmatrix}
            \mathbf{M}_1^{r^\prime} & \mathbf{M}_3^{\prime} \\
            \mathbf{O}_1 & \mathbf{M}_2^{r^\prime}
        \end{pmatrix},
    \end{align*}
    where $1 \leq r \leq r^\prime \leq s$, $\mathbf{M}_i^{r}, \mathbf{M}_i^{r^\prime} \in H_i$ for $i=1,2$,
    and $\mathbf{M}_3, \mathbf{M}_3^{\prime} \in \mathcal{M}_3$.
    If $\mathbf{M}_3 \neq \mathbf{M}_3^{\prime}$, the proof is the same as case (1).
    If $\mathbf{M}_3 = \mathbf{M}_3^{\prime}$, we have $\mathbf{M}_1^{r} \neq \mathbf{M}_1^{r^{\prime}}$
    and $\mathbf{M}_2^{r} \neq \mathbf{M}_2^{r^{\prime}}$, then $
        \operatorname{d_r}(W_1, W_2) \geq \operatorname{rank}(\mathbf{M}_1^{r} - \mathbf{M}_1^{r^{\prime}}) +
        \operatorname{rank}(\mathbf{M}_2^{r} - \mathbf{M}_2^{r^{\prime}}) \geq b_1 + b_2 \geq d_f.
    $
    Clearly the cardinality of $\mathcal{M}$ is given by
    $\min( \# \mathcal{M}_1, \# \mathcal{M}_2) \cdot \# \mathcal{M}_3$.\\

    (3) If $d_f \leq \delta_1 - \Delta - u_1$, we have
    \begin{align*}
        W_1 = \begin{pmatrix}
            \mathbf{M}_1 & \mathbf{M}_3 \\
            \mathbf{O}_1 & \mathbf{M}_2
        \end{pmatrix}, \enspace
        W_2 = \begin{pmatrix}
            \mathbf{M}_1^{\prime} & \mathbf{M}_3^{\prime} \\
            \mathbf{O}_1 & \mathbf{M}_2^{\prime}
        \end{pmatrix},
    \end{align*}
    where $\mathbf{M}_i,\mathbf{M}_i^{\prime} \in \mathcal{M}_i$ for $i=1,2,3$.
    If $\mathbf{M}_3 \neq \mathbf{M}_3^{\prime}$, the proof is the same as case (1).
    If $\mathbf{M}_3 = \mathbf{M}_3^{\prime}$, we have $\mathbf{M}_2 \neq \mathbf{M}_2^{\prime}$ or $\mathbf{M}_1 \neq \mathbf{M}_1^{\prime}$,
    it implies that $\operatorname{d_r}(W_1, W_2) \geq d_f$.
    Clearly the cardinality of $\mathcal{M}$ is given by $\# \mathcal{M}_1(q, u_1, \delta_1 - \Delta - u_1, d_f) \cdot \# \mathcal{M}_2(q, u_2, \delta_2 - u_2, d_f) \cdot \# \mathcal{M}_3$.
    The conclusion is proved.\\
\end{proof}

If $\delta_1 - \Delta - u_1 \geq d_f$, the cardinality of $\mathcal{M}$ can be further improved by subcode construction.\\


\begin{lemma}
    \label{ct:FDRM improved}
    Let $\delta_1, \delta_2, u_1, u_2, c_1, c_2, n, k, d_f, \Delta$ be ten non-negative integers satisfying $\delta_1 + \delta_2 = n$, $u_1 + u_2 = k$, $u_1 \geq d_f, u_2 \geq d_f$, $\delta_1 \geq \Delta + u_1 + d_f$, $\delta_2 \geq u_2 + d_f$,
    and $c_1 + c_2 \geq d_f$, $1 \leq c_1 \leq d_f$, $1 \leq c_2 \leq d_f$.
    For integer $s$ and all $1 \leq j \leq s$, $\mathcal{M}_{1,j}(q, u_1, \delta_1 - \Delta - u_1, d_f)$, $\mathcal{M}_{2,j}(q, u_2, \delta_2 - u_2, d_f)$ are rank metric codes.
    $\mathcal{M}_3(q,u_1, \delta_2 - u_2, d_f )$ is another rank metric code.
    We assume  $\mathbf{M} \in \mathcal{M}_{i,j}, \mathbf{M}^{\prime} \in \mathcal{M}_{i, j^{\prime}}$ for all $1 \leq j < j^{\prime} \leq s$ and $i=1,2$ satisfying
    $\mathbf{M} \neq \mathbf{M}^{\prime}$ and
    $\operatorname{rank}(\mathbf{M} - \mathbf{M}^{\prime}) \geq c_i$.\\

    Then $\mathcal{M} = \bigcup_{j=1}^{s} \mathcal{M}_{j}$ is an $(q,k,n-k-\Delta,d_f)$ rank metric code, where
    \begin{align*}
        \mathcal{M}_{j} = \left\{ \begin{pmatrix}
            \mathbf{M}_1 & \mathbf{M}_3 \\
            \mathbf{O}_1 & \mathbf{M}_2
        \end{pmatrix}: \mathbf{M}_{i} \in \mathcal{M}_{i,j} \enspace \text{for}\enspace i=1,2, \mathbf{M}_3 \in \mathcal{M}_3 \right \}.
    \end{align*}
    The cardinality of $\mathcal{M}$ satisfies
    \begin{align*}
        \# \mathcal{M} = s & \cdot  m(q, u_1, \delta_1 - \Delta - u_1 , d_f) \cdot m(q, u_2, \delta_2 - u_2, d_f) \\
                       &\cdot m(q, u_1, \delta_2 - u_2, d_f),
    \end{align*}

    \noindent where
    $s = \min \left( \frac{m\left(q, u_1, \delta_1 - \Delta - u_1 , c_1\right)}{m\left(q, u_1,  \delta_1 - \Delta - u_1, d_f\right)},
    \frac{m\left(q, u_2, \delta_2 - u_2, c_2\right)}{m\left(q, u_2, \delta_2 - u_2, d_f\right)}
    \right).$

\end{lemma}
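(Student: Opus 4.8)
The plan is to mirror the distance analysis of case (3) of Lemma \ref{ct:FDRM} together with the coset bookkeeping supplied by the subcode construction in Lemma \ref{subcode construction}. First I would record the dimension check: every codeword is a block upper-triangular matrix of size $k\times(n-k-\Delta)$, since $u_1+u_2=k$ and $(\delta_1-\Delta-u_1)+(\delta_2-u_2)=n-k-\Delta$, so indeed $\mathcal M\subseteq\mathbf F_q^{k\times(n-k-\Delta)}$. The whole argument rests on two elementary rank facts for a block matrix $\begin{pmatrix}A & B\\ \mathbf O & C\end{pmatrix}$: its rank is at least $\operatorname{rank}(A)+\operatorname{rank}(C)$ (project the row space onto the left columns and apply rank--nullity), and its rank is at least the rank of any submatrix, in particular $\operatorname{rank}(B)$.

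For the minimum distance I would take two distinct codewords $W_1\in\mathcal M_j$ and $W_2\in\mathcal M_{j'}$ with $j\le j'$, writing
\[
W_1-W_2=\begin{pmatrix}\mathbf M_1-\mathbf M_1' & \mathbf M_3-\mathbf M_3'\\ \mathbf O_1 & \mathbf M_2-\mathbf M_2'\end{pmatrix},
\]
and then split into two cases. If $j=j'$ the situation is exactly case (3) of Lemma \ref{ct:FDRM} with $\mathcal M_{i,j}$ in the role of $\mathcal M_i$: when $\mathbf M_3\ne\mathbf M_3'$ the submatrix bound gives $\operatorname{rank}(W_1-W_2)\ge\operatorname{rank}(\mathbf M_3-\mathbf M_3')\ge d_f$, and when $\mathbf M_3=\mathbf M_3'$ the difference is block-diagonal, so its rank equals $\operatorname{rank}(\mathbf M_1-\mathbf M_1')+\operatorname{rank}(\mathbf M_2-\mathbf M_2')\ge d_f$ since at least one summand is a nonzero difference inside a code of minimum rank distance $d_f$. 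If $j\ne j'$ I would invoke the standing hypothesis on $\mathcal M_{i,j}$ and $\mathcal M_{i,j'}$: the differences satisfy $\operatorname{rank}(\mathbf M_i-\mathbf M_i')\ge c_i$ for $i=1,2$, so the block-triangular bound yields $\operatorname{rank}(W_1-W_2)\ge c_1+c_2\ge d_f$. Hence the minimum rank distance of $\mathcal M$ is at least $d_f$ in every case.

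For the cardinality I would first note that the distance bound with $d_f\ge 1$ forces $W_1\ne W_2$ whenever $j\ne j'$, so the union $\mathcal M=\bigcup_{j=1}^s\mathcal M_j$ is disjoint and $\#\mathcal M=\sum_{j=1}^s\#\mathcal M_j$. Because the three blocks occupy fixed, non-overlapping positions, the assembly map $(\mathbf M_1,\mathbf M_2,\mathbf M_3)\mapsto W$ is injective, giving $\#\mathcal M_j=\#\mathcal M_{1,j}\cdot\#\mathcal M_{2,j}\cdot\#\mathcal M_3$; taking the three ingredient codes to be MRD yields the product $m(q,u_1,\delta_1-\Delta-u_1,d_f)\,m(q,u_2,\delta_2-u_2,d_f)\,m(q,u_1,\delta_2-u_2,d_f)$, independent of $j$. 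Finally $s$ is the number of admissible indices: applying Lemma \ref{subcode construction} with $d_s=d_f$ and $d_m=c_i$ produces $\tfrac{m(q,u_1,\delta_1-\Delta-u_1,c_1)}{m(q,u_1,\delta_1-\Delta-u_1,d_f)}$ cosets in the first block position and $\tfrac{m(q,u_2,\delta_2-u_2,c_2)}{m(q,u_2,\delta_2-u_2,d_f)}$ in the second, and multiplying the product above by $s$ gives the stated count.

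The step I expect to need the most care is the synchronization of the two coset families: the single label $j$ must simultaneously index a coset of $\mathcal M_{1,j}$ and a coset of $\mathcal M_{2,j}$, so only the $\min$ of the two coset counts is usable, which is precisely where the $\min$ in the formula for $s$ originates and where the hypothesis $c_1+c_2\ge d_f$ is consumed. I would therefore verify that the cross-coset rank guarantee is genuinely available for each block position independently (it is, since the subcode construction can be run separately in the two positions), and only then combine the two bounds to obtain $c_1+c_2\ge d_f$.
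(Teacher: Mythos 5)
Your argument is correct and follows essentially the same route as the paper's proof: the same block-triangular rank bounds, the same reduction of the $j=j'$ case to case (3) of Lemma \ref{ct:FDRM}, and the same use of the cross-coset hypothesis $\operatorname{rank}(\mathbf{M}_i-\mathbf{M}_i')\geq c_i$ to get $c_1+c_2\geq d_f$ when $j\neq j'$ (the paper merely orders the case split by whether $\mathbf{M}_3=\mathbf{M}_3'$ first, which changes nothing). Your cardinality discussion is in fact more explicit than the paper's one-line remark, and correctly identifies the subcode construction of Lemma \ref{subcode construction} as the source of the $\min$ in the formula for $s$.
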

\begin{proof}
    Let $W_1 \in \mathcal{M}_{j}, W_2 \in \mathcal{M}_{j^{\prime}}$, by construction, we have
    \begin{align*}
        W_1 = \begin{pmatrix}
            \mathbf{M}_1 & \mathbf{M}_3 \\
            \mathbf{O}_1 & \mathbf{M}_2
        \end{pmatrix}, \enspace
        W_2 = \begin{pmatrix}
            \mathbf{M}_1^{\prime} & \mathbf{M}_3^{\prime} \\
            \mathbf{O}_1 & \mathbf{M}_2^{\prime}
        \end{pmatrix}
        \end{align*} for $\mathbf{M}_i \in \mathcal{M}_{i,j}$, $\mathbf{M}_i^{\prime} \in \mathcal{M}_{i,j^{\prime}}$ for $i=1,2$, $1 \leq j \leq j^{\prime} \leq s$.
    If $\mathbf{M}_3 \neq \mathbf{M}_3^{\prime}$, then $\operatorname{d_r}(W_1, W_2) \geq \operatorname{rank}(\mathbf{M}_3 - \mathbf{M}_3^{\prime}) \geq d_f$.
    If $\mathbf{M}_3 = \mathbf{M}_3^{\prime}$ and $j=j^{\prime}$, the proof is the same as case (3) in Lemma \ref{ct:FDRM}.
    If $\mathbf{M}_3 = \mathbf{M}_3^{\prime}$ and $j \neq j^{\prime}$,
    we have $\mathbf{M}_1 \neq  \mathbf{M}_1^{\prime}$ and $\mathbf{M}_2 \neq  \mathbf{M}_2^{\prime}$,
    then $\operatorname{d_r}(W_1, W_2) \geq \operatorname{rank}(\mathbf{M}_1 - \mathbf{M}_1^{\prime}) + \operatorname{rank}(\mathbf{M}_2 - \mathbf{M}_2^{\prime}) \geq
    c_1 + c_2 \geq d_f$.
    The cardinality of $\mathcal{M}$ can be calculated directly from the proof.
\end{proof}


Based on the construction in Lemma \ref{ct:FDRM} and Lemma \ref{ct:FDRM improved},
we obtain the new construction by inserting the multilevel type construction into the combining construction in \cite{CKMP}.\\

\begin{lemma}
    \label{ct:lifting FDMR}
    For a given identifying vector $\mathbf{v}$ with special form in Definition \ref{def:identifying vector},
    let $\mathcal{M}$ be an $(q,k,n-k-\Delta,d_f)$ rank metric code
    with Ferrers diagram $\mathcal{F}$ corresponding to $\mathbf{v}$ constructed by Lemma \ref{ct:FDRM} or
    Lemma \ref{ct:FDRM improved}. We require that for $\textbf{M} \in \mathcal{M}_3$ rank metric code in construction satisfying
    $\operatorname{rank}(\textbf{M}) \leq u_1 - d_f$.
    By lifting the $\mathcal{M}$, $\mathcal{F}_c$ is an $(n,2d_f,k)_q$ CDC code such that for all codewords $U \in \mathcal{F}_c$, $i(U)=\mathbf{v}$.
    The cardinality satisfies

    \begin{align*}
        \# \mathcal{F}_c = \begin{cases}
            \Lambda_2 \cdot \Lambda_3 &  0 \leq \delta_1 - \Delta - u_1 < b_1, \\
            \min ( \# \mathcal{M}_1,  \# \mathcal{M}_2 ) \cdot \Lambda_3 &  b_1 \leq \delta_1 - \Delta - u_1 < d_f,\\
            s \cdot \Lambda_1 \cdot \Lambda_2 \cdot \Lambda_3  & \delta_1 - \Delta - u_1 \geq d_f,\\
        \end{cases}
    \end{align*}

    \noindent where $\Lambda_1 = m(q, u_1, \delta_1 - \Delta - u_1, d_f)$, $\Lambda_2 = m(q, u_2, \delta_2 - u_2, d_f)$,\\

    \noindent $\Lambda_3 = m(q, u_1, \delta_2 - u_2, d_f, u_1 - d_f),$
    $s = \min \left( \frac{m\left(q, u_1, \delta_1 - \Delta - u_1 , c_1\right)}{m\left(q, u_1,  \delta_1 - \Delta - u_1, d_f\right)},
    \frac{m\left(q, u_2, \delta_2 - u_2, c_2\right)}{m\left(q, u_2, \delta_2 - u_2, d_f\right)}
    \right)$.
\end{lemma}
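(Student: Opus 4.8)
The plan is to recognize that $\mathcal{F}_c$ is simply the lift of the Ferrers diagram rank–metric code $\mathcal{M}$ attached to $\mathbf{v}$, so the statement splits into two independent claims: that this lift is a constant dimension code of distance $2d_f$ all of whose codewords carry the prescribed identifying vector, and that its cardinality equals $\#\mathcal{M}$ computed under the extra rank constraint on $\mathcal{M}_3$. First I would record that imposing $\operatorname{rank}(\mathbf{M}) \leq u_1 - d_f$ on the top–right block $\mathbf{M}_3 \in \mathcal{M}_3$ merely passes to a sub-collection of the code produced in Lemma \ref{ct:FDRM} (resp. Lemma \ref{ct:FDRM improved}); since a subset of a rank–metric code cannot have smaller minimum rank distance, the restricted $\mathcal{M}$ is still an FDRM code $[\mathcal{F}, *, d_f]$ with Ferrers diagram $\mathcal{F}$ in the shape attached to $\mathbf{v}$.

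Next I would invoke the lifting principle. Each $M \in \mathcal{M}$ is placed into the free (non-pivot, non-forced-zero) positions of a reduced row echelon matrix whose pivot columns are fixed by $\mathbf{v}$; hence every resulting $\xi(U)$ has its pivots exactly at the support of $\mathbf{v}$, so $i(U)=\mathbf{v}$ for all $U \in \mathcal{F}_c$. For two codewords $U, U'$ obtained from $M, M' \in \mathcal{M}$, the pivot columns form an identity block in both $\xi(U)$ and $\xi(U')$, so the intersection system $\alpha\,\xi(U) = \beta\,\xi(U')$ forces $\alpha = \beta$ on those coordinates and reduces to $\alpha(M-M')=0$ on the free coordinates. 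Thus
$$
\dim(U \cap U') = \dim \ker (M - M') = k - \operatorname{rank}(M - M'),
$$
and therefore $\operatorname{dis}(U, U') = 2\operatorname{rank}(M - M') \geq 2 d_f$ by the minimum distance of $\mathcal{M}$. This is the same computation underlying the multilevel construction and Lemma \ref{lb:hamming lower bound}. Consequently $\mathcal{F}_c$ is an $(n, 2d_f, k)_q$ CDC with constant identifying vector $\mathbf{v}$, and because $M \mapsto U$ is injective we have $\#\mathcal{F}_c = \#\mathcal{M}$.

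Finally I would recompute $\#\mathcal{M}$ case by case exactly as in Lemma \ref{ct:FDRM} and Lemma \ref{ct:FDRM improved}, the only change being that the factor counting the admissible top–right blocks is now the number of elements of $\mathcal{M}_3(q, u_1, \delta_2 - u_2, d_f)$ of rank at most $u_1 - d_f$. By the notation $m(q, a, b, d, u) = 1 + \sum_{i=d}^{u} r(q, a, b, d, i)$ together with Theorem \ref{Delsarte}, this count is exactly $\Lambda_3 = m(q, u_1, \delta_2 - u_2, d_f, u_1 - d_f)$. Substituting $\Lambda_3$ for the previously full factor $m(q, u_1, \delta_2 - u_2, d_f)$ in each of the three cardinality formulas of those lemmas, and retaining the subcode factor $s$ in the range $\delta_1 - \Delta - u_1 \geq d_f$, yields the stated three-case formula.

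The step I expect to be the main obstacle is confirming that the rank restriction on $\mathbf{M}_3$ is compatible with the distance analysis in every branch of Lemma \ref{ct:FDRM} and Lemma \ref{ct:FDRM improved}. The point to verify is that whenever the distinguishing block is $\mathbf{M}_3 \neq \mathbf{M}_3'$, the difference still has rank $\geq d_f$; this holds because $\mathbf{M}_3, \mathbf{M}_3'$ remain codewords of the full distance-$d_f$ code $\mathcal{M}_3$, so their difference is governed by that code and not merely by the rank-restricted subset, and in the remaining subcases the distance is already controlled by $\mathbf{M}_1, \mathbf{M}_2$. Once this compatibility and the Delsarte count of rank-restricted elements are in hand, everything else is bookkeeping transferred verbatim from the two preceding lemmas and the lifting principle.
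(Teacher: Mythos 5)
Your proposal is correct and follows essentially the same route as the paper's proof: lift the FDRM code into the reduced row echelon positions determined by $\mathbf{v}$, deduce $\operatorname{dis}(W_1,W_2)\geq 2\operatorname{d_R}(\mathcal{M})\geq 2d_f$ from the standard lifting computation, and read off the cardinality from Lemma \ref{ct:FDRM} and Lemma \ref{ct:FDRM improved} with the $\mathcal{M}_3$ factor replaced by the rank-restricted count $\Lambda_3$. In fact you supply more detail than the paper does, in particular the explicit check that restricting the rank of $\mathbf{M}_3$ only passes to a subset and so cannot decrease the minimum rank distance.
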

\begin{proof}
    By construction, we have

    $$
        \mathcal{F}_c = \left\{ \mathrm{R} \begin{pmatrix}
            \mathbf{O}_1 & \mathbf{I}_{u_1} & \mathbf{M}_1 & \mathbf{O}_2 & \mathbf{M}_3 \\
            \mathbf{O}_3 & \mathbf{O}_4 & \mathbf{O}_5 & \mathbf{I}_{u_2} & \mathbf{M}_2
        \end{pmatrix}:  \begin{pmatrix}
            \mathbf{M}_1 & \mathbf{M}_3 \\
            \mathbf{O}_5 & \mathbf{M}_2
        \end{pmatrix} \in\mathcal{M} \right\},
    $$

    \noindent where $\operatorname{rank}(\mathbf{M}_3) \leq u_1 - d_f$,
    $\mathbf{O}_1 =\mathbf{O}_{u_1 \times \Delta},$ and $\mathbf{O}_i$ for $i=2,3,4,5$ are zero matrices of compatible size.\\

    For $W \in \mathcal{F}_c$,  $W$ is $k$-dimensional subspace in $\mathbf{F}_q^n$ since $\operatorname{rank}(\xi(W))=k.$
    Then let $W_1, W_2$ be two $k$-dimensional subspaces in $\mathcal{F}_c$, $W_1 \neq W_2$, we have
    $\operatorname{dis}(W_1, W_2) \geq 2\operatorname{d_R}(\mathcal{M}) \geq 2d_f$.
    The cardinality of $\mathcal{F}_c$ can be calculated from the cardinality of $\mathcal{M}$ given in Lemma \ref{ct:FDRM} and
    Lemma \ref{ct:FDRM improved}. \\
\end{proof}

\begin{proposition}
    \label{multilevel blocks}
    Let $C$ be a subspace code as in Theorem \ref{linkage construction} with $n = n_1 + n_2$ and $n_1 \geq k$, $n_2 \geq k$.
    Set $\delta_1 = n_1, \delta_2 = n_2, d_f = \frac{d}{2}$ and $H$ is a set consisting of vectors with special form as in Definition \ref{def:identifying vector}.
    For $v_j \neq v_{j^{\prime}} \in H $, we assume $\operatorname{d_h}(v_j, v_{j^{\prime}}) \geq d$.\\

    Then $\mathcal{L}_f = \bigcup_{j} \mathcal{L}_{j}$ is an $(n,d,k)_q$ CDC, where
    $\mathcal{L}_{j}$ is an $(n,d,k)_q$ CDC code lifted by $\mathcal{M}_j$ corresponding to $v_j$ for all $1 \leq j \leq \# H$ as in Lemma \ref{ct:lifting FDMR}.\\

    Moreover, $\mathcal{L} = \mathcal{L}_f \cup C$ is also an $(n,d,k)_q$ CDC.
\end{proposition}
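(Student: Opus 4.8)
The plan is to verify the distance condition for $\mathcal{L}=\mathcal{L}_f\cup C$ by separating the pairs of distinct codewords into three types: both in a single level $\mathcal{L}_j$; in two distinct levels $\mathcal{L}_j,\mathcal{L}_{j'}$; and one in $\mathcal{L}_f$ with the other in $C$. The first two types establish that $\mathcal{L}_f$ is itself an $(n,d,k)_q$ CDC, and the third is exactly the inserting step, which is where the real content lies.

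For the within-level case, each $\mathcal{L}_j$ is lifted from the rank-metric code $\mathcal{M}_j$ as in Lemma \ref{ct:lifting FDMR}, hence is already an $(n,2d_f,k)_q=(n,d,k)_q$ CDC, so any two of its codewords are at distance at least $d$. For the between-level case, I would use the fact, guaranteed by the lifting in Lemma \ref{ct:lifting FDMR}, that every codeword $W\in\mathcal{L}_j$ has identifying vector $i(W)=v_j$. Thus for $W\in\mathcal{L}_j$ and $W'\in\mathcal{L}_{j'}$ with $j\neq j'$, Lemma \ref{lb:hamming lower bound} gives $\operatorname{dis}(W,W')\geq d_h(i(W),i(W'))=d_h(v_j,v_{j'})\geq d$, invoking the standing Hamming-distance hypothesis on $H$. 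Combining these two cases shows $\mathcal{L}_f=\bigcup_j\mathcal{L}_j$ is an $(n,d,k)_q$ CDC.

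For the third case I would show that every $W\in\mathcal{L}_f$ satisfies the hypotheses of Lemma \ref{inserting sufficient condition} with respect to the subspaces $S_1,S_2$ of Lemma \ref{disjoint lemma}, so that $\operatorname{dis}(W,W'')\geq d$ for all $W''\in C$. Writing $\xi(W)$ in the block form of Lemma \ref{ct:lifting FDMR} and computing by the dimension formula $\dim(W\cap S_i)=\dim(W)+\dim(S_i)-\dim(W+S_i)$, I first note that adding $S_1=R(\mathbf{O}_{n_2\times n_1}\mid\mathbf{I}_{n_2})$ clears the last $n_2$ columns of $\xi(W)$, leaving the full-rank block $(\mathbf{O}_1\mid\mathbf{I}_{u_1}\mid\mathbf{M}_1)$; hence $\dim(W+S_1)=n_2+u_1$ and $\dim(W\cap S_1)=u_2\geq d_f=\frac{d}{2}$. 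Symmetrically, adding $S_2=R(\mathbf{I}_{n_1}\mid\mathbf{O}_{n_1\times n_2})$ clears the first $n_1$ columns, reducing the surviving block (the one containing $\mathbf{I}_{u_2}$, $\mathbf{M}_2$ and $\mathbf{M}_3$) to rank $u_2+\operatorname{rank}(\mathbf{M}_3)$ after using $\mathbf{I}_{u_2}$ to eliminate $\mathbf{M}_2$, giving $\dim(W\cap S_2)=u_1-\operatorname{rank}(\mathbf{M}_3)$.

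The main obstacle, and the decisive point of the whole argument, is precisely this last computation: the rank restriction $\operatorname{rank}(\mathbf{M}_3)\leq u_1-d_f$ built into Lemma \ref{ct:lifting FDMR} is exactly what is needed to force $\dim(W\cap S_2)=u_1-\operatorname{rank}(\mathbf{M}_3)\geq d_f=\frac{d}{2}$. Without this restriction the multilevel-type codewords would generically fail the $S_2$ condition and could not be inserted. With both $\dim(W\cap S_1)\geq\frac{d}{2}$ and $\dim(W\cap S_2)\geq\frac{d}{2}$ in hand, Lemma \ref{inserting sufficient condition} applies to each $W\in\mathcal{L}_f$, and assembling the three cases yields that $\mathcal{L}=\mathcal{L}_f\cup C$ is an $(n,d,k)_q$ CDC, as claimed.
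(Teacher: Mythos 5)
Your proposal is correct and follows essentially the same route as the paper: the same three-way case split (within a level via Lemma \ref{ct:lifting FDMR}, between levels via Lemma \ref{lb:hamming lower bound} and the Hamming-distance hypothesis on $H$, and against $C$ via the dimension computations $\dim(W\cap S_1)=u_2$ and $\dim(W\cap S_2)=u_1-\operatorname{rank}(\mathbf{M}_3)$ feeding into Lemma \ref{inserting sufficient condition}). You also correctly identify the rank restriction $\operatorname{rank}(\mathbf{M}_3)\leq u_1-\frac{d}{2}$ as the decisive ingredient, exactly as in the paper's argument.
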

\begin{proof}
    It is clearly that the elements of $\mathcal{L}$ are $k$-dimensional subspaces in $\mathbf{F}_q^n$ from Lemma \ref{ct:lifting FDMR}
    and Theorem \ref{linkage construction}.\\

    Let $W_1, W_2$ be two elements in $\mathcal{L}$. We analyse the following cases.\\

    (1) When $W_1 \in \mathcal{L}_j$ and $W_2 \in \mathcal{L}_{j^{\prime}}$ for $1 \leq j \leq j^{\prime} \leq \# H$,
    if $j=j^{\prime}$, $\operatorname{dis}(W_1,W_2) \geq d$ from Lemma \ref{ct:lifting FDMR}, if $j \neq j^{\prime}$, $\operatorname{dis}(W_1,W_2) \geq \operatorname{d_h}(i(W_1), i(W_2)) \geq d$ from Lemma \ref{lb:hamming lower bound}.\\

    (2) If $W_1 \in \mathcal{L}_j$ for $1 \leq j \leq \# H$ and $W_2 \in C$, we have
    $$
        W_1 = R\begin{pmatrix}
            \mathbf{O}_1 & \mathbf{I}_{u_1} & \mathbf{M}_1 & \mathbf{O}_2 & \mathbf{M}_3 \\
            \mathbf{O}_3 & \mathbf{O}_4 & \mathbf{O}_5 & \mathbf{I}_{u_2} & \mathbf{M}_2
        \end{pmatrix}
    $$ where
    $
    \begin{small}
        \begin{pmatrix}
            \mathbf{M}_1 & \mathbf{M}_3 \\
            \mathbf{O}_5 & \mathbf{M}_2
        \end{pmatrix}\end{small} \in \mathcal{M}_j, $
    $\operatorname{rank}(\mathbf{M}_3) \leq u_1 - \frac{d}{2}$,
    $\mathbf{O}_1 =\mathbf{O}_{u_1 \times \Delta}$ and $\mathbf{O}_i$ for\\

    \noindent $i=2,3,4,5$ are zero matrices of compatible sizes.
    With the same notations used in Lemma \ref{disjoint lemma}, we have
    \begin{align*}
        \operatorname{dim}(S_2 + W_1) &= n_1 + u_2 + \operatorname{rank}(\mathbf{M}_3) \leq n_1 + u_2 + u_1 - \frac{d}{2},\\
        \operatorname{dim}(S_1 + W_1) &= u_1 + n_2,
    \end{align*}
    then $\operatorname{dim}(S_2 \cap W_1) \geq \frac{d}{2}$ and
    $\operatorname{dim}(S_1 \cap W_1) = u_2 \geq \frac{d}{2}$.
    The Lemma \ref{inserting sufficient condition} gives that $\operatorname{dis}(W_1, W_2) \geq d$.\\ \\
\end{proof}

We consider $H$ with identifying vectors in the following two cases.\\

I. $H = \{ v_1, v_2 \}$. Set $\Delta=0, u_1 \geq d, u_2 \geq \frac{d}{2}, n_1 - u_1 \geq \frac{d}{2}, n_2 - u_2 \geq \frac{d}{2}$,
where
\begin{gather*}
    v_1 = (\underbrace{\overbrace{1 \cdots 1}^{u_1} 0 \cdots 0}_{n_1} \underbrace{\overbrace{1 \cdots 1}^{u_2} 0 \cdots 0}_{n_2}), \\
    v_2 = (\underbrace{\overbrace{1 \cdots 1}^{u_1 - \frac{d}{2}} 0 \cdots 0}_{n_1} \underbrace{\overbrace{1 \cdots 1}^{u_2 + \frac{d}{2}} 0 \cdots 0}_{n_2})
\end{gather*}
By construction in Proposition \ref{multilevel blocks}, we obtain the new $(n,d,k)_q$ CDC code with the lower bounds given in Corollary \ref{lb:multilevel blocks}.\\

For example $n=12, d=4, k=6$, $n_1 = n_2 = 6$, the identifying vectors with parameters
$u_1 = 4, u_2 = 2$ and $u_1^{\prime} = 2, u_2^{\prime} = 4$ are given by $v_1 = (111100 \enspace 110000)$, $v_2=(110000 \enspace 111100)$.
Since $n_i - u_i \geq \frac{d}{2}, n_i - u_i^{\prime} \geq \frac{d}{2}$ for $i=1,2$,
we consider CDCs which are lifted by rank metric code $\mathcal{M}$ constructed in Lemma \ref{ct:FDRM improved}.
When $q=2$,
the CDC $\mathcal{L}_{1}$ with the identifying vectors $v_1$ has cardinality of $\# \mathcal{L}_{1} = 2154496,$
the CDC $\mathcal{L}_{2}$ with the identifying vectors $v_2$ has cardinality of $\# \mathcal{L}_{2} = 4096$.
Thus we have
\begin{align*}
    A_2(12,4,6) & \geq \# C + \# \mathcal{L}_{1} + \# \mathcal{L}_{2} \\
    & \geq 1212418496 + 2154496 + 4096 = 1214577088.
\end{align*}
The new lower bounds for $A_q(12,4,6)$ and $A_q(18,6,9)$ for $q=2,3,4,5,7,8,9$ are given in Corollary \ref{lb: detail bounds}.
These new lower bounds improve the lower bounds in Theorem \ref{ct:multi-blocks} and are better than the lower bounds in \cite{table}.\\

When $\Delta=0$, $\delta_1 - u_1 \geq \frac{d}{2}$, $\delta_2 - u_2 \geq \frac{d}{2}$, the CDC lifted by the
rank metric code $\mathcal{M}$ constructed in Lemma \ref{ct:FDRM improved} is a special case of
\textit{block construction} in Proposition \ref{ct:two-blocks}.\\

II. $H = \{ v_1, v_2, \cdots, v_{\lambda} \}$ for $1 \leq \lambda \leq \lfloor \frac{n_1}{u_1} \rfloor$. Set $u_1 \geq \frac{d}{2}, u_2 \geq \frac{d}{2}$,
where
\begin{align*}
    v_1 & = (\underbrace{\overbrace{1 \cdots 1}^{u_1} 0 \cdots 0}_{n_1} \underbrace{\overbrace{1 \cdots 1}^{u_2} 0 \cdots 0}_{n_2}), \\
    v_2 & = (\underbrace{\overbrace{0 \cdots 0}^{u_1} \overbrace{1 \cdots 1}^{u_1} 0 \cdots 0}_{n_1} \underbrace{\overbrace{1 \cdots 1}^{u_2} 0 \cdots 0}_{n_2}), \\
    v_{\lambda} &= (\underbrace{\overbrace{0 \cdots 0}^{u_1} \overbrace{0 \cdots 0}^{u_1} \overbrace{1 \cdots 1}^{u_1} \overbrace{0 \cdots 0}^{n_1 - \lambda u_1} }_{n_1} \underbrace{\overbrace{1 \cdots 1}^{u_2} 0 \cdots 0}_{n_2})
\end{align*}
It is easy to check that for $1 \leq j < j^{\prime} \leq \# H$ and $v_j, v_{j^{\prime}} \in H$, $\operatorname{dis}(v_j, v_j^{\prime}) \geq 2u_1 \geq d.$
By construction in Proposition \ref{multilevel blocks}, we obtain the new $(n,d,k)_q$ CDC code with the lower bounds given in Corollary \ref{lb:multilevel blocks case2}.\\

For example $n=14, d=6, k=7$, $n_1 = n_2 = 7$, $u_1=3, u_2=4$, $c_1=2, c_2=1$, the identifying vectors are
given by $v_1 = (1110000 \enspace 1111000)$, $v_2=(0001110 \enspace 1111000)$ for $\lambda=\lfloor \frac{n_1}{u_1} \rfloor=2$.
We consider CDCs which are lifted by rank metric code $\mathcal{M}$ constructed in Lemma \ref{ct:FDRM} and Lemma \ref{ct:FDRM improved}.
When $q=2$,
the CDC $\mathcal{L}_{1}$ with identifying vector $v_1$ has cardinality of $\# \mathcal{L}_{1} =
s \cdot m(q,3,4,3) \cdot m(q,4,3,3) \cdot m(q,3,3,3,0) = 4096,$ where $s= \min\left( \frac{m(q,3,4,2)}{m(q,3,4,3)}, \frac{m(q,4,3,1)}{m(q,4,3,3)} \right)=q^4.$
The CDC $\mathcal{L}_{2}$ with identifying vector $v_2$ has cardinality of $\# \mathcal{L}_{2} =
m(q,4,3,3) \cdot m(q,3,3,3,0) = 16$.
Thus we have
\begin{align*}
    A_2(14,6,7) & \geq \# C + \# \mathcal{L}_{1} + \# \mathcal{L}_{2} \\
    & \geq 34532238024 + 4096 + 16 = 34532242136.
\end{align*}
The new lower bounds for $A_q(14,6,7)$ are given in Corollary \ref{lb: detail bounds}.\\



\section{New lower bounds}

For the rank metric codes needed in Proposition \ref{ct:two-blocks}, Theorem \ref{ct:multi-blocks} and Lemma \ref{ct:FDRM improved},
we follow the subcode construction in Lemma \ref{subcode construction} (or see Corollary 4.5 in \cite{CKMP}).
From the lower bounds in Theorem \ref{Delsarte} and Theorem \ref{linkage construction},
we have the following result in Theorem \ref{ct:multi-blocks}.\\

\begin{corollary}
    \label{lb: improved blocks construction}
    Let $n_1 + n_2 =n , a_1 + a_2 =k , b_1 + b_2  \geq \frac{d}{2}$ and $a_i \leq t_i \leq n_i - \frac{d}{2}, n_i \geq k, a_i \geq \frac{d}{2}, 1 \leq b_i \leq \frac{d}{2}$,
    for $i=1,2$.

    \begin{align*}
        \mathbf{A}_{q}(n, d, k) &\geq \mathbf{A}_{q}\left(n_{1}, d, k\right) \cdot m(q, k, n_{2}, \frac{d}{2}) + \Theta \cdot \mathbf{A}_{q}\left(n_{2}, d, k\right) \\
        & + s \cdot \left(\mathbf{A}_{q}\left(t_{1}, d, a_{1}\right) \cdot m(q, a_{1}, n_{1}-t_{1}, \frac{d}{2}) \cdot \Delta_{1} \right.\\
        & \left. \cdot \mathbf{A}_{q}\left(t_{2}, d, a_{2}\right)
        \cdot m (q, a_{2}, n_{2}-t_{2}, \frac{d}{2}) \cdot \Delta_{2} \right),\\
    \end{align*}

    \noindent where $\Theta = 1 + \sum_{u=\frac{d}{2}}^{k-\frac{d}{2}} r(q, k, n_{1}, \frac{d}{2}, u),$
    $s=\min \left( \frac{m(q, a_{1}, n_{1}-t_{1}, b_{1})}{m(q, a_{1}, n_{1}-t_{1}, \frac{d}{2})},
    \frac{m(q, a_{2}, n_{2}-t_{2}, b_{2})}{m(q, a_{2}, n_{2}-t_{2}, \frac{d}{2})} \right).$\\

    \noindent $\Delta_{1}=m\left(q, a_{1}, n_{2}-t_{2}, \frac{d}{2}, a_{1}-\frac{d}{2}\right),$
    $\Delta_{2}=m\left(q, a_{2}, n_{1}-t_{1}, \frac{d}{2}, a_{2}-\frac{d}{2}\right).$\\

\end{corollary}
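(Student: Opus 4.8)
The plan is to read the claimed inequality off as the total size $\#(\mathcal{B}\cup C)$ of the single CDC produced by Theorem \ref{ct:multi-blocks}, after instantiating each ingredient code so as to make its contribution as large as possible, and then counting. The hypotheses listed in the corollary are exactly the parameter constraints of Theorem \ref{ct:multi-blocks}, so that theorem already certifies that $\mathcal{B}\cup C$ is an $(n,*,d,k)_q$ CDC; no new distance verification is required, and the entire argument reduces to a cardinality computation yielding $\mathbf{A}_q(n,d,k)\geq \#(\mathcal{B}\cup C)=\#\mathcal{B}+\#C$.

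First I would count $\#C$. Taking the component codes $C_1,C_2$ in Theorem \ref{linkage construction} to be optimal CDCs of sizes $\mathbf{A}_q(n_1,d,k)$ and $\mathbf{A}_q(n_2,d,k)$, and $\mathcal{M}_1,\mathcal{M}_2$ to be MRD codes, the counting formula built into Theorem \ref{linkage construction} gives $\#C=\mathbf{A}_q(n_1,d,k)\cdot m(q,k,n_2,\frac{d}{2})+\Theta\cdot \mathbf{A}_q(n_2,d,k)$, where $\Theta=1+\sum_{u=\frac{d}{2}}^{k-\frac{d}{2}}r(q,k,n_1,\frac{d}{2},u)=m(q,k,n_1,\frac{d}{2},k-\frac{d}{2})$ is exactly the number of codewords of $\mathcal{M}_1$ of rank at most $k-\frac{d}{2}$. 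These are the first two summands of the asserted bound.

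Next I would count $\#\mathcal{B}=\sum_{r=1}^{s}\#\mathcal{B}_r$. For fixed $r$, a subspace of $\mathcal{B}_r$ is determined by the independent choices of $U_1\in Q_1$, $U_2\in Q_2$, $\mathbf{M}_{1,1}\in\mathcal{M}_{1,1}^r$, $\mathbf{M}_{2,2}\in\mathcal{M}_{2,2}^r$, $\mathbf{M}_{1,2}\in\mathcal{M}_{1,2}$ with $\operatorname{rank}\leq a_1-\frac{d}{2}$, and $\mathbf{M}_{2,1}\in\mathcal{M}_{2,1}$ with $\operatorname{rank}\leq a_2-\frac{d}{2}$, and distinct tuples yield distinct reduced generator matrices, hence distinct subspaces. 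Choosing $Q_1,Q_2$ optimal contributes $\mathbf{A}_q(t_1,d,a_1)\cdot \mathbf{A}_q(t_2,d,a_2)$; each coset $\mathcal{M}_{i,i}^r$ supplied by Lemma \ref{subcode construction} has size $m(q,a_i,n_i-t_i,\frac{d}{2})$; and by the Delsarte Theorem \ref{Delsarte} the two rank-restricted blocks contribute $\Delta_1=m(q,a_1,n_2-t_2,\frac{d}{2},a_1-\frac{d}{2})$ and $\Delta_2=m(q,a_2,n_1-t_1,\frac{d}{2},a_2-\frac{d}{2})$. Thus $\#\mathcal{B}_r$ is independent of $r$, and since the cosets $\mathcal{M}_{i,i}^r$ are pairwise disjoint so are the blocks $\mathcal{B}_r$, whence $\#\mathcal{B}=s\cdot\#\mathcal{B}_r$, which is precisely the final summand.

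Finally I would pin down the index range $s$ and justify adding $\#\mathcal{B}$ and $\#C$. Applying Lemma \ref{subcode construction} with $(a,b,d_m,d_s)=(a_i,n_i-t_i,b_i,\frac{d}{2})$ produces $m(q,a_i,n_i-t_i,b_i)/m(q,a_i,n_i-t_i,\frac{d}{2})$ cosets meeting the rank-difference hypothesis of Theorem \ref{ct:multi-blocks}, and taking $s$ to be the minimum of these two ratios lets a single range $1\leq r\leq s$ serve both $i=1,2$, matching the displayed value of $s$. The sets $\mathcal{B}$ and $C$ are disjoint because Theorem \ref{ct:multi-blocks} yields $\operatorname{d_S}(\mathcal{B},C)\geq d>0$, so a shared subspace would force a zero distance. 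The only genuinely delicate point is carrying the restricted-rank counts $\Delta_1,\Delta_2$ and the subcode ratio $s$ correctly through the Delsarte formula; every other step is an immediate consequence of the already-proved Theorem \ref{ct:multi-blocks} and Theorem \ref{linkage construction}.
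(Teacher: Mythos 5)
Your proposal is correct and follows essentially the same route as the paper, which derives this corollary directly by instantiating Theorem \ref{linkage construction} and Theorem \ref{ct:multi-blocks} with optimal component CDCs, the cosets from Lemma \ref{subcode construction}, and the Delsarte count for the rank-restricted blocks. Your write-up simply makes explicit the cardinality bookkeeping and disjointness observations that the paper leaves implicit.
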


If $n_1 - t_1 \geq a_1$ and $n_2 - t_2 \geq a_2$ , the improved lower bound is given by Theorem \ref{ct:parallel multiple blocks}.\\

\begin{corollary}
    \label{lb:intergration}
    Let $n_1 + n_2 =n , a_1 + a_2 =k , b_1 + b_2  \geq \frac{d}{2}, c_1 + c_2 \leq k - \frac{d}{2}$ and $a_i \leq t_i \leq n_i - a_i, n_i \geq k, a_i \geq \frac{d}{2}, 1 \leq b_i \leq \frac{d}{2}, b_i \leq c_i \leq a_i$,
    for $i=1,2$.
    Then

        \begin{align*}
                \mathbf{A}_{q}(n, d, k) &\geq \mathbf{A}_{q}\left(n_{1}, d, k\right) \cdot m(q, k, n_{2}, \frac{d}{2}) + \Theta \cdot \mathbf{A}_{q}\left(n_{2}, d, k\right) \\
                & +  s \cdot \left(\mathbf{A}_{q}\left(t_{1}, d, a_{1}\right) \cdot m(q, a_{1}, n_{1}-t_{1}, \frac{d}{2}) \cdot \Delta_{1} \right. \\
                & \left. \cdot \mathbf{A}_{q}\left(t_{2}, d, a_{2}\right) \cdot m(q, a_{2}, n_{2}-t_{2}, \frac{d}{2}) \cdot \Delta_{2}\right) \\
                & + \min( \Delta_{3}, \Delta_{4} ) \cdot \mathbf{A}_q(n_1 - t_1, d, a_1) \cdot \mathbf{A}_q(n_2 - t_2, d, a_2),\\
        \end{align*}

    \noindent where  $\Theta = 1+\sum_{u=\frac{d}{2}}^{k-\frac{d}{2}} r(q, k, n_{1}, \frac{d}{2}, u),$
    $s=\min \left( \frac{m(q, a_{1}, n_{1}-t_{1}, b_{1})}{m(q, a_{1}, n_{1}-t_{1}, \frac{d}{2})}, \frac{ m(q, a_{2}, n_{2}-t_{2}, b_{2})}{m(q, a_{2}, n_{2}-t_{2}, \frac{d}{2})} \right),$\\

    \noindent $\Delta_{1}=m\left(q, a_{1}, n_{2}-t_{2}, \frac{d}{2}, a_{1}-\frac{d}{2}\right),$
    $\Delta_{2}=m\left(q, a_{2}, n_{1}-t_{1}, \frac{d}{2}, a_{2}-\frac{d}{2}\right),$ \\

    \noindent $\Delta_{3}=m(q,a_1,t_1,b_1,c_1),$
    $\Delta_{4}=m(q,a_2,t_2,b_2,c_2).$\\
\end{corollary}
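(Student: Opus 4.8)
The plan is to recognise that Corollary \ref{lb:intergration} is simply the cardinality count attached to the code $\mathcal{B}\cup C\cup\mathcal{E}$ produced by Theorem \ref{ct:parallel multiple blocks}. First I would fix the parameters $n_1,n_2,a_1,a_2,b_1,b_2,c_1,c_2,t_1,t_2$ exactly as in the hypotheses and verify that they simultaneously meet the requirements of Theorem \ref{linkage construction}, Theorem \ref{ct:multi-blocks} and Theorem \ref{ct:parallel multiple blocks}; in particular the sharpened condition $t_i\le n_i-a_i$ (i.e. $n_i-t_i\ge a_i$) is what guarantees the existence of the CDCs $D_i$ used to build $\mathcal{E}$. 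I would then instantiate every constituent object at maximal size: each $Q_i$ as an $(t_i,d,a_i)_q$ CDC of size $\mathbf{A}_q(t_i,d,a_i)$, each $D_i$ as an $(n_i-t_i,d,a_i)_q$ CDC of size $\mathbf{A}_q(n_i-t_i,d,a_i)$, and all rank-metric ingredients as (subcodes of) MRD codes. Theorem \ref{ct:parallel multiple blocks} then certifies that $\mathcal{B}\cup C\cup\mathcal{E}$ is an $(n,*,d,k)_q$ CDC, and since all pairwise subspace distances are $\ge d>0$ the three families are pairwise disjoint as sets of subspaces, so $\mathbf{A}_q(n,d,k)\ge \#C+\#\mathcal{B}+\#\mathcal{E}$.

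The three summands are then evaluated block by block. The quantity $\#C$ is exactly the two-term expression $\mathbf{A}_q(n_1,d,k)\cdot m(q,k,n_2,\frac{d}{2})+\Theta\cdot\mathbf{A}_q(n_2,d,k)$ of Theorem \ref{linkage construction}, where $\Theta=1+\sum_{u=\frac{d}{2}}^{k-\frac{d}{2}}r(q,k,n_1,\frac{d}{2},u)$ counts the admissible rank-restricted block $\mathbf{M}_1$ in $C^2$. For $\#\mathcal{B}$ I would multiply the sizes of the six blocks in the generator matrix of Theorem \ref{ct:multi-blocks}: the two CDC factors $\mathbf{A}_q(t_i,d,a_i)$, the two full MRD factors $m(q,a_i,n_i-t_i,\frac{d}{2})$ coming from the cosets $\mathcal{M}_{i,i}^{r}$, and the two rank-restricted factors $\Delta_1,\Delta_2$ attached to $\mathbf{M}_{1,2},\mathbf{M}_{2,1}$; the number $s$ of cosets is supplied by the subcode construction of Lemma \ref{subcode construction}, producing precisely the stated $s=\min(\cdots)$. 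Finally $\#\mathcal{E}$ is read off from the cardinality formula of Theorem \ref{ct:parallel multiple blocks}: since $\min(\#\mathcal{M}_1,\#\mathcal{M}_2)$ is a lower bound for $\#\mathcal{E}/(\#D_1\#D_2)$ in every branch, taking this uniformly valid branch yields $\min(\Delta_3,\Delta_4)\cdot\mathbf{A}_q(n_1-t_1,d,a_1)\cdot\mathbf{A}_q(n_2-t_2,d,a_2)$.

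The only genuinely computational ingredient is turning the rank-restricted block sizes into closed-form counts, and for this I would invoke the Delsarte theorem (Theorem \ref{Delsarte}): each $\Delta_j$ is an instance of $m(q,a,b,d',u)=1+\sum_{i=d'}^{u}r(q,a,b,d',i)$ from the notation table, with $(d',u)=(\frac{d}{2},a_i-\frac{d}{2})$ for $\Delta_1,\Delta_2$ and $(d',u)=(b_i,c_i)$ for $\Delta_3,\Delta_4$. I do not expect a conceptual obstacle, as the argument is entirely bookkeeping; the step demanding the most care is confirming that the parameter windows of the three source results are mutually consistent (for instance $a_i\le t_i\le n_i-a_i$, $b_i\le c_i\le a_i$, $c_1+c_2\le k-\frac{d}{2}$, $b_1+b_2\ge\frac{d}{2}$), so that $C$, $\mathcal{B}$ and $\mathcal{E}$ are all built on the same ambient decomposition $\mathbf{F}_q^{n}=\mathbf{F}_q^{n_1}\oplus\mathbf{F}_q^{n_2}$ and Theorem \ref{ct:parallel multiple blocks} applies verbatim.
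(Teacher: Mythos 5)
Your proposal is correct and follows essentially the same route as the paper, which states this corollary as the direct cardinality count of the code $\mathcal{B}\cup C\cup\mathcal{E}$ from Theorem \ref{ct:parallel multiple blocks}, with $\#C$ from Theorem \ref{linkage construction}, $\#\mathcal{B}$ from Theorem \ref{ct:multi-blocks} together with the subcode construction of Lemma \ref{subcode construction}, and $\#\mathcal{E}$ bounded below by the uniformly valid $\min(\Delta_3,\Delta_4)$ branch, all evaluated via the Delsarte rank distribution. Your bookkeeping of the six factors in $\#\mathcal{B}$ and the observation that the three families are disjoint because pairwise distances are positive match the paper's intent exactly.
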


From Proposition \ref{multilevel blocks} we have the following result
which inserts the multilevel type construction CDC into linkage construction CDC for
the case of two identifying vectors.\\

\begin{corollary}
    \label{lb:multilevel blocks}
    let $n_1 + n_2 =n, n_1 \geq k, n_2 \geq k, u_1 + u_2 = k, u_1 \geq d, u_2 \geq \frac{d}{2}$ and $ c_1 + c_2 \geq \frac{d}{2}, 1 \leq c_i \leq \frac{d}{2}, i=1,2$,

    \begin{align*}
        \mathbf{A}_q(n,d,k) &\geq \mathbf{A}_{q}\left(n_{1}, d, k\right) \cdot m\left(q, k, n_{2}, \frac{d}{2}\right) + \Theta \cdot \mathbf{A}_{q}\left(n_{2}, d , k\right) \\
        & + s_1 \cdot m\left(q, u_1, n_1 - u_1, \frac{d}{2}\right) \cdot \Delta_{1} \cdot m\left(q, u_2, n_2 - u_2, \frac{d}{2}\right) \\
        & + s_2 \cdot m\left(q, u_1^{\prime}, n_1 - u_1^{\prime}, \frac{d}{2}\right) \cdot \Delta_{2} \cdot m\left(q, u_2^{\prime}, n_2 - u_2^{\prime}, \frac{d}{2}\right),\\
    \end{align*}

    \noindent where $\Theta = 1 + \sum_{u=\frac{d}{2}}^{k-\frac{d}{2}} r(q, k, n_{1}, \frac{d}{2}, u),$
    $u_1^{\prime} = u_1 - \frac{d}{2},$ $u_2^{\prime} = u_2 + \frac{d}{2},$ \\

    \noindent $\Delta_{1} = m(q,u_1, n_2 - u_2, \frac{d}{2}, u_1 - \frac{d}{2}),$
    $\Delta_{2} = m(q,u_1^{\prime}, n_2 - u_2^{\prime}, \frac{d}{2}, u_1^{\prime} - \frac{d}{2}),$ \\

    \noindent $s_1 = \min(\alpha_{i}: i=1,2)$, $\alpha_{i}= \frac{m \left(q, u_i, n_i - u_i, c_i \right)}{m(q, u_i, n_i - u_i, \frac{d}{2})}$, \\

    \noindent $s_2 = \min(\beta_{i}: i=1,2 )$, $\beta_{i} = \frac{m(q, u_i^{\prime}, n_i - u_i^{\prime}, c_i)}{m(q, u_i^{\prime}, n_i - u_i^{\prime}, \frac{d}{2})}.$\\

\end{corollary}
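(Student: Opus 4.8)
The plan is to specialize Proposition \ref{multilevel blocks} to the set $H=\{v_1,v_2\}$ of two identifying vectors from Case I, and then read off the cardinality of the resulting code term by term. First I would verify that both $v_1$ and $v_2$ have the \emph{special form} of Definition \ref{def:identifying vector} with $\Delta=0$ and $d_f=\frac{d}{2}$. The vector $v_1$ uses the split $(u_1,u_2)$, for which $u_1\geq d\geq\frac{d}{2}$, $u_2\geq\frac{d}{2}$, $\delta_1=n_1\geq u_1$ and $\delta_2=n_2\geq u_2+\frac{d}{2}$ hold by hypothesis; the vector $v_2$ uses the split $(u_1',u_2')=(u_1-\frac{d}{2},\,u_2+\frac{d}{2})$, and the requirement $u_1'\geq\frac{d}{2}$ is exactly the assumption $u_1\geq d$. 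Next I would compute the Hamming distance: on the first block $v_1$ and $v_2$ agree on the leading $u_1-\frac{d}{2}$ and trailing $n_1-u_1$ coordinates and differ on the $\frac{d}{2}$ coordinates in between, while on the second block they differ on exactly the $\frac{d}{2}$ coordinates where $v_2$ has an extra run of ones. Hence $d_h(v_1,v_2)=\frac{d}{2}+\frac{d}{2}=d$, so the hypothesis $d_h(v_j,v_{j'})\geq d$ of Proposition \ref{multilevel blocks} is satisfied.

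With these checks done, Proposition \ref{multilevel blocks} immediately gives that $\mathcal{L}=C\cup\mathcal{L}_1\cup\mathcal{L}_2$ is an $(n,d,k)_q$ CDC, where $\mathcal{L}_1,\mathcal{L}_2$ are the lifted multilevel codes attached to $v_1,v_2$ as in Lemma \ref{ct:lifting FDMR}. To conclude that the cardinalities add, I would observe that $\mathcal{L}_1$ and $\mathcal{L}_2$ are disjoint because their codewords carry the distinct identifying vectors $v_1\neq v_2$ (a subspace has a unique identifying vector), while the distance bound $\operatorname{dis}(W,W')\geq d>0$ for $W\in\mathcal{L}_1\cup\mathcal{L}_2$ and $W'\in C$, established inside the proof of Proposition \ref{multilevel blocks} via Lemma \ref{inserting sufficient condition}, forces $(\mathcal{L}_1\cup\mathcal{L}_2)\cap C=\emptyset$. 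The three pieces are therefore pairwise disjoint and $\#\mathcal{L}=\#C+\#\mathcal{L}_1+\#\mathcal{L}_2$.

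It then remains to evaluate the three summands. For $\#C$ I would invoke the cardinality estimate of Theorem \ref{linkage construction}, which yields the first two terms $\mathbf{A}_q(n_1,d,k)\cdot m(q,k,n_2,\frac{d}{2})+\Theta\cdot\mathbf{A}_q(n_2,d,k)$ with $\Theta=1+\sum_{u=d/2}^{k-d/2}r(q,k,n_1,\frac{d}{2},u)$. For $\#\mathcal{L}_1$ and $\#\mathcal{L}_2$ I would apply the cardinality formula of Lemma \ref{ct:lifting FDMR}: since $n_1-u_1\geq\frac{d}{2}$ and $n_1-u_1'=(n_1-u_1)+\frac{d}{2}\geq d\geq\frac{d}{2}$, the quantity $\delta_1-\Delta-u_i=n_1-u_i$ is at least $d_f$ in both cases, so the third (largest) branch of that formula applies and the underlying rank-metric code is the one built in Lemma \ref{ct:FDRM improved}. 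Substituting $(u_1,u_2)$ gives $\#\mathcal{L}_1=s_1\cdot m(q,u_1,n_1-u_1,\frac{d}{2})\cdot\Delta_1\cdot m(q,u_2,n_2-u_2,\frac{d}{2})$, and substituting $(u_1',u_2')$ gives $\#\mathcal{L}_2=s_2\cdot m(q,u_1',n_1-u_1',\frac{d}{2})\cdot\Delta_2\cdot m(q,u_2',n_2-u_2',\frac{d}{2})$, where $s_1,s_2,\Delta_1,\Delta_2$ are precisely the quantities named in the statement. Adding the three contributions produces the claimed inequality.

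The step I expect to require the most care is the bookkeeping in the last paragraph: one must track that $u_1-\frac{d}{2}$ plays the role of ``$u_1$'' in Lemma \ref{ct:lifting FDMR} when specializing to $v_2$, confirm that the third branch genuinely applies for both vectors, and match each factor $s_1,s_2,\Delta_1,\Delta_2$ against the definitions in the statement. In particular the rank restriction $\operatorname{rank}(\mathbf{M}_3)\leq u_1-d_f$ imposed in Lemma \ref{ct:lifting FDMR} is exactly what guarantees $\operatorname{dim}(W\cap S_2)\geq\frac{d}{2}$, so that $\mathcal{L}_1$ and $\mathcal{L}_2$ may each legitimately be adjoined to $C$ through Lemma \ref{inserting sufficient condition}; everything else is a direct substitution of parameters into previously established formulas.
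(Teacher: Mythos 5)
Your proposal is correct and follows essentially the same route as the paper, which obtains this corollary by specializing Proposition \ref{multilevel blocks} to the two identifying vectors $v_1,v_2$ of Case I and reading off the cardinalities from Theorem \ref{linkage construction} and the third branch of Lemma \ref{ct:lifting FDMR} (via Lemma \ref{ct:FDRM improved}). Your verification that $n_1-u_1\geq u_2\geq \frac{d}{2}$ and $n_1-u_1'\geq d$ place both vectors in that branch, and that $d_h(v_1,v_2)=d$, supplies exactly the bookkeeping the paper leaves implicit.
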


From Proposition \ref{multilevel blocks} we have the following result
which inserts the multilevel type construction CDC into linkage construction CDC for
the case of $\lambda = \lfloor \frac{n_1}{u_1} \rfloor$ identifying vectors.\\

\begin{corollary}
    \label{lb:multilevel blocks case2}
    let $n_1 + n_2 =n, n_1 \geq k, n_2 \geq k, u_1 + u_2 = k, u_1 \geq \frac{d}{2}, u_2 \geq \frac{d}{2}$ and $ b_1 + b_2 \geq \frac{d}{2}, 1 \leq b_i \leq \frac{d}{2}, i=1,2$,

        \begin{align*}
            \mathbf{A}_q(n,d,k) &\geq \mathbf{A}_{q}\left(n_{1}, d, k\right) \cdot m\left(q, k, n_{2}, \frac{d}{2}\right) \\
            & + \left(1+\sum_{u=\frac{d}{2}}^{k-\frac{d}{2}} r\left(q, k, n_{1}, \frac{d}{2}, u\right)\right) \cdot \mathbf{A}_{q}\left(n_{2}, d , k\right) + \sum\limits_{i = 1}^{\lambda} \mathcal{L}_i,\\
        \end{align*}

    \noindent where $\mathcal{L}_i = \begin{cases}
    \Lambda_1 \cdot \Lambda_2 &  0 \leq n_1 - i\cdot u_1 < b_1 \\
    \min ( \Lambda_3,  \Lambda_4 ) \cdot \Lambda_2 &  b_1 \leq n_1 - i \cdot u_1 < \frac{d}{2}\\
    s \cdot \Lambda_5 \cdot \Lambda_1 \cdot \Lambda_2  & n_1 - i \cdot u_1 \geq \frac{d}{2}
    \end{cases},$\\ \\

    \noindent $\Lambda_1 = m(q, u_2, n_2 - u_2, \frac{d}{2}),$
    $\Lambda_2 = m(q, u_1, n_2 - u_2, \frac{d}{2}, u_1 - \frac{d}{2}),$ \\

    \noindent $\Lambda_3 = m(q, u_1, n_1 - i \cdot u_1, b_1),$ $\Lambda_4 = m(q, u_2, n_2 - u_2, b_2),$ \\

    \noindent $\Lambda_5 = m(q, u_1, n_1 - i \cdot u_1, \frac{d}{2})$ and
    $s = \min \left( \frac{m\left(q, u_1, n_1 - i \cdot u_1 , b_1\right)}{m\left(q, u_1,  n_1 - i \cdot u_1, \frac{d}{2}\right)},
    \frac{m\left(q, u_2, n_2 - u_2, b_2\right)}{m\left(q, u_2, n_2 - u_2, \frac{d}{2}\right)}
    \right)$.\\ \\
\end{corollary}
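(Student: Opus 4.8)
The plan is to read off this bound as an immediate consequence of Proposition \ref{multilevel blocks}, applied to the specific family $H = \{v_1,\ldots,v_\lambda\}$ of Case~II, together with the cardinality formula of Lemma \ref{ct:lifting FDMR}. Almost all of the work is bookkeeping: verifying the hypotheses of the proposition and matching notation, since the delicate distance estimates have already been established in the lemmas and the proposition itself.

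First I would check that every $v_i$ has the special form of Definition \ref{def:identifying vector} with the choice $\Delta = (i-1)u_1$, $\delta_1 = n_1$, $\delta_2 = n_2$ and $d_f = \tfrac{d}{2}$. Here the block of $u_1$ ones in the first $n_1$ coordinates begins after $(i-1)u_1$ zeros, so $\delta_1 - \Delta - u_1 = n_1 - i u_1$, which is precisely the quantity governing the three cases in the statement. The constraints $\delta_1 \geq \Delta + u_1$, $\delta_2 \geq u_2 + d_f$, $u_1 \geq d_f$, $u_2 \geq d_f$ required by the definition follow from $i \leq \lambda = \lfloor n_1/u_1 \rfloor$ (which gives $n_1 - i u_1 \geq 0$), from $n_2 \geq k = u_1 + u_2 \geq u_2 + \tfrac{d}{2}$, and from the hypotheses $u_1, u_2 \geq \tfrac{d}{2}$. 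Since each $v_i$ carries a single $u_1$-block whose position shifts by $u_1$ from one index to the next, for $j < j'$ the two vectors disagree in the $u_1$ positions of the $v_j$-block and in the $u_1$ positions of the $v_{j'}$-block, so $d_h(v_j, v_{j'}) \geq 2u_1 \geq d$, the separation demanded by Proposition \ref{multilevel blocks}.

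With these hypotheses in place, Proposition \ref{multilevel blocks} guarantees that $\mathcal{L} = C \cup \bigcup_{i=1}^{\lambda} \mathcal{L}_i$ is an $(n,d,k)_q$ CDC, where $\mathcal{L}_i$ is the lifted code of Lemma \ref{ct:lifting FDMR} attached to $v_i$ and $C$ is the linkage code of Theorem \ref{linkage construction}. Because the minimum subspace distance of $\mathcal{L}$ is at least $d > 0$, no two of the listed subspaces can coincide, so the cardinalities add and $\#\mathcal{L} = \#C + \sum_{i=1}^{\lambda} \#\mathcal{L}_i$. I would then substitute the bound $\#C \geq \mathbf{A}_q(n_1,d,k)\cdot m(q,k,n_2,\tfrac{d}{2}) + \Theta \cdot \mathbf{A}_q(n_2,d,k)$ from Theorem \ref{linkage construction}, with $\Theta = 1 + \sum_{u=d/2}^{k-d/2} r(q,k,n_1,\tfrac{d}{2},u)$, which produces the first two summands of the claimed inequality.

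It remains to insert $\#\mathcal{L}_i$ from Lemma \ref{ct:lifting FDMR}. Specializing its parameters to $\delta_1 = n_1$, $\delta_2 = n_2$, $d_f = \tfrac{d}{2}$, $\Delta = (i-1)u_1$ converts $\delta_1 - \Delta - u_1$ into $n_1 - i u_1$ and $\delta_2 - u_2$ into $n_2 - u_2$, so the three regimes of that lemma become exactly the cases $0 \leq n_1 - i u_1 < b_1$, $b_1 \leq n_1 - i u_1 < \tfrac{d}{2}$ and $n_1 - i u_1 \geq \tfrac{d}{2}$. Matching the factors then identifies $m(q,u_2,n_2-u_2,\tfrac{d}{2}) = \Lambda_1$, the rank-restricted count $m(q,u_1,n_2-u_2,\tfrac{d}{2},u_1-\tfrac{d}{2}) = \Lambda_2$, the subcode sizes $m(q,u_1,n_1-iu_1,b_1) = \Lambda_3$ and $m(q,u_2,n_2-u_2,b_2) = \Lambda_4$, and $m(q,u_1,n_1-iu_1,\tfrac{d}{2}) = \Lambda_5$ with the subcode multiplier $s$, yielding precisely the three branches defining $\mathcal{L}_i$; summing over $i$ completes the proof. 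The only point requiring genuine care, rather than routine substitution, is confirming that Lemma \ref{ct:lifting FDMR} applies at both transition values of $n_1 - i u_1$, namely $b_1$ and $\tfrac{d}{2}$, for every $i$ up to $\lambda$, since these are where the underlying construction switches between Lemma \ref{ct:FDRM} and Lemma \ref{ct:FDRM improved}; this is where I would focus the verification.
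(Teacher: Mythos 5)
Your proposal is correct and follows essentially the same route as the paper, which derives this corollary directly from Proposition \ref{multilevel blocks} applied to the Case~II family of identifying vectors, with the per-vector cardinalities read off from Lemma \ref{ct:lifting FDMR} under the substitution $\Delta=(i-1)u_1$, $\delta_1=n_1$, $\delta_2=n_2$, $d_f=\tfrac{d}{2}$. Your verification of the special-form hypotheses, the Hamming-distance bound $d_h(v_j,v_{j'})\geq 2u_1\geq d$, and the matching of the three cardinality regimes to $n_1-iu_1$ supplies exactly the bookkeeping the paper leaves implicit.
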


\begin{corollary}
    \label{lb: detail bounds}
    We have the following lower bounds for constant dimension subspace codes with $d \leq k$.

    \allowdisplaybreaks
    \begin{align*}
            && \mathbf{A}_{q}(12, 4, 6) \geq \enspace &q^{30} + q^{26} + q^{25} + 2q^{24} + q^{23} + q^{22} - q^{21} - 2q^{20} - 3q^{19} \\
            && \ & - q^{18} - q^{17} + 3q^{15} + 3q^{14} + 4q^{13} + 4q^{12} + q^{11} \\
            && \ & - q^{10} - 3q^{9} - 3q^{8} - 2q^{7} - q^{6}.
            \\
            \\
            \\
            && \mathbf{A}_{q}(14,6,7) \geq \enspace &q^{35} + q^{26} + q^{25} + 2q^{24} + 3q^{23} + 3q^{22} + 2q^{21} + q^{20} - 2q^{19} \\
            && \ & - 5q^{18} - 8q^{17} - 11q^{16} - 11q^{15} - 10q^{14} - 7q^{13} - 3q^{12} \\
            && \ & + 2q^{11} + 5q^{10} + 8q^9 + 8q^8 + 9q^7 + 6q^6 + 5q^5 + 3q^4 + q^3.
            \\
            \\
            \\
            && \mathbf{A}_{q}(15, 4, 5) \geq \enspace  &q^{40} + \mathbf{A}_q(10,4,5)(q^{16} + q^{15} + 2q^{14} + q^{13} - 2q^{11} - 3q^{10} \\
            && \ & - 4q^{9} - 2q^{8}  + q^{6} + 3q^{5} + 2q^{4} + q^{3}) + \mathbf{A}_q(7,4,3)q^{12}.
            \\
            \\
            \\
            && \mathbf{A}_{q}(16, 6, 8) \geq \enspace &q^{48} + q^{39} + q^{38} + 2q^{37} + 3q^{36} + 3q^{35} + 3q^{34} + 2q^{33} - 4q^{31} \\
            && \ & - 6q^{30} - 10q^{29} - 10q^{28} - 11q^{27} - 7q^{26} - 3q^{25} + 6q^{24} \\
            && \ & + 12q^{23} + 19q^{22} + 23q^{21} + 25q^{20} + 22q^{19} + 16q^{18} + 9q^{17} \\
            && \ & - 7q^{15} - 13q^{14} - 15q^{13} - 17q^{12} - 13q^{11} - 11q^{10} - 8q^{9} \\
            && \ & - 5q^{8} - 4q^{7} - 2q^{6} + q^{4} + q^{3}.
            \\
            \\
            \\
            && \mathbf{A}_{q}(18, 4, 6) \geq \enspace &q^{60} + \mathbf{A}_q(12,4,6)(q^{26} + q^{25} + 2q^{24} + q^{23} + q^{22} - q^{21} \\
            && \ & - 3q^{20} - 4q^{19} - 3q^{18} - 2q^{17} + 4q^{15} + 5q^{14} + 5q^{13} + 3q^{12} \\
            && \ & + q^{11} - q^{10} - 3q^{9} - 3q^{8}- 2q^{7} - q^{6}) + \mathbf{A}_q(8,4,4)(q^{28} \\
            && \ & + q^{27} + 2q^{26} + q^{25} - q^{23} - 2q^{22} - q^{21}).
            \\
            \\
            \\
            && \mathbf{A}_{q}(18, 6, 6) \geq \enspace & \mathbf{A}_q(12,6,6)q^{24} + \mathbf{A}_q(6,6,3)q^{15} + (q^{21} + q^{20} + 2q^{19} \\
            && \ & + 3q^{18} + 3q^{17} + 3q^{16} + 3q^{15} + 2q^{14} + q^{13} + q^{12} - q^{9} \\
            && \ & - q^{8} - 2q^{7} - 3q^{6} - 3q^{5} - 3q^{4} - 3q^{3} - 2q^{2} - q).\\
            \\
            \\
            \\
            && \mathbf{A}_{q}(18,6,9) \geq \enspace & q^{63} + q^{54} + q^{53} + 2q^{52} + 3q^{51} + 3q^{50} + 3q^{49} + 3q^{48} + q^{47} \\
            && \ & - 2q^{46} - 5q^{45} - 9q^{44} - 11q^{43} - 13q^{42} - 12q^{41} - 10q^{40} \\
            && \ & - 3q^{39} + 3q^{38} + 12q^{37} + 18q^{36} + 24q^{35} + 24q^{34} + 23q^{33} \\
            && \ & + 15q^{32} + 6q^{31} - 7q^{30} - 19q^{29} - 29q^{28} - 37q^{27} - 39q^{26} \\
            && \ & - 39q^{25} - 31q^{24} - 22q^{23} - 8q^{22} + 2q^{21} + 14q^{20} + 20q^{19} \\
            && \ & + 27q^{18} + 24q^{17} + 23q^{16} + 17q^{15} + 14q^{14} + 8q^{13} + 5q^{12} \\
            && \ & + 2q^{11} + q^{10}. \\
    \end{align*}
\end{corollary}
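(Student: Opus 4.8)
The plan is to read each of the seven displayed inequalities as a single instance of one of the general lower bounds already proved, specialized to a concrete splitting, and then to collapse the resulting closed form to the stated polynomial in $q$ using the Delsarte Theorem (Theorem \ref{Delsarte}). First I would fix, for each target, the construction and the parameter assignment. For $\mathbf{A}_q(12,4,6)$ I take $n_1=n_2=6$, $u_1=4,u_2=2$ (with complementary vector $u_1'=2,u_2'=4$) in Corollary \ref{lb:multilevel blocks}; for $\mathbf{A}_q(14,6,7)$ I take $n_1=n_2=7$, $u_1=3,u_2=4$, $\lambda=2$ in Corollary \ref{lb:multilevel blocks case2}; for $\mathbf{A}_q(16,6,8)$ I take $n_1=n_2=8$, $a_1=a_2=4$, $b_1=2,b_2=1$, $c_1=3,c_2=2$, $t_1=t_2=4$ in Corollary \ref{lb:intergration}; and for $\mathbf{A}_q(18,6,9)$ the analogue of the $\mathbf{A}_q(12,4,6)$ data with $n_1=n_2=9$ in Corollary \ref{lb:multilevel blocks}. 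The three remaining bounds are instances of Corollary \ref{lb: improved blocks construction} (or Corollary \ref{lb:intergration}) in which some inner CDC sizes do not collapse to $1$: for $\mathbf{A}_q(15,4,5)$ take $n_1=5,n_2=10$, $a_1=2,a_2=3$, $t_1=2,t_2=7$, $b_1=b_2=1$, so that the surviving symbolic factors are exactly $\mathbf{A}_q(10,4,5)$ and $\mathbf{A}_q(7,4,3)$; for $\mathbf{A}_q(18,4,6)$ take $n_1=6,n_2=12$, $a_1=2,a_2=4$, $t_1=2,t_2=8$, retaining $\mathbf{A}_q(12,4,6)$ and $\mathbf{A}_q(8,4,4)$; and for $\mathbf{A}_q(18,6,6)$ take $n_1=12,n_2=6$, $a_1=a_2=3$, retaining $\mathbf{A}_q(12,6,6)$ and $\mathbf{A}_q(6,6,3)$.

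Next I would evaluate every rank-metric quantity that appears. Since all MRD codes used are genuine, $m(q,a,b,d)=q^{\max\{a,b\}(\min\{a,b\}-d+1)}$, while each restricted cardinality $m(q,a,b,d,u)=1+\sum_{i=d}^{u}r(q,a,b,d,i)$ is obtained by inserting the Delsarte formula of Theorem \ref{Delsarte} for each $r(q,a,b,d,i)$ and summing; the factor $\Theta=1+\sum_{u=d/2}^{k-d/2}r(q,k,n_1,d/2,u)$ is handled identically. Each of these expands, via the Gaussian binomials and the alternating $q$-binomial sums, to an explicit polynomial in $q$. After substitution, every bound becomes a sum of products of such explicit polynomials, multiplied where relevant by the retained symbols $\mathbf{A}_q(\cdot)$.

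Finally I would expand and collect powers of $q$. For the fully explicit cases ($12,4,6$; $14,6,7$; $16,6,8$; $18,6,9$) every inner CDC factor $\mathbf{A}_q(n_i,d,k)$ with $n_i=k$ equals $1$, and every factor $\mathbf{A}_q(t_i,d,a_i)$ or $\mathbf{A}_q(n_i-t_i,d,a_i)$ with the argument equal to its dimension also collapses to $1$, so the right-hand side is a single polynomial; multiplying out and gathering like terms gives the displayed coefficients, which I would cross-check against the numerical $q=2$ values already recorded (for instance the polynomial for $\mathbf{A}_q(12,4,6)$ evaluates to $1214577088$ at $q=2$, matching the example). For the three symbolic cases the same expansion isolates the stated coefficient polynomials multiplying $\mathbf{A}_q(10,4,5)$ and $\mathbf{A}_q(7,4,3)$, respectively $\mathbf{A}_q(12,4,6)$ and $\mathbf{A}_q(8,4,4)$, respectively $\mathbf{A}_q(12,6,6)$ and $\mathbf{A}_q(6,6,3)$.

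The main obstacle is the bookkeeping in the Delsarte expansions rather than any conceptual difficulty: the restricted-rank cardinalities $m(q,a,b,d,u)$ and the term $\Theta$ are alternating sums of products of $q$-binomial coefficients, and it is easy to mishandle a sign or one of the $q^{\binom{s}{2}}$ weights when several such factors are multiplied together, which is precisely why the final polynomials carry many negative coefficients. I would control this by computing each polynomial factor in isolation, verifying its degree and leading coefficient against the expected leading behaviour of the construction --- for example the terms $q^{30}$, $q^{35}$, $q^{48}$, $q^{63}$, $q^{40}$, $q^{60}$ coming from $m(q,k,n_2,d/2)=q^{\max\{k,n_2\}(\min\{k,n_2\}-d/2+1)}$ --- and only then forming the products and reading off the coefficients, using the recorded $q=2$ values as a final consistency check.
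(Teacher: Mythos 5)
Your proposal is correct and follows exactly the route the paper intends (the paper gives no written proof of this corollary, deriving it implicitly): each displayed bound is Corollary~\ref{lb: improved blocks construction}, \ref{lb:intergration}, \ref{lb:multilevel blocks} or \ref{lb:multilevel blocks case2} specialized at the parameter choices appearing in the paper's worked examples, followed by a Delsarte expansion of the $m(q,a,b,d,u)$ and $\Theta$ factors. Your parameter assignments agree with the paper's examples where they are stated (e.g.\ $n_1=n_2=6$, $u_1=4$, $u_2=2$ for $\mathbf{A}_q(12,4,6)$; $n_1=n_2=8$, $a_i=4$, $b_1=2$, $b_2=1$, $c_1=3$, $c_2=2$ for $\mathbf{A}_q(16,6,8)$), and the remaining choices you supply reproduce the stated coefficient polynomials and the recorded $q=2$ values.
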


\section{Conclusion}

After pioneering works in \cite{Silberstein1,EtzionVardy,Silberstein2,Silberstein3,Gluesing,Heinlein2} about the construction of constant dimension subspace codes,
new lower bounds from various constructions have been developed extensively in \cite{XuChen,CHWX,Li,CKMP,Heinlein1,LCF,Kurz} since 2018.
On the other hand there are still big gaps between presently best upper bounds and lower bounds for small parameters $n \leq 19$ and $q\leq 9$ in \cite{table}.
It seems that new constant dimension subspaces can be inserted into some most effective constructions.
In this paper we present two parameter-controlled inserting constructions from this idea.
Our constructions give highly non-trivial better lower bounds better than previous lower bounds.
$141$ new constant dimension subspace codes with distance $4,6,8$ for small parameters $n \leq 19$ and $q \leq 9$ are given in Table 1-9.\\

\newpage

\begin{longtable}{|l@{\extracolsep{\fill}}|l|l|l||}
    \caption{\label{tab:d=4}Theorem \ref{ct:multi-blocks} d=4}\\ \hline
    ${\bf A}_q(n,d,k)$ & New & Old\\ \hline  \hline \endfirsthead
    \multicolumn{4}{r}{continued table} \\ \hline
    ${\bf A}_q(n,d,k)$ & New & Old\\ \hline \endhead
    ${\bf A}_2(12,4,6)$ & 1214 5729 92 &\tabincell{l}{1212 4910 81}   \\ \hline
    ${\bf A}_3(12,4,6)$ &\tabincell{l}{2099 4929 7978 267}&\tabincell{l}{2099 4378 4809 333}   \\ \hline
    ${\bf A}_4(12,4,6)$ &\tabincell{l}{1159 1944 1176 9294 848}& \tabincell{l}{1159 1928 8551 2400 896}  \\ \hline
    ${\bf A}_5(12,4,6)$ &\tabincell{l}{9332 4349 9108 4302 96875}& \tabincell{l}{9332 4337 6349 6412 34375}   \\ \hline
    ${\bf A}_7(12,4,6)$ &\tabincell{l}{2255 0482 5265 0930 \\ 1245 6609 47}&\tabincell{l}{2255 0482 4318 3963\\ 0511 0436 89}  \\ \hline
    ${\bf A}_8(12,4,6)$ &\tabincell{l}{1238 2901 4650 6193 \\ 6330 6582 8352}&\tabincell{l}{1238 2901 4517 0956 \\ 0581 4349 0048}  \\ \hline
    ${\bf A}_9(12,4,6)$ &\tabincell{l}{4239 8506 4977 3534 \\ 7946 7906 80243}&\tabincell{l}{4239 8506 4839 1042 \\ 4395 4820 22091}   \\ \hline

    ${\bf A}_2(14,4,7)$ & 4980 1091 73760 & \tabincell{l}{4975 8590 33088}     \\ \hline
    ${\bf A}_3(14,4,7)$ &\tabincell{l}{1115 8069 9486 1976 27621}&\tabincell{l}{1115 7972 4707 5781 87435}   \\ \hline
    ${\bf A}_4(14,4,7)$ &\tabincell{l}{1944 8126 1068 5525 \\2931 117056}&\tabincell{l}{1944 8119 7073 7370 \\ 4380 940288}  \\ \hline
    ${\bf A}_5(14,4,7)$ &\tabincell{l}{2278 4276 5460 4718 \\1749 7177 734375 }& \tabincell{l}{2278 4275 9466 8902 \\ 8978 7216 796875}   \\ \hline
    ${\bf A}_7(14,4,7)$ &\tabincell{l}{3121 2771 2852 6308 \\ 1722 9550 5250 4601 2105}&\tabincell{l}{3121 2771 2665 4399 \\ 1534 2239 2501 1209 3363}  \\ \hline
    ${\bf A}_8(14,4,7)$ &\tabincell{l}{8509 4651 5464 2025 \\ 8527 8170 7789 6654 094336}&\tabincell{l}{8509 4651 5349 5070  \\ 8209 2409 6484 8376 872960}  \\ \hline
    ${\bf A}_9(14,4,7)$ &\tabincell{l}{1197 4590 5684 5506 \\ 9588 2882 6907 9025 \\ 2468 97611} &\tabincell{l}{1197 4590 5680 2122 \\ 9813 0213 9115 1254 \\ 3824 67123}   \\ \hline

    ${\bf A}_2(15,4,5)$ & 1252 4489 02208  & 1252 4485 86816   \\ \hline
    ${\bf A}_3(15,4,5)$ & \tabincell{l}{1239 9153 9128 2781 0424} & \tabincell{l}{1239 9153 9126 0619 9527} \\ \hline
    ${\bf A}_4(15,4,5)$ &\tabincell{l}{1215 5144 1280 2999 \\ 1645 38880} & \tabincell{l}{1215 5144 1280 2977 \\ 4883 75808}  \\ \hline
    ${\bf A}_5(15,4,5)$ &\tabincell{l}{9113 7155 3273 4825 \\2129 0267  1875} &\tabincell{l}{9113 7155 3273 4824 \\ 2924 9251 5625}   \\ \hline
    ${\bf A}_7(15,4,5)$ &\tabincell{l}{6369 9534 3303 4380 \\7169 4492 2899 996861}   &   \tabincell{l}{6369 9534 3303 4380 \\ 7166 7847 8121 377611}  \\ \hline
    ${\bf A}_8(15,4,5)$ &\tabincell{l}{1329 6039 3627 5552  \\ 4247 7739 2995 4536 \\ 32512}   &   \tabincell{l}{1329 6039 3627 5552 \\ 4247 7485 5872 3725 \\ 39392}  \\ \hline
    ${\bf A}_9(15,4,5)$ &\tabincell{l}{1478 3445 1659 2420 \\ 9511 5973 3248 7356 \\ 6503 455}   & \tabincell{l}{1478 3445 1659 2420  \\ 9511 5954 7743 3675 \\ 1358 413}   \\ \hline

    ${\bf A}_2(16,4,4)$ & 8059 6325 662 & 8059 6320 222 \\ \hline

    ${\bf A}_2(16,4,5)$ & 2002 1892 886936 & 2002 1891 625368 \\ \hline
    ${\bf A}_3(16,4,5)$ & 1004 3083 9766 0450 578410 & 1004 3083 9765 8456 080337  \\ \hline
    ${\bf A}_4(16,4,5)$ &\tabincell{l}{3111 7130 9429 0302 \\ 0873 3618 688}& \tabincell{l}{3111 7130 9429 0298 \\ 6191 5009 536}  \\ \hline
    ${\bf A}_5(16,4,5)$ &\tabincell{l}{5696 0714 9291 8139 \\ 1221 4040 9840 875} & \tabincell{l}{5696 0714 9291 8139 \\ 0991 3015 5934 625}   \\ \hline
    ${\bf A}_7(16,4,5)$ &\tabincell{l}{1529 4258 1299 7087 \\ 2784 6439 2254 8588 \\ 487731} & \tabincell{l}{1529 4258 1299 7087 \\ 2784  6308 6675 4436 \\ 144481}  \\ \hline
    ${\bf A}_8(16,4,5)$ &\tabincell{l}{5446 0577 1721 3633 \\ 0719 7615 9692 0231 \\ 0769 4592}& \tabincell{l}{5446 0577 1721 3633 \\ 0719  7599 7316 1459 \\ 1773 4912} \\ \hline
    ${\bf A}_9(16,4,5)$ &\tabincell{l}{9699 4183 7024 2962 \\ 7092 2774 1522 1892\\ 3118 366723} &\tabincell{l}{9699 4183 7024 2962 \\  7092  2772 6496 2544 \\ 1091 618321}  \\ \hline

    ${\bf A}_2(16,4,8)$ & 8168 0045 6478 22848 & 8160 5776 6327 40149  \\ \hline
    ${\bf A}_3(16,4,8)$ & \tabincell{l}{5336 9600 6404 7301 \\ 6301 5741 843} & \tabincell{l}{5336 9353 1575 0209 \\ 5137 0793 043} \\ \hline
    ${\bf A}_4(16,4,8)$ &\tabincell{l}{5220 5715 4021 0932 \\ 7828 4022 5766 309888} &    \tabincell{l}{5220 5709 4302 9781 \\ 8840 3820 4796 960768}  \\ \hline
    ${\bf A}_5(16,4,8)$ &\tabincell{l}{1390 6420 2069 8714 \\ 8586 1618 0634 7351 0742 1875} &  \tabincell{l}{1390 6420 1127 5456 \\ 2178 5003 6835 9069 8242 1875}   \\ \hline
    ${\bf A}_7(16,4,8)$ &\tabincell{l}{2116 9221 7279 9490 \\ 0559 0051 3249 1560 \\ 2872 8216 9823 9203}&   \tabincell{l}{2116 9221 7258 4282 \\ 5319 3267 9190  6528 \\ 3985 7934 6055 9203}  \\ \hline
    ${\bf A}_8(16,4,8)$ &\tabincell{l}{3742 5023 5695 3339 \\ 0948 5184 1673 2113 \\ 8071 8757 7771 1833088}&  \tabincell{l}{3742 5023 5688 0309 \\ 3259 7524 6537 6356 \\ 2060 4406 9619 3548288}  \\ \hline
    ${\bf A}_9(16,4,8)$ &\tabincell{l}{2739 4022 3246 1542 \\ 0667 5302 9314 3680  \\ 4036 6083 7004 6051 871523}&\tabincell{l}{2739 4022 3244 9001 \\ 7516 2201 2431 4566 \\ 4311 8182 5520 1568 894883}\\\hline

    ${\bf A}_2(17,4,5)$ & 3203 6595 7408552 & 3203 6594 9667112  \\ \hline
    ${\bf A}_3(17,4,5)$ & 8134 8354 0402 8193 8373 636 & 8134 8354 0402 1799 6409 822  \\ \hline
    ${\bf A}_4(17,4,5)$ &\tabincell{l}{7965 9830 8515 8475 \\ 5714 3479 63840} & \tabincell{l}{7965 9830 8515 8472 \\ 6337 2395 58720}  \\ \hline
    ${\bf A}_5(17,4,5)$ &\tabincell{l}{3560 0445 9366 3614 \\ 2164 1799 4533 338500} &  \tabincell{l}{3560 0445 9366 3614 \\ 2128 0741 2000 712875} \\ \hline
    ${\bf A}_7(17,4,5)$ &\tabincell{l}{3672 1513 7485 3733 \\ 9070 3590 3605 3137 \\ 2645 46455}& \tabincell{l}{3672 1513 7485 3733 \\ 9070 3537 7403 5429 \\ 7833 36374} \\ \hline
    ${\bf A}_8(17,4,5)$ &\tabincell{l}{2230 7052 4067 5218 \\ 2674 9141 5501 7163 \\  4469 3581 9264}&  \tabincell{l}{2230 7052 4067 5218 \\ 2674 9140 5936 7518 \\ 6961 8703 1040}  \\ \hline
    ${\bf A}_9(17,4,5)$& \tabincell{l}{6363 7883 9248 8962 \\ 4285 1438 5840 6278 \\ 7421 4317 118244}&\tabincell{l}{6363 7883 9248 8962 \\ 4285 1438 4600 8757 \\ 1564 9107 710443}\\\hline

    ${\bf A}_2(18,4,5)$ & 5125 9206 2259 6904 & 5125 9205 9163 1144    \\ \hline
    ${\bf A}_3(18,4,5)$ & 6589 1997 5982 6179 3869 53990 & 6588 8606 4307 3901 6378 89182   \\ \hline
    ${\bf A}_4(18,4,5)$ &\tabincell{l}{2039 2915 1385 7915 \\ 5716 6182 2704 5888} &    \tabincell{l}{2039 2822 3978 3579 \\ 7265 2526 6919 3216}    \\ \hline
    ${\bf A}_5(18,4,5)$ &\tabincell{l}{2225 0278 5986 2960 \\ 9839 3994 3854 0551 29500} & \tabincell{l}{2225 0264 9575 8734 \\ 6352 6612 2408 3590 25000}  \\ \hline
    ${\bf A}_7(18,4,5)$ &\tabincell{l}{8816 8354 5028 5580 \\ 4527 8570 8971 4171 \\ 0541 1823 9611}&\tabincell{l}{8816 8351 9743 4932 \\ 7793 7100 0707 9716 \\ 4432 5109 9490}\\ \hline
    ${\bf A}_8(18,4,5)$ &\tabincell{l}{9136 9686 6565 4404 \\ 7861  6176 8715 7466 \\ 7878 8938 8430848} & \tabincell{l}{9136 9685 8810 1376 \\ 1933 1999 5940 0530 \\ 0227 3383 7074432}    \\  \hline
    ${\bf A}_9(18,4,5)$ &\tabincell{l}{4175 2815 6429 5427 \\ 4523 1912 0967 5971 \\ 7973 6550 1634 012230} & \tabincell{l}{4175 2815 5218 0051 \\ 4325 7793 3240 9757 \\ 9235 7748 3683 264556}    \\ \hline

    ${\bf A}_2(18,4,6)$ & 1321 0683 8054 5845184 & 1321 0657 3684 4576704   \\ \hline
    ${\bf A}_3(18,4,6)$ &\tabincell{l}{4324 1984 5318 8854 \\ 8932 5355 54684}  &\tabincell{l}{4324 1984 5121 9278 \\ 9981 1406 81783}  \\ \hline
    ${\bf A}_4(18,4,6)$ &\tabincell{l}{1336 4977 3466 9987 \\ 4494 2103 7830 6137 62048}   &    \tabincell{l}{1336 4977 3466 8298 \\ 4303 3566 4941 2839 34208}  \\ \hline
    ${\bf A}_5(18,4,6)$ &\tabincell{l}{8691 5431 3455 6232 \\ 1645 6979 0404 3228 \\ 6621 093750}   &   \tabincell{l}{8691 5431 3455 6114 \\ 8125 0767 1449 6643 \\ 5791 015625}   \\ \hline
    ${\bf A}_7(18,4,6)$ &\tabincell{l}{5082 7312 1397 7132 \\ 1237 0271 5485 2231 \\ 8776 4110 1133 5753136}   &  \tabincell{l}{5082 7312 1397 7132 \\ 0481 2175 3530 1491 \\ 9036 5403 9631 9478635}  \\ \hline
    ${\bf A}_8(18,4,6)$ &\tabincell{l}{1532 9290 7353 3720 \\ 1342 0131 7476 1821\\ 4131 0078 6371 2572 1391104}  &   \tabincell{l}{1532 9290 7353 3720 \\ 1326 6141 0675 9748 \\ 0056 0957 6260 2317 8928128}  \\ \hline
    ${\bf A}_9(18,4,6)$ &\tabincell{l}{1797 3218 5298 8389 \\ 5320 8891 5801 6612 \\ 7145 1862 6711 1609 3928 898138}&     \tabincell{l}{1797 3218 5298 8389 \\ 5319 2080 5010 3864 \\ 5852 2272 4684 1493 0785 473613}   \\ \hline

    ${\bf A}_2(18,4,9)$ & 5353 1244 5248 1263 206400 & 5350 7797 0493 6727 838720 \\ \hline
    ${\bf A}_3(18,4,9)$ &\tabincell{l}{2297 3952 1671 4333 \\ 7216 5373 5752 8684349}  &\tabincell{l}{2297 3916 8156 5204 \\ 1702 9355 6329 2591869}  \\ \hline
    ${\bf A}_4(18,4,9)$ &\tabincell{l}{2242 2188 6155 0678 \\ 0283 4073 8745 2105 \\ 7495 0996 3776}   &    \tabincell{l}{2242 2187 9749 4966 \\ 5202 5921 0903 2573 \\ 1520 7125 4016}  \\ \hline
    ${\bf A}_5(18,4,9)$ &\tabincell{l}{2121 9513 6554 7144 \\ 5242 7277 8843 8380 \\ 9190 2732 8491 2109375}   &   \tabincell{l}{2121 9513 6267 2348 \\ 6969 6158 8067 3111 \\ 6395 4734 8022 4609375}   \\ \hline
    ${\bf A}_7(18,4,9)$ &\tabincell{l}{7035 1527 6242 6758 \\ 0416 2560 5506 1526 \\ 5952 2984 6057 3579\\ 8134 8390 86745}   &  \tabincell{l}{7035 1527 6232 4593 \\ 1444 0919 3941 7657 \\ 2694 5128 9708 2189 \\ 8343 1173 45945}  \\ \hline
    ${\bf A}_8(18,4,9)$ &\tabincell{l}{1053 4207 6388 2646 \\ 4848 8236 8607 3137 \\ 9760 8414 6989 8696 \\ 2431 7692 4558 393344}  &   \tabincell{l}{1053 4207 6388 0077 \\ 0579 2562 6795 4570 \\ 4178 5576 6160 5228 \\ 1449 8418 7966 849024}  \\ \hline
    ${\bf A}_9(18,4,9)$ &\tabincell{l}{5076 1676 4229 5809 \\ 7114 5598 7655 7785 \\ 3905 5475 7494 2459 \\ 6609 0855 5802 1960 31451}&     \tabincell{l}{5076 1676 4229 3227 \\ 8160 2958 9753 8502 \\ 2100 0195 2897 8613 \\ 0451 1752 6000 3851 92731}   \\ \hline

    ${\bf A}_2(19,4,5)$ & 8201 4791 1159 59488 & 8201 4790 9849 28448 \\ \hline
    ${\bf A}_3(19,4,5)$ &\tabincell{l}{5336 9771 2296 4435 \\ 3688 0971 278}  &\tabincell{l}{5336 9771 2296 4420 \\ 8282 2788 731}  \\ \hline
    ${\bf A}_4(19,4,5)$ &\tabincell{l}{5220 5625 3384 9331 \\ 8374 9756 9957 708800}   &    \tabincell{l}{5220 5625 3384 9331 \\ 8360 7699 4945 156096}  \\ \hline
    ${\bf A}_5(19,4,5)$ &\tabincell{l}{1390 6415 5984 9215 \\ 8116 1871 0095 6619 \\ 6694 0625}   &   \tabincell{l}{1390 6415 5984 9215 \\ 8116 1835 0560 2479 \\ 0444 0625}   \\ \hline
    ${\bf A}_7(19,4,5)$ &\tabincell{l}{2116 9221 3090 4127 \\ 3671 1759 7474 1768 \\ 4343 1973 3915 7355}   &  \tabincell{l}{2116 9221 3090 4127 \\ 3671 1759 7458 8168 \\ 3177 5206 3961 5255}  \\ \hline
    ${\bf A}_8(19,4,5)$ &\tabincell{l}{3742 5023 3368 6323 \\ 6894 5845 8213 9739 \\ 1254 9369 8700 5648896}  &   \tabincell{l}{3742 5023 3368 6323 \\ 6894 5845 8213 5482 \\ 5392 6608 2829 8182656}  \\ \hline
    ${\bf A}_9(19,4,5)$ &\tabincell{l}{2739 4022 2638 5331 \\ 7449 5366 8303 7639 \\ 3896 7183 5642 4596 423365}&     \tabincell{l}{2739 4022 2638 5331 \\ 7449 5366 8303 7559 \\ 5356 9335 6715 8197 694861}   \\ \hline

    ${\bf A}_2(19,4,6)$ & 4224 2622 2853 8904 2880 & 4224 2601 1357 7889 5040 \\ \hline
    ${\bf A}_3(19,4,6)$ &\tabincell{l}{1050 7720 6904 7370 \\ 9000 6492 9154 5617}  &\tabincell{l}{1050 7720 6899 4192 \\ 3508 9616 2997 7290}  \\ \hline
    ${\bf A}_4(19,4,6)$ &\tabincell{l}{1368 5732 6052 4843 \\ 9975 3210 8659 5253 \\  6240 1280}   &    \tabincell{l}{1368 5732 6052 4735 \\ 9003 1064 4794 6082\\ 5341 9520}  \\ \hline
    ${\bf A}_5(19,4,6)$ &\tabincell{l}{2716 1071 6124 6622 \\ 4559 3611 2536 3003 \\ 2935 1659 828125}   &   \tabincell{l}{2716 1071 6124 6620 \\ 9890 3533 6049 3670 \\ 9799 7900 062500}   \\ \hline
    ${\bf A}_7(19,4,6)$ &\tabincell{l}{8542 5463 4632 2734 \\ 0885 0727 8319 1374 \\ 6697 4647 3297 9651 9336064}   &  \tabincell{l}{8542 5463 4632 2734 \\ 0859 1485 1324 0784 \\ 2886 3871 7088 4593 7182221}  \\ \hline
    ${\bf A}_8(19,4,6)$ &\tabincell{l}{5023 1019 8748 9776 \\ 9686 7148 8089 4365 \\ 5262 3277 5608 9163 \\ 8292 3358208}  &   \tabincell{l}{5023 1019 8748 9776 \\ 9685 9264 4861 2659 \\ 3677 6922 5651 2318 \\ 8118 2314496}  \\ \hline
    ${\bf A}_9(19,4,6)$ &\tabincell{l}{1061 3005 8093 3177 \\ 5107 6560 0915 0855 \\ 9074 2028 9616 4433 \\ 9549 4258 9146667}&     \tabincell{l}{1061 3005 8093 3177 \\ 5107 6437 5387 4171 \\  9735 5772 2490 6664 \\ 8070 1103 2667942}   \\ \hline
\end{longtable}

\newpage

\begin{longtable}{|l@{\extracolsep{\fill}}|l|l|l||}
    \caption{\label{tab:d=6}Theorem \ref{ct:multi-blocks} d=6}\\ \hline
    ${\bf A}_q(n,d,k)$ & New & Old\\ \hline  \hline \endfirsthead
    \multicolumn{4}{r}{continued table} \\ \hline
    ${\bf A}_q(n,d,k)$ & New & Old\\ \hline \endhead

    ${\bf A}_2(18,6,6)$ & 2829 5832 3493518 & 2829 5832 3460750 \\ \hline
    ${\bf A}_3(18,6,6)$ &\tabincell{l}{7977 3414 6743 2777 8613776}  &\tabincell{l}{7977 3414 6743 2776 4264869}  \\ \hline
    ${\bf A}_4(18,6,6)$ &\tabincell{l}{7922 8596 9086 1399 \\ 5335 4256 05660}   &    \tabincell{l}{7922 8596 9086 1399 \\ 5334 3518 63836}  \\ \hline
    ${\bf A}_5(18,6,6)$ &\tabincell{l}{3552 7160 6160 5390 \\ 6089 3919 2136 113320}   &   \tabincell{l}{3552 7160 6160 5390 \\ 6089 3916 1618 535195}   \\ \hline
    ${\bf A}_7(18,6,6)$ &\tabincell{l}{3670 3369 3031 7493 \\ 1772 0054 8142 8975 \\ 2151 02688}   &  \tabincell{l}{3670 3369 3031 7493 \\ 1772 0054 8142 4227 \\ 6535 92745}  \\ \hline
    ${\bf A}_8(18,6,6)$ &\tabincell{l}{2230 0745 3917 5803 \\ 6632 4706 9642 8537 \\ 0816 5403 6856}  &   \tabincell{l}{2230 0745 3917 5803 \\ 6632 4706 9642 8501 \\ 8972 8194 8024}  \\ \hline
    ${\bf A}_9(18,6,6)$ &\tabincell{l}{6362 6854 5986 5481 \\ 8930 7460 2002 4319 \\ 4539 0102 248520}&     \tabincell{l}{6362 6854 5986 5481 \\ 8930 7460 2002 4317 \\ 3949 8970 153871}   \\ \hline
\end{longtable}

\newpage
\begin{longtable}{|l@{\extracolsep{\fill}}|l|l|l||}
    \caption{\label{tab:parallel multiple blocks d6}Theorem \ref{ct:parallel multiple blocks} d=6}\\ \hline
    ${\bf A}_q(n,d,k)$ & New & Old\\ \hline  \hline \endfirsthead
    \multicolumn{4}{r}{continued table} \\ \hline
    ${\bf A}_q(n,d,k)$ & New & Old\\ \hline \endhead
    ${\bf A}_2(12,6,6)$ & 1686 5664 & 1686 5630 \\ \hline
    ${\bf A}_3(12,6,6)$ &\tabincell{l}{2824 5422 1144}  &\tabincell{l}{2824 5422 0859}   \\ \hline
    ${\bf A}_4(12,6,6)$ &\tabincell{l}{2814 7651 9990 600}  &  \tabincell{l}{2814 7651 9989 404}  \\ \hline
    ${\bf A}_5(12,6,6)$ &\tabincell{l}{5960 4684 7522 26540}   & \tabincell{l}{5960 4684 7522 22945}   \\ \hline
    ${\bf A}_7(12,6,6)$ &\tabincell{l}{1915 8123 7048 5580 13104}   &  \tabincell{l}{1915 8123 7048 5579 94295}  \\ \hline
    ${\bf A}_8(12,6,6)$ &\tabincell{l}{4722 3665 2378 7141 634864}   &     \tabincell{l}{4722 3665 2378 7141 598584}  \\ \hline
    ${\bf A}_9(12,6,6)$ &\tabincell{l}{7976 6443 3116 7725 7540 500}   &   \tabincell{l}{7976 6443 3116 7725 7475 709}   \\ \hline

    ${\bf A}_2(16,6,8)$ & 2829 2768 4887 704 & 2829 2768 4884 928 \\ \hline
    ${\bf A}_3(16,6,8)$ &\tabincell{l}{7977 3403 8582 1485 4319 604}  &\tabincell{l}{7977 3403 8582 1485 4088 403}   \\ \hline
    ${\bf A}_4(16,6,8)$ &\tabincell{l}{7922 8596 7952 0959 8385 \\ 5275 78944}  &  \tabincell{l}{7922 8596 7952 0959 8385 \\ 5223 72608}  \\ \hline
    ${\bf A}_5(16,6,8)$ &\tabincell{l}{3552 7160 6144 6350 4786 \\ 5950 4366 844500}   & \tabincell{l}{3552 7160 6144 6350 4786 \\ 5950 4308 421875}   \\ \hline
    ${\bf A}_7(16,6,8)$ &\tabincell{l}{3670 3369 3031 6550 6402 \\ 6817 0441 9419 8449 80004}   &  \tabincell{l}{3670 3369 3031 6550 6402 \\ 6817 0441 9417 5869 40003}  \\ \hline
    ${\bf A}_8(16,6,8)$ &\tabincell{l}{2230 0745 3917 5728 7672 \\ 3615 6375 2919 8570 8555 1104}   &     \tabincell{l}{2230 0745 3917 5728 7672 \\ 3615 6375 2919 8474 2632 6528}  \\ \hline
    ${\bf A}_9(16,6,8)$ &\tabincell{l}{6362 6854 5986 5446 2048 \\ 6152 6050 7124 2490 8815 104484}   &   \tabincell{l}{6362 6854 5986 5446 2048 \\ 6152 6050 7124 2487 3957 350563}   \\ \hline

    ${\bf A}_2(19,6,6)$ & 4527 3330 8759 0608 & 4527 3330 8758 6958\\ \hline
    ${\bf A}_3(19,6,6)$ &\tabincell{l}{6461 6465 8861 9087 5700 28526}  &\tabincell{l}{6461 6465 8861 9087 5697 71903}   \\ \hline
    ${\bf A}_4(19,6,6)$ &\tabincell{l}{2028 2520 8086 0518 1180 \\ 3566 2610 8488}  &  \tabincell{l}{2028 2520 8086 0518 1180 \\ 3566 2059 9324}  \\ \hline
    ${\bf A}_5(19,6,6)$ &\tabincell{l}{2220 4475 3850 3369 1301 \\ 9528 8648 9837 85290}   & \tabincell{l}{2220 4475 3850 3369 1301 \\  9528 8648 9232 22695}   \\ \hline
    ${\bf A}_7(19,6,6)$ &\tabincell{l}{8812 4789 6969 2301 1184 \\ 5835 5714 5587 3536 6074 7838}   &  \tabincell{l}{8812 4789 6969 2301 1184 \\ 5835 5714 5587 3513 6047 4303}  \\ \hline
    ${\bf A}_8(19,6,6)$ &\tabincell{l}{9134 3853 1246 4091 8046 \\ 5999 2858 2455 7658 7389 \\ 6868656}   &     \tabincell{l}{9134 3853 1246 4091 8046 \\ 5999 2858 2455 7658 6409 \\ 8711928}  \\ \hline
    ${\bf A}_9(19,6,6)$ &\tabincell{l}{4174 5579 3021 7742 6700 \\ 4624 6291 8490 3548 8843 \\ 5323 855978}   &   \tabincell{l}{4174 5579 3021 7742 6700 \\ 4624 6291 8490 3548 8840 \\ 0068 059399}   \\ \hline

\end{longtable}

\begin{longtable}{|l@{\extracolsep{\fill}}|l|l|l||}
    \caption{\label{tab:parallel multiple blocks d8}Theorem \ref{ct:parallel multiple blocks} d=8}\\ \hline
    ${\bf A}_q(n,d,k)$ & New & Old\\ \hline  \hline \endfirsthead
    ${\bf A}_2(16,8,8)$ & 1099 5628 94524 & 1099 5628 93998\\ \hline
    ${\bf A}_3(16,8,8)$ &\tabincell{l}{1215 7665 9570 9072 2244}  &\tabincell{l}{1215 7665 9570 9071 1843}   \\ \hline
    ${\bf A}_4(16,8,8)$ &\tabincell{l}{1208 9258 2002 2366 9131 73944}  &  \tabincell{l}{1208 9258 2002 2366 9130 82908}  \\ \hline
    ${\bf A}_5(16,8,8)$ &\tabincell{l}{9094 9470 1780 7612 5205 9084 3140}   & \tabincell{l}{9094 9470 1780 7612 5205 9034 0195}   \\ \hline
    ${\bf A}_7(16,8,8)$ &\tabincell{l}{6366 8057 6090 9256 9002 \\ 8088 9458 243204}   &  \tabincell{l}{6366 8057 6090 9256 9002 \\ 8088 9451 403203}  \\ \hline
    ${\bf A}_8(16,8,8)$ &\tabincell{l}{1329 2279 9578 4921 3674 \\ 3970 1812 8694 41264}   &     \tabincell{l}{1329 2279 9578 4921 3674 \\ 3970 1812 8500 10488}  \\ \hline
    ${\bf A}_9(16,8,8)$ &\tabincell{l}{1478 0882 9414 3460 1431 \\ 1989 3459 8896 6906 884}   &   \tabincell{l}{1478 0882 9414 3460 1431 \\ 1989 3459 8891 7956 163}   \\ \hline
\end{longtable}

\newpage
\begin{longtable}{|l@{\extracolsep{\fill}}|l|l|l||}
    \caption{Multilevel type inserting construction I for d=4} \\ \hline
    \label{tab:mutilevel type linkage d4}
    ${\bf A}_q(n,d,k)$ & New & Old\\ \hline  \hline \endfirsthead
    \multicolumn{4}{r}{continued table} \\ \hline
    ${\bf A}_q(n,d,k)$ & New & Old\\ \hline \endhead
    ${\bf A}_2(12,4,6)$ & 1214 577088 & 1212 491081\\ \hline
    ${\bf A}_3(12,4,6)$ &\tabincell{l}{2099 4929 8509 708}  &\tabincell{l}{2099 4378 4809 333}   \\ \hline
    ${\bf A}_4(12,4,6)$ &\tabincell{l}{1159 1944 1178 6072 064}  &  \tabincell{l}{1159 1928 855 1240 0896}  \\ \hline
    ${\bf A}_5(12,4,6)$ &\tabincell{l}{9332 4349 9108 6744 37500}   & \tabincell{l}{9332 4337 6349 6412 34375}   \\ \hline
    ${\bf A}_7(12,4,6)$ &\tabincell{l}{2255 0482 5265 0931 5086 \\ 948148}   &  \tabincell{l}{2255 0482 4318 3963 0511 \\ 043689}  \\ \hline
    ${\bf A}_8(12,4,6)$ &\tabincell{l}{1238 2901 4650 6193 7017 \\ 8530 5088}   &     \tabincell{l}{1238 2901 4517 0956 0581 \\ 4349 0048}  \\ \hline
    ${\bf A}_9(12,4,6)$ &\tabincell{l}{4239 8506 4977 3534 8229 \\ 2202 16724}   &   \tabincell{l}{4239 8506 4839 1042 4395 \\ 4820 22091}   \\ \hline

    ${\bf A}_2(14,4,7)$ & 4980 1102 22336 & 4975 8590 33088 \\ \hline
    ${\bf A}_3(14,4,7)$ &\tabincell{l}{1115 8069 9489 6844 12022}  &\tabincell{l}{1115 7972 4707 5781 87435}   \\ \hline
    ${\bf A}_4(14,4,7)$ &\tabincell{l}{1944 8126 1068 5635 2442 744832}  &  \tabincell{l}{1944 8119 7073 7370 4380 940288}  \\ \hline
    ${\bf A}_5(14,4,7)$ &\tabincell{l}{2278 4276 5460 4719 1286 \\ 4609 375000}   & \tabincell{l}{2278 4275 9466 8902 8978 \\ 7216 796875}   \\ \hline
    ${\bf A}_7(14,4,7)$ &\tabincell{l}{3121 2771 2852 6308 1730 \\ 9342 7913 4362 4106}   &  \tabincell{l}{3121 2771 2665 4399 1534 \\ 2239 2501 1209 3363}  \\ \hline
    ${\bf A}_8(14,4,7)$ &\tabincell{l}{8509 4651 5464 2025 8528 \\ 9699 9940 1260 941312}   &     \tabincell{l}{8509 4651 5349 5070 8209 \\ 2409 6484 8376 872960}  \\ \hline
    ${\bf A}_9(14,4,7)$ &\tabincell{l}{1197 4590 5684 5506 9588 \\ 3004 2674 4484 3038 26412}   &   \tabincell{l}{1197 4590 5680 2122 9813 \\ 0213 9115 1254 3824 67123}   \\ \hline

    ${\bf A}_2(16,4,8)$ & 8164 2270 4541 53216 & 8160 5776 6327 40149 \\ \hline
    ${\bf A}_3(16,4,8)$ &\tabincell{l}{5336 9510 1888 0770 0238 8592396}  &\tabincell{l}{5336 9353 1575 0209 5137 0793043}   \\ \hline
    ${\bf A}_4(16,4,8)$ &\tabincell{l}{5220 5713 7400 2282 1019 \\ 4237 1362 013184}  &  \tabincell{l}{5220 5709 4302 9781 8840 \\ 3820 4796 960768}  \\ \hline
    ${\bf A}_5(16,4,8)$ &\tabincell{l}{1390 6420 1860 2751 9606 \\ 7703 5069 7021 4843 7500}   & \tabincell{l}{1390 6420 1127 5456 2178 \\ 5003 6835 9069 8242 1875}   \\ \hline
    ${\bf A}_7(16,4,8)$ &\tabincell{l}{2116 9221 7276 5722 7048 \\ 6063 6213 0694 8430 1055 \\ 3072 2356}   &  \tabincell{l}{2116 9221 7258 4282 5319 \\ 3267 9190 6528 3985 7934 \\ 6055 9203}  \\ \hline
    ${\bf A}_8(16,4,8)$ &\tabincell{l}{3742 5023 5694 3378 7991 \\ 6013 9376 6437 1325 8017 \\ 7902 6722816}   &     \tabincell{l}{3742 5023 5688 0309 3259 \\ 7524 6537 6356 2060 4406 \\ 9619 3548288}  \\ \hline
    ${\bf A}_9(16,4,8)$ &\tabincell{l}{2739 4022 3246 0031 0189 \\ 1833 9047 3260 4437 6692 \\ 4595 0047 702164}   &   \tabincell{l}{2739 4022 3244 9001 7516 \\ 2201 2431 4566 4311 8182 \\ 5520 1568 894883}   \\ \hline

    ${\bf A}_2(18,4,6)$ & 1321 0657 4623 0904768 & 1321 0657 3684 4576704 \\ \hline
    ${\bf A}_3(18,4,6)$ &\tabincell{l}{4324 1984 5121 9576 3914 \\ 1825 83632}  &\tabincell{l}{4324 1984 5121 9278 9981 \\ 1406 81783}   \\ \hline
    ${\bf A}_4(18,4,6)$ &\tabincell{l}{1336 4977 3466 8298 4560 \\ 6020 1548 6904 89344}  &  \tabincell{l}{1336 4977 3466 8298 4303 \\ 3566 4941 2839 34208}  \\ \hline
    ${\bf A}_5(18,4,6)$ &\tabincell{l}{8691 5431 3455 6114 8128 \\ 0792 9817 0120 7519 531250}   & \tabincell{l}{8691 5431 3455 6114 8125 \\ 0767 1449 6643 5791 015625}   \\ \hline
    ${\bf A}_7(18,4,6)$ &\tabincell{l}{5082 7312 1397 7132 0481 \\ 2176 6639 7654 3549 6989 \\ 4465 6492284}   &  \tabincell{l}{5082 7312 1397 7132 0481 \\ 2175 3530 1491 9036 5403 \\ 9631 9478635}  \\ \hline
    ${\bf A}_8(18,4,6)$ &\tabincell{l}{1532 9290 7353 3720 1326 \\ 6141 0767 7564 4870 0639 \\ 9664 3788 9277952}   &     \tabincell{l}{1532 9290 7353 3720 1326 \\ 6141 0675 9748 0056 0957 \\ 6260 2317 8928128}  \\ \hline
    ${\bf A}_9(18,4,6)$ &\tabincell{l}{1797 3218 5298 8389 5319 \\ 2080 5014 2916 7855 8802 \\ 7766 6966 7098 001534}   &   \tabincell{l}{1797 3218 5298 8389 5319 \\ 2080 5010 3864 5852 2272 \\ 4684 1493 0785 473613}   \\ \hline

    ${\bf A}_2(18,4,9)$ & 5351 9959 2108 4465 545216 & 5350 7797 0493 6727 838720 \\ \hline
    ${\bf A}_3(18,4,9)$ &\tabincell{l}{2297 3939 4211 0816 4158 \\ 5018 7866 3981024}  &\tabincell{l}{2297 3916 8156 5204 1702 \\ 9355 6329 2591869}   \\ \hline
    ${\bf A}_4(18,4,9)$ &\tabincell{l}{2242 2188 4380 9999 7816 \\ 5886 2932 7201 4787 6544 5120}  &  \tabincell{l}{2242 2187 9749 4966 5202 \\ 5921 0903 2573 1520 7125 4016}  \\ \hline
    ${\bf A}_5(18,4,9)$ &\tabincell{l}{2121 9513 6490 9121 3408 \\ 8277 0463 2480 8519 3634 \\ 0332 0312500}   & \tabincell{l}{2121 9513 6267 2348 6969 \\ 6158 8067 3111 6395 4734 \\ 8022 4609375}   \\ \hline
    ${\bf A}_7(18,4,9)$ &\tabincell{l}{7035 1527 6241 0737 9007 \\ 1189 2951 3768 2813 6289 \\ 7842 4922 8045 5523 95936}   &  \tabincell{l}{7035 1527 6232 4593 1444 \\ 0919 3941 7657 2694 5128 \\ 9708 2189 8343 1173 45945}  \\ \hline
    ${\bf A}_8(18,4,9)$ &\tabincell{l}{1053 4207 6388 2296 1833 \\ 5816 7635 2515 7261 2602 \\ 9564 3151 9825 8525 2103 061504}   &     \tabincell{l}{1053 4207 6388 0077 0579 \\ 2562 6795 4570 4178 5576 \\ 6160 5228 1449 8418 7966 849024}  \\ \hline
    ${\bf A}_9(18,4,9)$ &\tabincell{l}{5076 1676 4229 5498 6818 \\ 7238 4546 8292 9652 0878 \\ 2773 0494 1709 3868 1188 \\ 7146 30340}   &   \tabincell{l}{5076 1676 4229 3227 8160 \\ 2958 9753 8502 2100 0195 \\ 2897 8613 0451 1752 6000 \\ 3851 92731}   \\ \hline

    ${\bf A}_2(19,4,6)$ & 4224 2601 1733 4203 3088 & 4224 2601 1357 7889 5040 \\ \hline
    ${\bf A}_3(19,4,6)$ &\tabincell{l}{1050 7720 6899 4195 0274 \\ 6588 7055 2417}  &\tabincell{l}{1050 7720 6899 4192 3508 \\ 9616 2997 7290}   \\ \hline
    ${\bf A}_4(19,4,6)$ &\tabincell{l}{1368 5732 6052 4735 9007 \\ 2223 7673 8963 6291 7888 }  &  \tabincell{l}{1368 5732 6052 4735 9003 \\ 1064 4794 6082 5341 9520}  \\ \hline
    ${\bf A}_5(19,4,6)$ &\tabincell{l}{2716 1071 6124 6620 9890 \\ 3608 6695 3468 8391 7089 515625}   & \tabincell{l}{2716 1071 6124 6620 9890 \\ 3533 6049 3670 9799 7900 062500}   \\ \hline
    ${\bf A}_7(19,4,6)$ &\tabincell{l}{8542 5463 4632 2734 0859 \\ 1485 1388 3155 4865 9869 \\ 1920 1321 9357464}   &  \tabincell{l}{8542 5463 4632 2734 0859 \\ 1485 1324 0784 2886 3871 \\ 7088 4593 7182221}  \\ \hline
    ${\bf A}_8(19,4,6)$ &\tabincell{l}{5023 1019 8748 9776 9685 \\ 9264 4861 8533 3932 5495 \\ 1098 1840 4566 6500608}   &     \tabincell{l}{5023 1019 8748 9776 9685 \\ 9264 4861 2659 3677 6922 \\ 5651 2318 8118 2314496}  \\ \hline
    ${\bf A}_9(19,4,6)$ &\tabincell{l}{1061 3005 8093 3177 5107 \\ 6437 5387 4203 6058 4002 \\ 6449 1270 7651 1789 9237775}   &   \tabincell{l}{1061 3005 8093 3177 5107 \\ 6437 5387 4171 9735 5772 \\ 2490 6664 8070 1103 2667942}   \\ \hline

\end{longtable}

\begin{longtable}{|l@{\extracolsep{\fill}}|l|l|l||}
    \caption{Multilevel type inserting construction I for d=6}\\ \hline
    \label{tab:mutilevel type linkage d6}
    ${\bf A}_q(n,d,k)$ & New & Old\\ \hline  \hline \endfirsthead
    \multicolumn{4}{r}{continued table} \\ \hline
    ${\bf A}_q(n,d,k)$ & New & Old\\ \hline \endhead
    ${\bf A}_2(18,6,9)$ & 9271 5451 7959 0910976 & 9271 5451 5658 5415680 \\ \hline
    ${\bf A}_3(18,6,9)$ &\tabincell{l}{1144 6612 8018 8122 7843 \\ 3650 9436778}  &\tabincell{l}{1144 6612 8018 8113 2295 \\ 9613 3396283}   \\ \hline
    ${\bf A}_4(18,6,9)$ &\tabincell{l}{8507 1058 1461 8280 3382 \\ 5386 3601 4314 848256 }  &  \tabincell{l}{8507 1058 1461 8280 3276 \\ 5044 7701 9755 511808}  \\ \hline
    ${\bf A}_5(18,6,9)$ &\tabincell{l}{1084 2028 9965 7109 7790 \\ 6843 3370 7850 6016 9531 25000}   & \tabincell{l}{1084 2028 9965 7109 7790 \\ 6690 8453 1512 0244 2480 46875}   \\ \hline
    ${\bf A}_7(18,6,9)$ &\tabincell{l}{1742 5150 3388 9755 5131 \\ 8884 9318 2913 8314 8363 \\ 1082 9271 617326}   &  \tabincell{l}{1742 5150 3388 9755 5131 \\ 8884 9225 9937 0807 8414 \\ 3865 7444 402835}  \\ \hline
    ${\bf A}_8(18,6,9)$ &\tabincell{l}{7846 3772 3721 9197 9113 \\ 8381 6353 7233 7401 8480 \\ 6166 8202 2848 43008}   &     \tabincell{l}{7846 3772 3721 9197 9113 \\ 8381 6346 3523 5771 6094 \\ 0063 4183 4301 76768}  \\ \hline
    ${\bf A}_9(18,6,9)$ &\tabincell{l}{1310 0205 1249 3866 3392 \\ 0687 0302 3644 3158 8157 \\ 2950 2470 8175 3276 64556}   &   \tabincell{l}{1310 0205 1249 3866 3392 \\ 0687 0302 3291 8871 4066 \\ 2694 0512 3241 1297 98163}   \\ \hline
\end{longtable}

\newpage
\begin{longtable}{|l@{\extracolsep{\fill}}|l|l|l||}
    \caption{Multilevel type inserting construction II for d=4} \\ \hline
    \label{tab:mutilevel type linkage case2 d4}
    ${\bf A}_q(n,d,k)$ & New & Old\\ \hline  \hline \endfirsthead
    ${\bf A}_q(n,d,k)$ & New & Old\\ \hline \endhead
    ${\bf A}_2(10,4,5)$ & 1178 828 & 1178 824\\ \hline
    ${\bf A}_3(10,4,5)$ &\tabincell{l}{3554 738334  }  &\tabincell{l}{3554 738325 \enspace \enspace}   \\ \hline
    ${\bf A}_4(10,4,5)$ &\tabincell{l}{1105 4718 72592  }  &  \tabincell{l}{1105 4718 72576 \enspace \enspace}  \\ \hline
    ${\bf A}_5(10,4,5)$ &\tabincell{l}{9556 3831 276400  }   & \tabincell{l}{9556 3831 276375 \enspace \enspace}   \\ \hline
    ${\bf A}_7(10,4,5)$ &\tabincell{l}{7983 1695 1903 51258  }   &  \tabincell{l}{7983 1695 1903 51209 \enspace \enspace}  \\ \hline
    ${\bf A}_8(10,4,5)$ &\tabincell{l}{1153 2474 8896 7549 504  }   &     \tabincell{l}{1153 2474 8896 7549 440 \enspace \enspace}  \\ \hline
    ${\bf A}_9(10,4,5)$ &\tabincell{l}{1215 9772 5913 5850 8732  \enspace \enspace  }   &   \tabincell{l}{1215 9772 5913 5850 8651 \enspace \enspace}   \\ \hline

    ${\bf A}_2(16,4,4)$ &\tabincell{l}{8059 6325666}   &   \tabincell{l}{8059 6320222}   \\ \hline
\end{longtable}

\begin{longtable}{|l@{\extracolsep{\fill}}|l|l|l||}
    \caption{Multilevel type inserting construction II for d=6} \\ \hline
    \label{tab:mutilevel type linkage case2 d6}
    ${\bf A}_q(n,d,k)$ & New & Old\\ \hline  \hline \endfirsthead
    \multicolumn{4}{r}{continued table} \\ \hline
    ${\bf A}_q(n,d,k)$ & New & Old\\ \hline \endhead
    ${\bf A}_2(14,6,7)$ & 3453 2242 136 & 3453 2242 120\\ \hline
    ${\bf A}_3(14,6,7)$ &\tabincell{l}{5003 5894 1069 18724}  &\tabincell{l}{5003 5894 1069 18643}   \\ \hline
    ${\bf A}_4(14,6,7)$ &\tabincell{l}{1180 5980 8585 2258 285376}  &  \tabincell{l}{1180 5980 8585 2258 285120}  \\ \hline
    ${\bf A}_5(14,6,7)$ &\tabincell{l}{2910 3849 9692 0980 8789 \\ 39500}   & \tabincell{l}{2910 3849 9692 0980 8789 \\ 38875}   \\ \hline
    ${\bf A}_7(14,6,7)$ &\tabincell{l}{3788 1870 3472 3755 6473 \\ 1907 006636}   &  \tabincell{l}{3788 1870 3472 3755 6473 \\ 1907 004235}  \\ \hline
    ${\bf A}_8(14,6,7)$ &\tabincell{l}{4056 4819 5587 6990 8757 \\ 7560 1388 3904}   &     \tabincell{l}{4056 4819 5587 6990 8757 \\ 7560 1387 9808}  \\ \hline
    ${\bf A}_9(14,6,7)$ &\tabincell{l}{2503 1555 1236 1524 8786 \\ 0765 8765 797556}   &   \tabincell{l}{2503 1555 1236 1524 8786 \\ 0765 8765 790995}   \\ \hline

    ${\bf A}_2(18,6,6)$ & 2829 5832 3494038 & 2829 5832 3460750\\ \hline
    ${\bf A}_3(18,6,6)$ &\tabincell{l}{7977 3414 6743 2777 8633486}  &\tabincell{l}{7977 3414 6743 2776 4264869}   \\ \hline
    ${\bf A}_4(18,6,6)$ &\tabincell{l}{7922 8596 9086 1399 5335 \\ 4258 67868}  &  \tabincell{l}{7922 8596 9086 1399 5334 \\ 3518 63836}  \\ \hline
    ${\bf A}_5(18,6,6)$ &\tabincell{l}{3552 7160 6160 5390 6089 \\ 3919 2138 066570}   & \tabincell{l}{3552 7160 6160 5390 6089 \\ 3916 1618 535195}   \\ \hline
    ${\bf A}_7(18,6,6)$ &\tabincell{l}{3670 3369 3031 7493 1772 \\ 0054 8142 8975 2554 56638}   &  \tabincell{l}{3670 3369 3031 7493 1772 \\ 0054 8142 4227 6535 92745}  \\ \hline
    ${\bf A}_8(18,6,6)$ &\tabincell{l}{2230 0745 3917 5803 6632 \\ 4706 9642 8537 0817 8825 \\ 5096}   &  \tabincell{l}{2230 0745 3917 5803 6632 \\ 4706 9642 8501 8972 8194 \\ 8024}  \\ \hline
    ${\bf A}_9(18,6,6)$ &\tabincell{l}{6362 6854 5986 5481 8930 \\ 7460 2002 4319 4539 0489 \\ 669738}   &  \tabincell{l}{6362 6854 5986 5481 8930 \\ 7460 2002 4317 3949 8970 \\ 153871}  \\ \hline

    ${\bf A}_2(19,6,6)$ & 4527 3330 8765 3534 & 4527 3330 8758 6958\\ \hline
    ${\bf A}_3(19,6,6)$ &\tabincell{l}{6461 6465 8861 9087 6128 \\ 77754}  &\tabincell{l}{6461 6465 8861 9087 5697 \\ 71903}   \\ \hline
    ${\bf A}_4(19,6,6)$ &\tabincell{l}{2028 2520 8086 0518 1180 \\ 3609 1661 5452}  &  \tabincell{l}{2028 2520 8086 0518 1180 \\ 3566 2059 9324}  \\ \hline
    ${\bf A}_5(19,6,6)$ &\tabincell{l}{2220 4475 3850 3369 1301 \\ 9528 8801 5208 79570}   & \tabincell{l}{2220 4475 3850 3369 1301 \\ 9528 8648 9232 22695}   \\ \hline
    ${\bf A}_7(19,6,6)$ &\tabincell{l}{8812 4789 6969 2301 1184 \\ 5835 5714 5620 5845 7352 \\ 1554}   &  \tabincell{l}{8812 4789 6969 2301 1184 \\ 5835 5714 5587 3513 6047 \\ 4303}  \\ \hline
    ${\bf A}_8(19,6,6)$ &\tabincell{l}{9134 3853 1246 4091 8046 \\ 5999 2858 2456 0473 4014 \\ 9168504}   &  \tabincell{l}{9134 3853 1246 4091 8046 \\ 5999 2858 2455 7658 6409 \\ 8711928}  \\ \hline
    ${\bf A}_9(19,6,6)$ &\tabincell{l}{4174 5579 3021 7742 6700 \\ 4624 6291 8490 3567 4142 \\ 3743 702202}   &  \tabincell{l}{4174 5579 3021 7742 6700 \\ 4624 6291 8490 3548 8840 \\ 0068 059399}  \\ \hline
\end{longtable}

\newpage

\begin{longtable}{|l@{\extracolsep{\fill}}|l|l|l||}
    \caption{Multilevel type inserting construction II for d=8} \\ \hline
    \label{tab:mutilevel type linkage case2 d8}
    ${\bf A}_q(n,d,k)$ & New & Old\\ \hline  \hline \endfirsthead
    \multicolumn{4}{r}{continued table} \\ \hline
    ${\bf A}_q(n,d,k)$ & New & Old\\ \hline \endhead
    ${\bf A}_2(18,8,9)$ & 1801 5215 3991 16904 & 1801 5215 3991 16872\\ \hline
    ${\bf A}_3(18,8,9)$ &\tabincell{l}{5814 9739 3804 1767 0685  308590}  &\tabincell{l}{5814 9739 3804 1767 0685  308347}   \\ \hline
    ${\bf A}_4(18,8,9)$ &\tabincell{l}{3245 1855 3767 8429 8642 \\ 4312 3978 79872}  &  \tabincell{l}{3245 1855 3767 8429 8642 \\ 4312 3978 78848}  \\ \hline
    ${\bf A}_5(18,8,9)$ &\tabincell{l}{5551 1151 2317 3587 8357 \\ 9602 1219 6595 692000}   & \tabincell{l}{5551 1151 2317 3587 8357 \\ 9602 1219 6595 688875}   \\ \hline
    ${\bf A}_7(18,8,9)$ &\tabincell{l}{4318 1145 6739 6591 8176 \\ 2301 6095 3650 9153 0833 \\ 954554}   &  \tabincell{l}{4318 1145 6739 6591 8176 \\ 2301 6095 3650 9153 0833 \\ 937747}  \\ \hline
    ${\bf A}_8(18,8,9)$ &\tabincell{l}{5846 0065 4932 3635 8379 \\ 3403 4302 9250 8651 1686 \\ 9853 92640}   &     \tabincell{l}{5846 0065 4932 3635 8379 \\ 3403 4302 9250 8651 1686 \\ 9853 59872}  \\ \hline
    ${\bf A}_9(18,8,9)$ &\tabincell{l}{3381 3919 1352 2728 4246 \\ 2028 0247 0185 2687 1078 \\ 7157 1285 4492}   &   \tabincell{l}{3381 3919 1352 2728 4246 \\ 2028 0247 0185 2687 1078 \\ 7157 1279 5443}   \\ \hline
\end{longtable}


\begin{thebibliography}{10}


\bibitem{CHWX} H. Chen, X. He, J. Weng and L. Xu, New constructions of subspace codes using subsets of MRD codes in several blocks, arXiv:1908.03804, IEEE Transactions on Information Theory, online version, 2020.

\bibitem{CP2017} A. Cossidente and F. Pavese, Subspace codes in $PG(2N-1,Q)$, Combinatorica, vol.37, pp. 1073-1095, 2017.

\bibitem{CKMP} A. Cossidente, S. Kurz, G. Marino and F. Pavese, Combining subspace codes, arXiv1911.03387, 2019.

\bibitem{CMP} A. Cossidente, G. Marino and F. Pavese, Subspace code constructions, arXiv preprint arXiv:1905.11021, 2019.

\bibitem{Cruz} J. de la Cruz, E. Gorla, H. H. L\'{o}pez and A. Ravagnani, Weight distribution of rank-metric codes, Design, Codes and Cryptography, vol. 86, pp. 1-16, 2018.

\bibitem{Delsarte} Ph. Delsarte, Bilinear forms over a finite field, with applications to coding theory, Journal of Combinatorial Theory, Series A, vol. 25, pp. 226-241, 1978.


\bibitem{Silberstein1} T. Etzion and N. Silberstein, Error-correcting codes in projective spaces via rank-metric codes and Ferrers diagram, IEEE Transactions on Information Theory, vol. 59, pp. 2909-2919, 2009.

\bibitem{Silberstein2} T. Etzion and N. Silberstein, Codes and designs related to lifted MRD codes, IEEE Transactions on Information Theory, vol. 59, pp. 1004-1017, 2013.

\bibitem{EtzionVardy} T. Etzion and A. Vardy, Error-correcting codes in projective spaces, IEEE Transactions on Information Theory, vol. 57, pp. 1165-1172, 2011.

\bibitem{Gabidulin} E. M. Gabidulin, Theory of codes with maximal rank distances, Problems of Information Transmission, vol. 21, pp. 1-21, 1985.

\bibitem{Gluesing} H. Gluesing-Luerssen and C. Troha, Construction of subspace codes through linkage, Advances in Mathematics of Communications, vol. 10, no. 3, pp. 525-540, 2016.


\bibitem{table} D. Heinlein, M. Kiermaier, S. Kurz and A. Wassermann, Tables of subspace codes, arXiv:1601.02864, 2016, 2020 data of the online table http://subspacecodes.uni-bayreuth.de.

\bibitem{Heinlein} D. Heinlein and S. Kurz, Asymptotic bounds for the sizes of constant dimension codes and improved lower bounds, arXiv:1705.03835v1, Proceeding of the International Castle Meeting on Coding Theory and Applications, Switzerland, pp. 163-191, 2017.

\bibitem{Heinlein1} D. Heinlein, Generalized linkage constructions for constant-dimension codes, arXiv.1910.11195, 2019.

\bibitem{Heinlein2} D. Heinlein and S. Kurz, Coset construction for subspace codes, IEEE Transactions on Information Theory, vol. 63, pp. 7651-7660, 2017.


\bibitem{KSK} A. Khaleghi, D. Silva and F. R. Kschischang, Subspace codes, IMA International Conference on Cryptography and Coding, Lecture Notes in Computer Sciences, vol. 5921, pp. 1-21, 2009.

\bibitem{KK} R. K\"{o}tter and F. R. Kschischang, Coding for errors and erasures in random network coding, IEEE Transactions on Information Theory, vol. 54, pp. 3579-3591, 2008.

\bibitem{Kurz} S. Kurz, Lifted codes and multilevel constructions for constant dimension codes, arXiv.2004.14241, 2020.


\bibitem{Li} F. Li, Constructions of constant dimension subspace codes by modifying linkage constructions, IEEE Transactions on Information Theory, vol. 66, pp. 2760-2764, 2020.


\bibitem{LCF} S. Liu, Y. Chang and T. Feng, Parallel multilevel constructions for constant dimension codes, arXiv.1911.01878, 2019, IEEE Transactions on Information Theory, online version, 2020.


\bibitem{RST} J. Rosenthal, N. Silberstein and A.-L. Trautmann, On the geometry of balls in the Grassmannian and list decoding of lifted Gabidulin codes, Design, Codes and Cryptography, vol. 73, pp. 394-416, 2014.


\bibitem{Silberstein3} N. Silberstein and A.-L. Trautmann, Subspace codes based on graph matching, Ferrers diagram and pending blocks, IEEE Transactions on Information Theory, vol. 61, pp. 3937-3953, 2015.


\bibitem{Silva} D. Silva, F. R. Kschischang and R. K\"{o}tter, A rank-metric approach to error control in random network coding, IEEE Transactions on Information Theory, vol. 54, pp. 3951-3967, 2008.


\bibitem{XuChen} L. Xu and H. Chen, New constant-dimension subspace codes from maximum rank-distance codes, IEEE Transactions on Information Theory, vol. 64, no. 9, pp. 6315-6319, 2018.



\end{thebibliography}
\end{document}